\newtheorem{theorem}{Theorem}
\newtheorem*{theorem*}{Theorem}
\newtheorem{lemma}{Lemma}
\newtheorem{corollary}{Corollary}
\newenvironment{assumptionp}[1]{
  
  \assumptionalt
}{\endassumptionalt}
\def\wl{\par \vspace{\baselineskip}}
\def\m{\mathcal}
\def\C{\mathbb{C}}
\def\E{\mathbb{E}}
\def\R{\mathbb{R}}
\def\to{\rightarrow}
\def\1{\mathds{1}}
\def\ml_#1{\limits_{\mathclap{\substack{#1}}}}
\def\ls{\lesssim}
\def\gs{\gtrsim}
\def\t{\widetilde}
\def\h{\widehat}
\def\s{\mathscr}
\def\co{\mathsf{co}}
\title[Distributional Counterfactual Analysis]{Distributional Counterfactual Analysis in High-Dimensional Setup}
\author{Ricardo Masini}
\address{\vspace{-0.425cm}University of California, Davis\\\textnormal{Department of Statistics}}
\email{rmasini@ucdavis.edu}
\date{September 2023}
\thanks{I would like to express my gratitude to Marcelo Medeiros,  Matias Cattaneo, and William Underwood for fruitful technical discussion and guidance}
\begin{document}

\begin{abstract}
\noindent In the context of treatment effect estimation,  this paper proposes a new methodology to recover the counterfactual distribution when there is a single (or a few) treated unit and possibly a high-dimensional number of potential controls observed in a panel structure. The methodology accommodates, \emph{albeit} does not require, the number of units to be larger than the number of time periods (high-dimensional setup). As opposed to modeling only the conditional mean, we propose to model the entire conditional quantile function (CQF) without intervention and estimate it using the pre-intervention period by a $\ell_1$-penalized regression. 
We derive non-asymptotic bounds for the estimated CQF valid uniformly over the quantiles. The bounds are explicit in terms of the number of time periods, the number of control units, the weak dependence coefficient ($\beta$-mixing), and the tail decay of the random variables. The results allow practitioners to re-construct the entire counterfactual distribution. Moreover, we bound the probability coverage of this estimated CQF, which can be used to construct valid confidence intervals for the (possibly random) treatment effect for every post-intervention period. We also propose a new hypothesis test for the sharp null of no-effect based on the $\m{L}_p$ norm of deviation of the estimated CQF to the population one. Interestingly,  the null distribution is quasi-pivotal in the sense that it only depends on the estimated CQF,  $\m{L}_p$ norm,  and the number of post-intervention periods,  but not on the size of the post-intervention period. For that reason,  critical values can then be easily simulated. We illustrate the methodology by revisiting the empirical study in \citet*{acemoglu2016value}.
\wl
\noindent
\wl
\noindent
\textbf{Keywords}: counterfactual analysis, distributional counterfactual, quantile counterfactual, high-dimensional counterfactual,  distributional treatment effect.
\end{abstract}

\maketitle

\newpage

\section{Introduction}\label{S:intro}

Treatment effect estimation has always been an active research topic in Economics. A causal statement in empirical studies usually concerns the effects of a given treatment  (policy, intervention, event) on the population of interest. Since each unit in the population is \emph{either} treated or untreated at a given time,  one of the outcomes is unobservable. Therefore,  causal statements often rely on constructing \emph{counterfactuals} based on the outcomes of the treated and untreated units (controls). Notwithstanding, definitive cause-and-effect statements are usually problematic to formulate, given economists' constraints in finding sources of exogenous variation. For a comprehensive account in the context of program evaluation, see \citet*{Abadie2018EconometricMF} and references therein. Recently,  there has been a renewed interest in causal inference using panel (longitudinal) data,  particularly when one observes only a single (or a few) treated unit and several potential controls across time. Most of this development has focused on the mean treatment effect. However, much more can be learned by taking a more comprehensive approach. For that reason, we propose a new methodology to evaluate the impact of interventions on the entire \emph{distribution} of the variable of interest. 

The setup consists of a panel structure where a single unit is treated at a given time, and only a few (possibly one) post-intervention periods are available.  Although not required, the methodology accommodates situations when the number of units is much larger than the number of time periods (high-dimensional setup). As previously mentioned, we depart from the typical approach of modeling only the conditional mean as in \citet*{cHhsCskW2012}. Instead,  we simultaneously model all the conditional (on the untreated units) quantiles of the unit of interest, allowing us to re-construct the conditional distribution without intervention. 

The benefits of this approach are threefold. First, we have a comprehensive picture of the intervention effect as we have available the (estimated) counterfactual distribution and not only its mean. For instance,  we can detect mean-preserving treatments, such as those that affect only the distribution's tails or variance. Second, although we might take the median as our point estimate,  we have readily available confidence intervals for the (possibly random) treatment effect at each post-intervention period. Third,  it allows us to construct a new hypothesis test for the null of the no-distributional effect. The test procedure is based on a quasi-pivotal test statistic whose critical values can be easily simulated.

Specifically,  we estimate the conditional quantile function using a $\ell_1$-penalized linear quantile regression to account for a potentially large number of units using the pre-intervention period. We provide non-asymptotic probabilistic bounds for the deviation of the estimated quantile function from the true one and the coverage of the estimated conditional quantile in terms of the true one. Both results are valid uniformly over any quantile in a  compact set. The bounds are expressed explicitly in terms of the pre-intervention period, the number of peers, the tail condition, the number of relevant peers, and the measure of serial dependency. This makes evident the trade-off between these quantities in the convergence rate. As a corollary, we propose confidence intervals for the treatment effect, valid under simple rate conditions. 

Furthermore,  the proposed test for the null hypothesis of the no-distribution effect is based on the $\m{L}_p$-norm of the conditional quantile function before and after the intervention without relying on permutation tests. Interestingly,  the test statistic has a quasi-pivotal distribution under the null, in the sense that it only depends on a nuisance parameter that can be estimated using the pre-intervention period and, therefore, is valid even for a single post-intervention period. This test's critical values or p-values can be efficiently obtained via a simple Monte Carlo simulation that only draws from independent standard uniform random variables.  

The literature on treatment effects is quite vast and diverse. Nonetheless, we see our methodology based on three key features: distributional effects, few treat units,  and high-dimensional setup. We borrow core ideas from the quantile treatment effect literature for the first one; see, for instance, \citet*{vCcH2005}.  However,  we do not exploit heterogeneity among individuals as the source of our distributional effect; this approach was recently taken in the Distributional synthetic controls (\cite{gunsilius2021distributional}). Instead,  we apply this idea in a panel structure assuming a \emph{single} treated unit. In other words, we only observe a single realization of a treated unit at any given time period, which motivates our second design feature. By focusing only on ``few treated units'' techniques,  our setup is closely related to  \citet*{cHhsCskW2012} and  \citet*{cCrMmM2016}, the latter also considers a high-dimensional setup, which leads to our third feature. High-dimensional methods and treatment effects are reviewed in \citet*{BCH2014} with the focus on the idea of dealing with treatment effect estimation with a high-dimensional control group for the case when the treatment is considered exogenous conditional on a very large number of unaffected (by the treatment) characteristic of the treated unit. In our case, these observable characteristics will be replaced by the control units' outcome.

In all those papers considering a high-dimensional setup, some sparsity assumption is usually evoked to impose a low-dimensional structure. Exploiting a strong factor structure in a high-dimensional environment,  \citet*{fan2021we} proposes a counterfactual of the conditional mean, which can also be seen as a sparsity assumption in terms of the eigenvalue of the design matrix. Finally,  the methodology can also be loosely connected to the Synthetic Control Methods (SCM). For a modern reference, see \citet*{abadie2021using}.  In particular,  \citet*{chen2020distributional} extends the canonical SCM and considers macro-level intervention in a single unit that consists of several individuals (or sub-unit as labeled by the author). In that sense, there are, in effect, several treated (sub-) units at each given period, and distributional effects can be measured as the heterogeneity effect across individuals in the same unit of interest. In a different setup, the same idea of leveraging cross-sectional variation (of a single unit at a given time) is used in \citet*{gunsilius2021distributional}.

The rest of the paper is organized as follows. Section \ref{S:estimator} presents the setup,  defines the estimator, and discusses the main assumptions underlying our results.   In particular,  Subsection \ref{SS:Overview} is dedicated to an overview of the main results, which are formally presented in Section \ref{S:Assumptions_Results},  together with additional technical assumptions. We describe the Hypothesis Testing procedure in Subsection \ref{SS:Hypothesis Testing}, which includes a viable implementation of a simulation procedure to obtain critical values. Section \ref{S:montecarlo} described the Monte Carlo study and brief analysis of the estimator's finite-sample performance.  The proposed methodology is illustrated in an empirical application in Section \ref{S:application}, and Section \ref{S:conclusion} concludes.  All proofs are relegated to the Supplemental Material, which also includes some additional figures from the empirical illustration.

\section{Set-up, Estimator and Main Assumptions}\label{S:estimator}

\subsection{Set-up}\label{SS:definitions}

Suppose we have $n\geq 2$ units (countries, states, municipalities, firms, etc.) indexed by $i\in\{1,\dots,n\}$. For each unit and for every time period $t\in\{1,\ldots, T\}$ for $T\geq 2$, we observe a realization of the random variable $Z_{it}$. Furthermore, we consider that there is \emph{only one} unit that suffers the intervention\footnote{The proposed methodology would still be applicable when there are a few treated units by applying it to each treated unit separately.  That is the case, for instance,  in our empirical application.} (treatment) at time $1< T_0<T$. Without loss of generality, we assume the treated unit to be the unit one ($i=1$). Let $D_{t}$ be a binary variable flagging the periods when the intervention is in place.  Using the potential outcome notation, we can express our observable variable as
\begin{equation*}
Z_{it}= D_{t} Z_{it}^{(1)} + (1-D_{t})Z_{it}^{(0)},
\end{equation*}
where, following the literature on treatment effects, $Z_{it}^{(1)}$ denotes the outcome when the unit $i$ is exposed to the intervention at time $t$ and $Z_{it}^{(0)}$ when it is not.

We are interested in cases where once the unit one suffers the intervention after $t=T_0$ it is kept treated throughout the remaining sample, i.e.,  $D_{s}=1$ implies $D_{t}=1$ for all $t \geq s$.  Hence, we write $D_t=\1\{t\geq T_0\}$.  It is important to stress that even though the intervention variable $D_{t}$ might seem deterministic,  it is, in effect, a random variable to the extent that $T_0$ is a random variable that might depend on both potential outcomes $\{Z_i^{(0)}, Z_i^{(1)}:1\leq i\leq n,  1\leq t\leq T\}$. 

Clearly, in our sample, we do not observe $Z_{1t}^{(0)}$ for $ T_0<t\leq T$; for that reason, we refer to it as the \emph{counterfactual}, i.e., what would the unit of interest has been like had there been no intervention in place. The next assumption restricts the treatment dependency on potential outcomes.  Write $Z^{(0)}_{t}:=(Z^{(0)}_{1t},\dots, Z^{(0)}_{nt})'$ and $Z^{(1)}_{t}:=(Z_{1t}^{(1)},\dots, Z_{nt}^{(1)})'$. We make the following identification assumption

\begin{assumptionp}{A.1}[Identification]\label{A:identification} Suppose that
\begin{enumerate}[(a)]
\item $D_t$ is independent of ${Z}_s^{(0)}$ for all  $t,s\in \{1,\dots, T\}$;
\item $Z_{it}^{(1)} = Z_{it}^{(0)}$ for $i\neq 1$ and $t\in \{1,\dots, T\}$.
\end{enumerate}
\end{assumptionp}

Part (a) requires the treatment (or equivalently, the timing of the treatment $T_0$) to be independent of the potential outcomes under no intervention for all units. We do \emph{not}, however, require the treatment to be independent of the potential outcome under the intervention, namely $Z_t^{(1)}$. Since we are only interested in the treatment effect on the treated, it is analog to the well-known fact in the treatment effect literature that one can consistently estimate the average effect even when $\E(Z_t|D_t)\neq\E(Z_t)$. Part (b) states that the untreated units (peers) are unaffected by the intervention in the unit of interest, or equivalently, its potential outcome is the same with or without the intervention. 

Specifically,  Assumption \ref{A:identification}(a) ensures that,  conditional on the treatment $D := (D_1,\dots, D_T)'$ (or equivalently  on $T_0$),  $Z_1,\dots, Z_{T_0}$ and  $Z_1^{(0)},\dots ,Z_{T_0}^{(0)}$ share the same distribution.
Thus, by postulating an appropriate model for the variables in the absence of intervention 
\begin{equation}\label{eq:model_M_generic}
Z_{1t}^{(0)} = \m{M}(Z^{(0)}_{2t}, \dots, Z^{(0)}_{nt}) + \epsilon_t,
\end{equation}
say,  one can, in principle, estimate it using the pre-intervention sample $\{ Z_1,\dots Z_{T_0}\}$.  Denote the estimated model by $\widehat{\m{M}}$.  Assumption \ref{A:identification}(b) then allows us to extrapolate the (estimated) model to the post-intervention period to construct the desired counterfactual for the unit of interest.  In particular,  it allow us to claim that $\m{M}(Z^{(0)}_{2t}, \dots, Z^{(0)}_{nt})$ has the same distribution as $\m{M}( Z_{2t}, \dots, Z_{nt})$. We then use the post-intervention sample $\{ Z_{T_0+1},\dots Z_{T}\}$ to compute the counter factual as  $\widehat{Z}_{1t}^{(0)}:=\widehat{\m{M}}( Z_{2t}, \dots, Z_{nt})$ for $T_0< t \leq T$.
 
\subsubsection{DGP}
 
I consider the following data generating process (DGP) for the units in the absence of the intervention
\begin{assumptionp}{A.2}[DGP]\label{A:DGP} For $n\geq 2$ and $ T\geq 2$,  the sequence of $n$-dimensional random vector $\{Z_{t}^{(0)}:=(Z_{1t}^{(0)},\dots, Z_{nt}^{(0)})':1 \leq t\leq T\}$ is $\beta$-mixing.
\end{assumptionp}

The GDP described in Assumption \ref{A:DGP} is quite flexible as it leaves the dependency among the $n$ entries of $Z_{t}^{(0)}$ and the serial dependence unspecified. It also does not assume the sequence to be identically distributed or stationary. However, we require the conditional quantiles to be identically distributed as per Assumption \ref{A:CGF} below. The latter is necessary to model the entire (conditional) distribution.

We use absolute regularity ($\beta-$mixing coefficients) as the measure of serial weak dependency of $\{Z_t^{(0)}:1\leq t\leq T\}$. There are several equivalent definitions of $\beta$-mixing coefficients. In our context, the most appropriate appears on page 3 of \citet*{Doukhan1994}, reproduced below for convenience. As opposed to being stated in terms of sigma-algebras as originally defined, the coefficients can be represented as the largest distance (in the total variation norm) between the joint distribution of the past and the future and the product of its marginals. Formally, for $0\leq m <T$, define the $\beta$-mixing coefficient by
\begin{equation}\label{E:beta_mixing_coef}
    \beta_m:=\sup\{\|\P_{(\m{Z}_1^t,\m{Z}_{t+m}^T)} - \P_{\m{Z}_1^t}\otimes \P_{\m{Z}_{t+m}^T}\|_{TV}:1\leq t\leq T\},
\end{equation}
where $\P_{\m{Z}_s^t}$ denote the joint distribution of $(Z_{s}^{(0)},\ldots, Z_t^{(0)})$ for $1\leq s\leq t\leq T$.
Note that $\beta_m$ might depend on $T$ and $n$, but $\{\beta_m\}$ is non-decreasing  sequence in $m$ for each fixed $T$ and $n$. Therefore the process is required to be $\beta$-mixing in Assumption \ref{A:DGP} in in the sense that $\beta_m\to 0$ as $m\to\infty$ for each fixed $T$ and $n$.

Numerous studies have been conducted on the mixing properties of strictly stationary linear processes, encompassing but not restricted to strictly stationary ARMA processes, ``non-causal'' linear processes, and linear random fields. Additionally, research has been developed on related processes like bilinear, ARCH, or GARCH models. For comprehensive insights into the mixing properties of these and other associated processes, refer to Chapter 2 of  \citet*{Doukhan1994}.

To exemplify, Assumption \ref{A:DGP} nests the latent (possibly dynamic) factor model structure by setting
\[Z_{it}^{(0)}=  \mu_i + \lambda_i'  F_t+V_{it},\]
where $\mu_i$ is a (possibly random) fixed effect for unit $i$,  $\lambda_i\in \R^r$; $\lambda_i$ are loading of the $(r\times 1)$ common factor $ F_t$ and $V_{it}$ are the idiosyncratic shocks.

The presence of a common unknown deterministic time-trend $\{\zeta_t:1\leq t\leq T\}$ commonly encountered in the Synthetic Control can also be accommodated. Suppose that
\[Z_{it}^{(0)}=  \zeta_t + U_{it}\]
where $\{U_{t}:=(U_{1t},\dots, U_{nt})':1 \leq t\leq T\}$ fullills Assumption \ref{A:DGP}. Then as long as the conditional quantiles are stable in the sense of Assumption \ref{A:CGF}, the methodology described below can be directly applied. The case of stochastic trends is subtle due to the fact that an infinite sum of a (underlying) mixing process is not necessarily mixing. Hence, even an integrated linear process would require a careful analysis. For conterfactual analysis of the conditional mean involving stochastic trends refer to \cite{masini_jasa}.

The presence of observable (covariates) does not present any particular issue as long as we assume that those covariates are unaffected by the treatment $D_t$. If, for instance, we postulate that
\[Z_{it}^{(0)}= \pi_i'W_{it} +  U_{it},\]
where  $\{U_{t}:1 \leq t\leq T\}$ fullills Assumption \ref{A:DGP} and we carry on the analsys with the ``observable'' $\{\widehat{U}_{t}:1 \leq t\leq T\}$ where $\widehat{U}_{t}:= Z_{it} + \widehat{\pi_i}'W_{it}$ for some estimator $\widehat{\beta}_i$ as long as we are able to control for $\sup_{it}|\widehat{U}_{it} - U_{it}|$.  Refer to Theorem 1 in \citet*{fan2021bridging} for primitive conditions to bound this quantity when $\widehat{\pi}_i$ is the ordinary least squares estimator.

\subsubsection{Distributional Treatment Effect}

Recall that unit 1 is the unit of interest (treated). To ease on the notation we set, for $1\leq t\leq T$,
\[
Y_t := Z_{1t}^{(0)}\qquad X_{t}:=(1,Z_{2t}^{(0)},\dots, Z_{nt}^{(0)})'.
\]
In most counterfactual exercises, the model $\m{M}$ appearing in \eqref{eq:model_M_generic} is the conditional mean or just the linear projection of $Y_t$ onto $X_t$ (see, for example,  \citet*{cHhsCskW2012} and \citet*{cCrMmM2016}). Since the focus is to investigate the potential distributional effects of the intervention of interest, we must model the entire counterfactual distribution. Therefore, we take $\m{M}$ as the collection of the conditional quantiles of $Y_t$ given $X_t$. We could also choose to model the conditional distribution as both fully characterize the distributional effect of the intervention. An interesting comparison between those two approaches can be found in \citet*{koenker2013distributional}. Heuristically,  we measure the distributional treatment effect by the differences it may cause to the conditional quantiles of $Y_t|{X}_t$,  which under Assumption \ref{A:identification} are attributed to treatment on the unit of interest.

Let $F(\cdot|x)$ denote the conditional distribution function of $Y_t$ given $X_t =x$ i.e.,  $F(y|x) := \P(Y_t\leq y|X_t=x)$ for $y\in\R$ and $x$ in the support of $X_t$. Note that under Assumption \ref{A:CGF} below, $F(y|x)$ does \emph{not} depend on $t$. Also, let $Q(\cdot|x)$ denotes the conditional quantile function of $Y_t$ given $X_t$ namely $Q(\tau|x):=\inf\{y\in\R:F(y|x)\geq\tau\}$ for $x$ in the support of $X$ and $\tau\in(0,1)$.   We are left to specify the functional form of the conditional quantiles.

\begin{assumptionp}{A.3}[Linear Conditional Quantile Function]\label{A:CGF} For some non-empty compact  $\m{T}\subset (0,1)$ and every  $\tau\in\m{T}$,  there is a $\theta_0(\tau)\in\R^n$ such that $Q(\tau|x)=x'\theta_0(\tau)$ for every $1\leq t\leq T$.
\end{assumptionp}

Assumption \ref{A:CGF} postulate a correctly specified linear model for the conditional quantile function for all quantiles in $\m{T}$. The consequences of missepecification in the quantile regression model are treated in \citet*{10.2307/3598810}. We could also consider a more flexible specification where we allow the functional form to vary with $\tau$, such that $Q_{Y_t|X_t}(\tau|x)=g_\tau( x,\theta_0(\tau))$ for some know class of function $g_\tau$. Or even a non-parametric specification if, in the empirical application, one expects to have $n$ is much smaller than $T_0$. However, we are interested in covering cases when $n$ might be of the same order or much larger than $T_0$ when a more parsimonious model is desirable.

Also, note that $\theta_0(\tau)$ might not be unique. This lack of identification might come from the fact that the density of $Y_t$ given $X_t$ is not bounded away from zero at the conditional $\tau$-quantile (which is ruled out by Assumption \ref{A:conditional_density} below) or a  rank deficiency of $\E \sum_{t=1}^{T_0} X_tX_t'$ which we allow here.  For each $\tau\in\m{T}$,  we denote by $\Theta^0_\tau \subseteq \R_n$ the set of all vectors that fulfill Assumption \ref{A:CGF}.  It is well known that $\Theta^0_\tau$  is the set of minimizers of the population  objective function $\theta\mapsto \E\rho_\tau(Y_t - X_t'\theta)$ for each $\tau\in
\m{T}$, 
where  $\rho_\tau(z):=z(\tau - 1\{z<0\})$ is known as the check function.  Our model can then be expressed for some compact $\m{T}\subseteq (0,1)$ as
\begin{equation}\label{eq:model}
\m{M}:= \{x \mapsto Q(\tau|x):\tau\in\m{T}\};\qquad Q(\tau|x) = x'\theta_0(\tau);\quad  \theta_0(\tau) \in \Theta_\tau^0.
\end{equation}

\subsection{Estimator}

Motivated by the points discussed in the last two paragraphs,  we define our estimator $\h{\theta}(\tau)$ as a solution to the following optimization problem
\begin{equation}\label{eq:estimator}
 \min\limits_{\theta\in\R^{n}}\frac{1}{T_0}\sum_{t=1}^{T_0}\rho_\tau(Y_t - X_t'\theta) + \lambda\sum_{j=1}^n |\theta_j|;\qquad \tau\in\m{T},
\end{equation}
where $\lambda>0$ is a common (to all quantiles) penalty (regularizer) parameter to be specified later.  

Similar to $\theta_0(\tau)$,  $\h{\theta}(\tau)$ is not necessarily unique for each $\tau\in\m{T}$ so we let $\h{\Theta}_\tau\subset\R^n$ denote the set of all solutions to the optimization problem \eqref{eq:estimator}.  In principle, we could set $\lambda=0$ so that the second component of \eqref{eq:estimator} would vanish if we treat $n$ as fixed or growing at a slower rate as the sample size increases.  However,  it will be necessary to ensure that $\Theta_\tau^0$ and $ \h{\Theta}_\tau$ are close in the appropriate sense for cases when $n>T_0$.  In high-dimensional setup $(n>T_0)$,  the $\ell_1$-penalization component will induce a sparsity structure on $\h{\theta}(\tau)$ which will capture the assumed sparsity structure on $\theta_0(\tau)$.  We resume this discussion in Section \ref{S:Assumptions_Results} after Assumption \ref{A:CGF_smooth}.  

It is easy to verify that using the fact that $\rho_\tau(z) +  \rho_\tau(-z) = |z|$ for $z\in\R$, we can re-write \eqref{eq:estimator} as an unconstrained optimization problem in the $\lambda$-augmented variables. Precisely, \eqref{eq:estimator} is equivalent to
\begin{equation}\label{eq:estimator_unconstrained}
 \min\limits_{\theta\in\R^{n}}\sum_{t=1}^{T_0+2n}\rho_\tau(Y_t^\lambda - {X_t^\lambda}'\theta) ;\qquad \tau\in\m{T}, 
\end{equation}
where $Y_\lambda:= (Y_1^\lambda, \dots, Y_{T_0+2n}^\lambda)' = (Y_1, \dots, Y_{T_0}, 0,\dots ,0)'$ and $X_\lambda:=(X_t^\lambda, \dots, X_{T_0+2n}^\lambda)   = (X_1,\dots, X_{T_0}, \lambda T I_n ,-\lambda T I_n)$.

For each $x$, the estimated  quantile function $\tau\mapsto \widetilde{Q}(\tau|x):= x' \widehat{\theta}(\tau)$ might not be non-decreasing.  Under  Assumption \ref{A:CGF}, this is a finite sample issue but will prevent us from constructing meaningful finite sample confidence intervals whenever the quantiles ``cross''.  To circumvent  the problem, we use the  rearranged conditional quantile estimator proposed in \citet*{chernozhukov2010quantile}, which,  in our setup,  can be expressed as
\begin{equation}\label{eq:CQF_estimator}
\h{Q}(\tau|x):= \inf \left\{y\in\R: \int_0^1 \1\{x' \widehat{\theta}(\tau)\leq y\}d\tau\geq \tau \right\};\quad \tau\in\m{T}, \quad \h{\theta}(\tau) \in \h{\Theta}_\tau.
\end{equation}

Therefore, using only the pre-intervention sample $\{(Y_t,X_t):1\leq t\leq T_0\}$,  we estimate model \eqref{eq:model} using
\begin{equation}\label{eq:model_hat}
\h{\m{M}}:= \{x \mapsto \h{Q}(\tau|x):\tau\in\m{T}\}.
\end{equation}
Recall we do not observe $Y_{t}^{(0)}$ for $ T_0<t\leq T$ in our sample.  However, we can estimate all its conditional quantiles by
\begin{equation*}
 \{ \h{Q}(\tau|X_t):T_0< t\leq T,\tau\in\m{T}\}.
\end{equation*}

\subsection{Overview of the main Results}\label{SS:Overview}

Section \ref{S:Assumptions_Results} state conditions under which non-asymptotic probabilistic bounds hold for some ``distances''  between $\m{M}$ to $\h{\m{M}}$.  In this subsection, we explore the consequence of these results (refer to Section \ref{S:Results} for a formal statement) as the number of pre-intervention periods diverges ($T_0\to\infty$) while the number of post-intervention periods ($T_1:= T-T_0)$ is kept fixed. We call it a \emph{Partial Asymptotics} analysis\footnote{As opposed to the \emph{Full Asymptotics} analysis,  when the entire sample size diverges ($T\to\infty$) and the ratio $T_0/T$ is kept fixed as  $T\to\infty$}.  This approach is tailored to accommodate empirical situations where the number of pre-intervention periods $T_0$ is much larger than $T_1$ (it is not uncommon in empirical applications that we have a single or a couple of post-intervention periods), which justifies the sampling error from the estimation of $\theta_0(\tau)$ by $\h{\theta}(\tau)$ to be of smaller order when compared to the inference using the post-intervention periods.

Our first result concerns the uniform approximation of the conditional quantile function in probability.  For each post-intervention period ($T_0<t\leq T$), 
\[\h{Q}(\tau|X_t)- {Q}(\tau|X_t)=o_\P(1),\]
uniformly in $\tau\in\m{T}$, where $\m{T}$ is any compact subset of $(0,1)$. The next two results are for inference procedures. The second result  states that, for any $\alpha<0.5$,  the interval
\[\s{C}_{t}(\alpha):=[Y_t -\widehat{Q}(1-\alpha/2|X_t), Y_t -\widehat{Q}(\alpha/2|X_t)]\] 
 is an asymptotically (as $T_0\to\infty$) $1-\alpha$ confidence prediction interval for the treatment effect $\delta_t:=Y_t^{(1)}-Y_t^{(0)}$ for each post intervention period.  
 
Finally, the last result proposes a statistical test for the null hypothesis that the intervention in question had no effect, i.e., 
\begin{equation}\label{eq:null_hypothesis}
 \m{H}_0:Y_t^{(0)}=Y_t^{(1)} \text{\quad in distribution for $\;T_0<t\leq T$}
\end{equation}

To that aim, we use for test statistic,   $\phi_{\m{T},p}:= \tfrac{1}{T_1}\sum_{t=T_0+1}^T\|\1\{Y_t\leq \h{Q}(\tau|X_t)\} -\tau\|_{\m{T},p}$,  where $\|\cdot\|_{\m{T},p}$ in the $\m{L}_p$ norm with the respect to the uniform law on $\m{T}$ for any $p\in[1,\infty]$,  and show that a test that rejects when
\[\{\phi_{\m{T},p}>c_p(\alpha)\}\]
has the correct asymptotic (as $T_0\to\infty$) size $\alpha$,  where $c_p(\alpha)$ is a critical value from a pivotal distribution that only depends on $T_1$, $\m{T}$ and $p$.  Refer to Section \ref{SS:Hypothesis Testing} for further discussion on the test procedure.

The critical value can be obtained by simulation using the empirical quantile of $\phi_{\m{T},p}^0$ with $\1\{Y_t\leq \h{Q}(\tau_k|X_t)\}$ replaced by $\1\{ U\leq \tau_k\}$,  where $U$ are independent uniform (0,1) random variables.  Figure \ref{fig:norm_T1_influence} shown the estimated density for selects $\m{L}_p$ norms and interval $\m{T}\subseteq (0,1)$.

\begin{figure}[h]
\captionsetup[subfigure]{justification=centering}
\centering
\begin{subfigure}{0.45\textwidth}
    \centering
     \includegraphics[width=\textwidth]{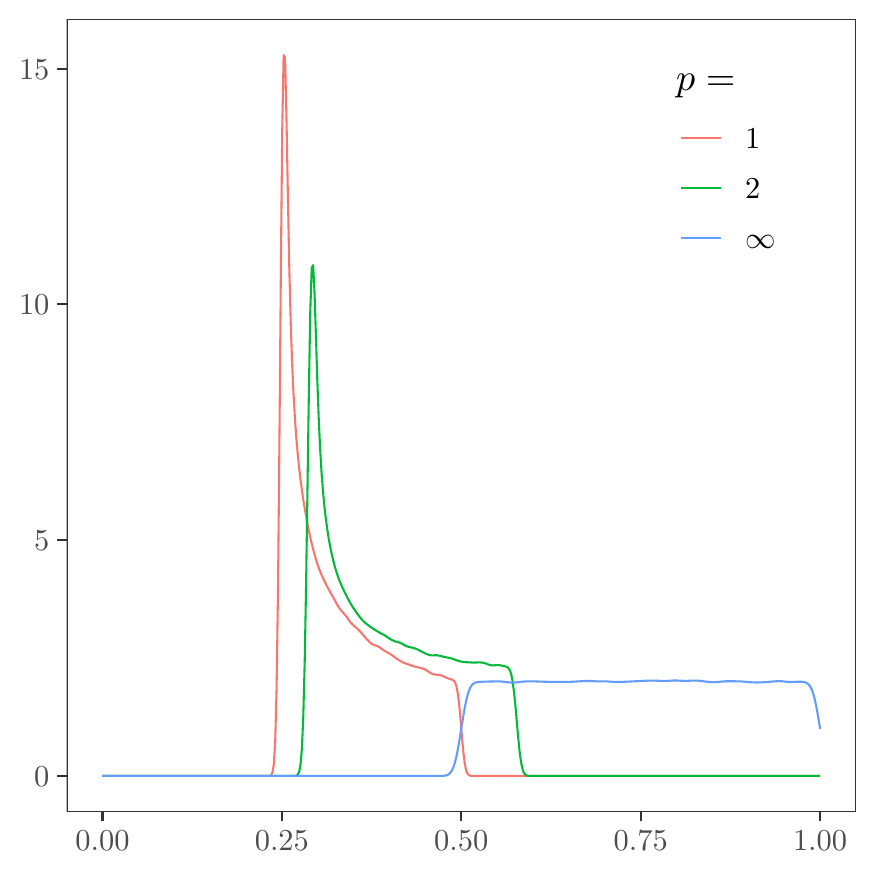}
    \caption{$\m{L}_p$-Norm Effect for selected $p$ values with $\m{T} =(0,1)$}
\end{subfigure}
\begin{subfigure}{0.45\textwidth}
    \centering
	\includegraphics[width=\textwidth]{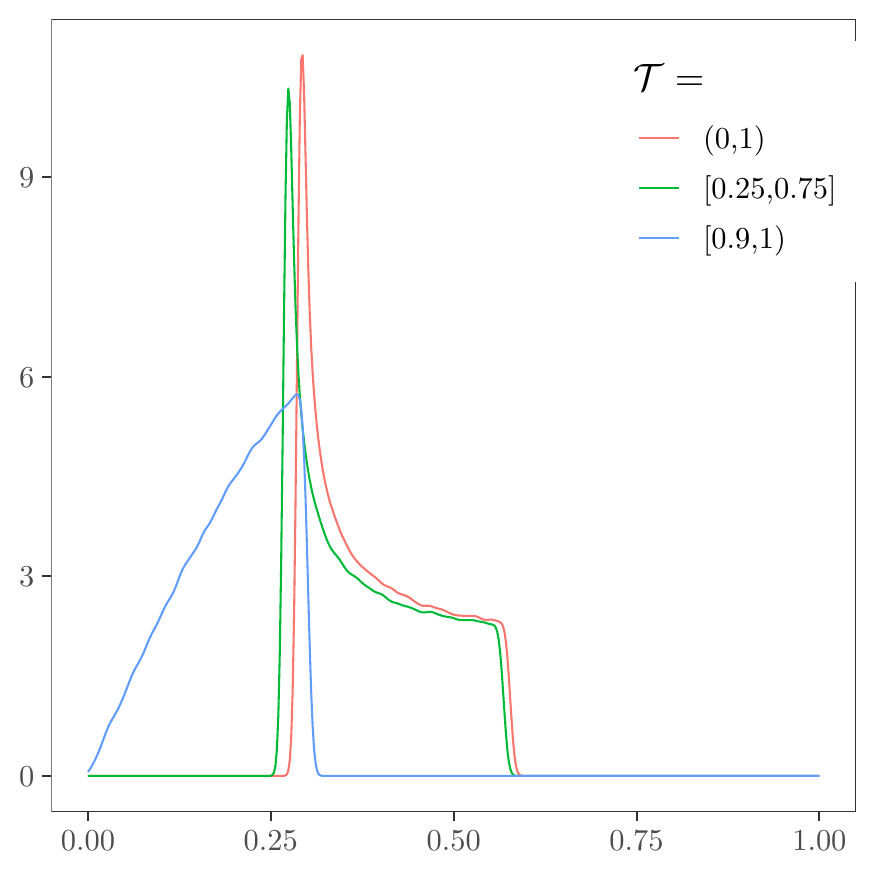}
	  \caption{Range of the interval ($\m{T}$) effect with $\m{L}_2$-Norm}
\end{subfigure}
\caption{Test Statistics density under the null hypothesis.  Simulated with 1 million draws of a uniform random variable an applying transformation  \eqref{eq:g_p}.}
\label{fig:norm_T1_influence}
\end{figure}

\section{Assumptions and Results}\label{S:Assumptions_Results}

\subsection{Technical Assumptions}

To state the results in a unified manner covering a broad range of tail distributions (polynomial and exponential), we control the tail by its Orlicz norm.  Recall that the $\psi$-Orlicz norm of a random variable $X$, which we denote by $\|X\|_\psi$,  is define as $\inf\{c>0:\E\psi(X/c)\leq 1\}$ for a convex, even, lower semi-continuous $\psi:\R\to [0,\infty]$,  such that $\psi(0)=0$. In particular, we are concerned with polynomial tails, exponential tails, and bounded variables. 

Therefore, we define $\Psi$ the class of functions  containing:
\begin{enumerate}
    \item[(i)] $x\mapsto |x|^q$ for $q\in[4,\infty)$;
    \item[(ii)] $x\mapsto  0$ if $|x|\leq L$ for $L>0$, otherwise $x\mapsto \infty$; 
    \item[(iii)] $x\mapsto \co (x\mapsto \exp(x^\mathfrak{p})-1)$ for $\mathfrak{p}\in(0,\infty)$ where $\co(f)$ denote the convex hull of $f$. 
\end{enumerate}
Cases (i) and (ii) together deal with $\m{L}_{q,\P}$-random variables whereas case $(iii)$ includes sub-Gaussian ($\mathfrak{p}=2$), sub-Exponential ($\mathfrak{p}=1$) and sub-Weibull ($\mathfrak{p}\in(0,1)$) random variables as special instances.

\begin{assumptionp}{A.4}[Tails and Variance Lower Bound]\label{A:sampling_tail} For all $1\leq i\leq n$,  $1\leq t\leq T$,
\begin{enumerate}[(a)]
\item $\|Z_{it}^{(0)}\|_{\psi}\leq B<\infty$ for some $\psi\in\Psi$ and constant $B$ possibly depending on $n$, $T$ and $\psi$;
\item $\E[ {Z_{it}^{(0)}}]^2\geq \underline{B}>0$ for some universal constant  $\underline{B}$ not depending on $T$ or $n$.
\end{enumerate}
\end{assumptionp}

Assumption \ref{A:sampling_tail}(b) bounds the second moment of $Z_{it}^{(0)}$ away from zero.  That does \emph{not} rule out multicollinearity among the units, but it is important to avoid each unit behaving like a constant as the sample size diverges. As opposed to $B$ in Assumption \ref{A:sampling_tail}(a), $\underline{B}$ is assumed to be a universal constant. While it is possible to allow $\underline{B}$ to slowly converge to zero (as a function of $n$ and $T$), that would add yet another rate term in the final oracle with no extra generality.

Recall we assume the conditional quantile function to be linear in Assumption \ref{A:CGF}.  The next assumption further imposes structure on conditional quantile function by requiring some level of sparsity (for cases when $n>T_0$) and smoothness of the map $\tau\mapsto \theta_0(\tau)$.

\begin{assumptionp}{A.5}[Conditional Quantile Function]\label{A:CGF_smooth} Assume
\begin{enumerate}[(a)]
\item $\sup_{\tau\in\m{T}}\|\theta_0(\tau)\|_0\leq s_0$ for some $1\leq s_0\leq n$
\item $\tau\mapsto \theta_0(\tau)$ is Lipschitz with constant no greater than $(T\lor n)^c$ for some constant $c>0$.
\end{enumerate}
\end{assumptionp}

Specifically,  the assumption above requires that at most $s_0
\leq n$ elements of $\theta_0(\tau)$ differ from zero in the conditional quantile function. Notice that these non-zeros elements might differ from quantile to quantile.  The assumption is similar to the one considered in \citet*{HeWangHong2013} in the context of nonlinear quantile regression models.  In a factor model structure with a sparse loading matrix, one expects the non-zeros elements to be the same across all quantiles. We might have a dense model in that $s_0=n$ provided that $n$ grows slower than $T$ (low-dimensional setup). Otherwise, $s_0$ must be strictly smaller than $n$. The smoothness requirement is not strict as we allow the Lipschitz constant to grow polynomial with $T\lor n$. For instance,  it is implied whenever $\sup_j|\theta_j(\tau)-\theta_j(\tau')|\leq C|\tau-\tau'|$ for  all $\tau,\tau'\in\m{T}$.
 
\begin{assumptionp}{A.6}[Conditional Density]\label{A:conditional_density}
For every $x$ in the support of $X_t$ for $1\leq t\leq T$
\begin{enumerate}[(a)]
\item $y\mapsto F(y|x)$ admits a continuously differentiable density,  which we denote by $f(y|x)$, and $f(y|x)$ and its first partial derivative $\dot{f}(y|x)$ with respect to $y$ are upper bounded by some constant $\bar{f}<\infty$.
\item $\inf_{\tau\in\m{T}}f(Q(\tau|x)|x)\geq \underline{f}>0$ some constant $ \underline{f}$.
\end{enumerate}
\end{assumptionp}

Assumption \ref{A:conditional_density} above is a regularity condition on the conditional distribution of $Y_t|X_t$. Parts (a) and (b) are used to ensure the (restricted) strong convexity of the population objective function in the neighborhood of the quantiles of interest. Similar conditions are needed to ensure the parameter identification even in the fixed $n$ case.
 
We need to establish an additional piece of notation to state our next assumption clearly and concisely.  For $x\in\R^n$ and index set  $A\subseteq \{1,\dots,n \}$ we write $x_A$ for the vector $x$ with the entries index by $A^c:=\{1,\dots,n \}\setminus A$ set to zero.  For each $\tau\in\m{T}$,  let $\m{S}_\tau\subseteq \{1,\dots,n \}$ denote the support of $\theta_0(\tau)$, i.e., $\m{S}_\tau = \{j:\theta_{0,j}(\tau)\neq 0\}$, and define the family of  cones $\m{C}:=\{\m{C}_\tau:\tau\in\m{T}\}$ in $\R^n$ as follows
\[
 \m{C}_\tau:=\{x\in\R^n:\| x_{\m{S}^c}\|_1\leq 3\| x_{\m{S}}\|_1\}.
\]
 
 \begin{assumptionp}{A.7}[Compatibility Condition]\label{A:comp_cond}
There is a positive constant $\varphi$ possibly depending on $n$ and $T$ such that 
\[
\inf\limits_{\tau\in\m{T}}\inf\limits_ {x\in \m{C}_{\tau},x\neq 0} \frac{x'\Sigma x |\m{S}_\tau|}{\|x_{\m{S}_\tau}\|_1^2}\geq \varphi^2,
\]
where $\Sigma:=\E(\tfrac{1}{T_0}\sum_{t=1}^{T_0} X_tX_t')$.
 \end{assumptionp}
 
The compatibility condition is standard in LASSO literature; refer to Chapter 6 of \citet*{buhlmann2011statistics} for further details and applications. It is also closely related to the well-known restricted eigenvalue condition on the $\Sigma$. As a matter of fact, the latter is implied by the former and is sufficient for a restricted identification of $\theta_0(\tau)$.  Furthermore,  Assumption \ref{A:conditional_density} combined with Assumption \ref{A:comp_cond}(a) ensure that $\theta_0(\tau)$ as a maximizer of $\theta\mapsto J(\tau):=\E(\rho_\tau(Y_t-Z_t'\theta)$ is unique restricted to the cone $\m{C}_\tau$, uniformly in $\tau\in\m{T}$.  Clearly, a sufficient condition for Assumption \ref{A:comp_cond} would be the minimum eigenvalue of $\Sigma$ to be bounded away for zero by an absolute constant.

\subsection{Main Results}\label{S:Results}

In this subsection, we present the two main results of the paper, including two useful corollaries. Recall the serial dependence measure in terms of $\beta$-mixing coefficients defined in \eqref{E:beta_mixing_coef}. With a slight abuse of notation, define its continuous version by the function $\beta:\R^+\to[1/4,0]$ given by $\beta(t):= \beta_{\lfloor t\rfloor}$ for $t\geq 0$. Since $\beta(t)$ is a non-increasing function, it admits a left-inverse, which we denote by $\beta^{-1}$. We use the function $\beta^{-1}$ to present the rate dependency of the mixing coefficients of the series $\{Z_t^{(0)}\}$. Similarly the tails of $\{Z_t^{(0)}\}$ are controlled via the $\psi$-Orlicz norm for $\psi\in\Psi$ (refer to Assumption \ref{A:sampling_tail}) and $\psi^{-1}$ denotes the (left) inverse of $\psi$ restricted to $\R_+$.

\begin{theorem}\label{thm:main} Under Assumptions \ref{A:DGP}-\ref{A:comp_cond},   suppose further that
\[\lambda = C\vartheta^2 B\underline{f}\varphi^2 r, \]
 where $C>0$ is a universal constant and
 \[
 r := r(n,T_0,\beta,\psi):=\psi^{-1}(nT_0)\sqrt{\frac{(\beta^{-1}([\vartheta T_0)]^{-1})\lor 1)\log n}{T_0}};
 \]
and $ [B\psi^{-1}(nT_0)]^2s_0r\leq \frac{\underline{f}^2\varphi^2}{ \overline{f}}$.  Then, for $\vartheta\geq 1$,
\begin{equation}\label{eq:main_thm_res_1}
    \P\left(\sup_{\tau\in\m{T}}\|\widehat{\theta}(\tau) - \theta_0(\tau)\|_1\gs\lambda s_0\right)  \ls n^{1-\vartheta^2} + \frac{1}{\vartheta}
\end{equation}
and, for any $b>0$,
\begin{equation}\label{eq:main_thm_res_2}
    \P\left(\sup_{\tau\in\m{T}}| \widehat{Q}(\tau|X_t)-Q(\tau|X_t)|\gs bB \lambda s_0 \psi^{-1}(n) \right)\ls n^{1-\vartheta^2} + \frac{1}{\vartheta}+ \frac{1}{b}.
\end{equation}
Also,
\begin{equation}\label{eq:main_thm_res_3}
    \sup_{\tau\in\m{T}}|\P(Y_t^{(0)}\leq \widehat{Q}(\tau|X_t)) -\tau| \ls\inf_{\vartheta \geq 1, b>0} \left[  n^{1-\vartheta^2} + \frac{1}{\vartheta}+ \frac{1}{b} + b B \lambda s_0 \psi^{-1}(n)\right].
\end{equation}
\end{theorem}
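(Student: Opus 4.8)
The plan is to establish the uniform $\ell_1$ oracle inequality \eqref{eq:main_thm_res_1} first and to read off \eqref{eq:main_thm_res_2} and \eqref{eq:main_thm_res_3} from it. For \eqref{eq:main_thm_res_1} I would adapt the now-standard analysis of $\ell_1$-penalized quantile regression to the present $\beta$-mixing, general-tail environment. Fix $\tau\in\m{T}$, set $\delta_\tau:=\h{\theta}(\tau)-\theta_0(\tau)$, and let $\h{\E}[\,\cdot\,]:=\tfrac1{T_0}\sum_{t=1}^{T_0}[\,\cdot\,]$ denote the pre-intervention average. Optimality of $\h{\theta}(\tau)$ in \eqref{eq:estimator} together with convexity of the check function yields the \emph{basic inequality}
\[
\E\big[\rho_\tau(Y_t-X_t'\h{\theta}(\tau))-\rho_\tau(Y_t-X_t'\theta_0(\tau))\big]\;\le\;\lambda\big(\|\theta_0(\tau)\|_1-\|\h{\theta}(\tau)\|_1\big)-\nu_n(\tau),
\]
with $\nu_n(\tau):=(\h{\E}-\E)\big[\rho_\tau(Y_t-X_t'\h{\theta}(\tau))-\rho_\tau(Y_t-X_t'\theta_0(\tau))\big]$. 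On the ``score'' event on which $\lambda$ exceeds twice the uniform-in-$\tau$ supremum of $\big\|\h{\E}[X_t(\tau-\1\{Y_t<X_t'\theta_0(\tau)\})]\big\|_\infty$ (a \emph{centred} quantity, since $\E[X_t(\tau-\1\{Y_t<Q(\tau|X_t)\})]=0$ by Assumptions \ref{A:CGF} and \ref{A:conditional_density}), the usual splitting of the penalty over $\m{S}_\tau$ and its complement forces $\delta_\tau\in\m{C}_\tau$ and bounds the right side by $\tfrac32\lambda\|(\delta_\tau)_{\m{S}_\tau}\|_1$ plus the portion of $\nu_n(\tau)$ beyond its linearization (a Knight-remainder empirical process).

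The second ingredient is the curvature (restricted strong convexity) of the population risk: by Knight's identity,
\[
\E\big[\rho_\tau(Y_t-X_t'(\theta_0(\tau)+\delta))-\rho_\tau(Y_t-X_t'\theta_0(\tau))\big]=\E\!\int_0^{X_t'\delta}\!\big(F(Q(\tau|X_t)+s\,|\,X_t)-\tau\big)\,ds,
\]
which, expanding $F(\cdot|X_t)$ around $Q(\tau|X_t)$ via Assumption \ref{A:conditional_density}, is bounded below by $\tfrac{\underline{f}}2\,\delta'\Sigma\delta$ minus a cubic remainder of order $\overline{f}\,\big(\sup_{t\le T_0}\|X_t\|_\infty\big)\,\|\delta\|_1\,\delta'\Sigma\delta$. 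Since $\sup_{t\le T_0}\|X_t\|_\infty\ls B\psi^{-1}(nT_0)$ on a high-probability event (a union bound over the $nT_0$ entries plus the definition of $\|\cdot\|_\psi$), the side condition $[B\psi^{-1}(nT_0)]^2 s_0 r\le\underline{f}^2\varphi^2/\overline{f}$ is precisely what makes this remainder negligible relative to $\tfrac{\underline{f}}2\,\delta'\Sigma\delta$ once a convexity/localization argument has confined $\delta_\tau$ to an $\ell_1$-ball of radius $\asymp\lambda s_0/\varphi^2$. Combining the two ingredients, using the compatibility condition (Assumption \ref{A:comp_cond}) as $\|(\delta_\tau)_{\m{S}_\tau}\|_1\le\sqrt{|\m{S}_\tau|\,\delta_\tau'\Sigma\delta_\tau}/\varphi$, and solving the resulting quadratic inequality in $\sqrt{\delta_\tau'\Sigma\delta_\tau}$ then yields $\sup_{\tau\in\m{T}}\|\delta_\tau\|_1\ls\lambda s_0$ on the intersection of the score event with the event controlling the Knight-remainder process.

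The main obstacle is the remaining probabilistic work: controlling, \emph{uniformly over $\tau\in\m{T}$} and under $\beta$-mixing with tails known only through $\psi\in\Psi$, both the score event and the Knight-remainder empirical process (the latter restricted, via the localization above, to the cones $\m{C}_\tau$ intersected with $\ell_1$-balls of the target radius). For each I would: (a) decouple the dependence by Berbee's coupling lemma (see \citet*{Doukhan1994}), partitioning $\{1,\dots,T_0\}$ into blocks of length $\asymp\beta^{-1}([\vartheta T_0]^{-1})$, so the coupling error is $\ls1/\vartheta$ and the effective number of independent blocks is $\asymp T_0/\beta^{-1}([\vartheta T_0]^{-1})$; (b) on the independent blocks, truncate the possibly heavy-tailed summands at level $\asymp B\psi^{-1}(nT_0)$ and apply a Fuk--Nagaev/Bernstein-type inequality, the truncated tail contributing the polynomial term $1/\vartheta$ and the Bernstein part, after a union bound over the $n$ coordinates, the term $n^{1-\vartheta^2}$; (c) reduce the continuum in $\tau$ to a grid of polynomial cardinality (entering only under a logarithm), bounding oscillations between grid points through the $L_1(\P)$-continuity of the score in $\tau$, which follows from the Lipschitz constant $(T\lor n)^c$ of Assumption \ref{A:CGF_smooth}(b) and the density bound $\overline{f}$. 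Balancing these terms produces the stated $\lambda=C\vartheta^2 B\underline{f}\varphi^2 r$ and a total failure probability $\ls n^{1-\vartheta^2}+1/\vartheta$, which with the two ingredients above is \eqref{eq:main_thm_res_1}.

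Finally, \eqref{eq:main_thm_res_2} follows because the rearrangement operator of \citet*{chernozhukov2010quantile} contracts the sup-norm distance to any non-decreasing function: as $\tau\mapsto Q(\tau|X_t)$ is non-decreasing, on the event of \eqref{eq:main_thm_res_1} we get $\sup_{\tau\in\m{T}}|\h{Q}(\tau|X_t)-Q(\tau|X_t)|\le\sup_{\tau\in\m{T}}|\t{Q}(\tau|X_t)-Q(\tau|X_t)|=\sup_{\tau\in\m{T}}|X_t'\delta_\tau|\le\|X_t\|_\infty\sup_{\tau\in\m{T}}\|\delta_\tau\|_1$, while $\|X_t\|_\infty\ls bB\psi^{-1}(n)$ with probability $\ge1-c/b$ (Markov applied to $\psi(|Z_{it}|/B)$ and a union bound over the $n$ entries of the single vector $X_t$); multiplying the bounds and adding failure probabilities gives \eqref{eq:main_thm_res_2}. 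For \eqref{eq:main_thm_res_3}, write
\[
\P\big(Y_t^{(0)}\le\h{Q}(\tau|X_t)\big)-\tau=\E\big[\1\{Y_t^{(0)}\le\h{Q}(\tau|X_t)\}-\1\{Y_t^{(0)}\le Q(\tau|X_t)\}\big],
\]
using $\E[\1\{Y_t^{(0)}\le Q(\tau|X_t)\}\mid X_t]=F(Q(\tau|X_t)|X_t)=\tau$ by continuity of $F(\cdot|x)$; on the event of \eqref{eq:main_thm_res_2}, with threshold $\epsilon\asymp bB\lambda s_0\psi^{-1}(n)$, the integrand is bounded by $\1\{|Y_t^{(0)}-Q(\tau|X_t)|\le\epsilon\}$, whose unconditional expectation is $\le2\overline{f}\epsilon$ by Assumption \ref{A:conditional_density}(a), and off that event it is at most $1$; taking the infimum over $\vartheta\ge1$ and $b>0$ yields \eqref{eq:main_thm_res_3}. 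The uniformity in $\tau$ is inherited throughout, all the events above being stated as suprema over $\m{T}$.
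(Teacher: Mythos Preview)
Your derivations of \eqref{eq:main_thm_res_2} and \eqref{eq:main_thm_res_3} from \eqref{eq:main_thm_res_1} match the paper's almost line for line (rearrangement contraction, H\"older, then a Markov/maximal bound on $\|X_t\|_\infty$; for \eqref{eq:main_thm_res_3} the paper packages your splitting argument as a small lemma: $|\P(X+Y\le\eta)-\P(X\le\eta)|\le\P(|X-\eta|\le\delta)+\P(|Y|>\delta)$).

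Where you genuinely differ is in the empirical-process step behind \eqref{eq:main_thm_res_1}. You propose the Belloni--Chernozhukov style decomposition into a \emph{score} event plus a \emph{Knight-remainder} process, and then handle the continuum $\tau\in\m{T}$ by discretization, invoking Assumption \ref{A:CGF_smooth}(b) to control oscillations between grid points. The paper instead works with the \emph{loss-difference} empirical process directly,
\[
\m{E}_{M,\m{T}}:=\sup_{\tau\in\m{T}}\ \sup_{\|\theta-\theta_0(\tau)\|_1\le M}\big|(\P_{T_0}-\P)\big[\rho_\tau(Y_t-X_t'\theta)-\rho_\tau(Y_t-X_t'\theta_0(\tau))\big]\big|,
\]
and, after Berbee blocking, applies Rademacher symmetrization and the Lipschitz bound $|\rho_\tau(u)-\rho_\tau(v)|\le 2|u-v|$ to strip out the dependence on $\tau$ and $\theta$ \emph{simultaneously}: the symmetrized process is dominated by $M\cdot\|\sum_k\varepsilon_k X^{(k)}\|_\infty$, a quantity that no longer involves $\tau$ at all. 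An MGF calculation then yields the sub-Gaussian tail without any discretization or separate remainder analysis. What the paper's route buys is that uniformity in $\tau$ comes for free and Assumption \ref{A:CGF_smooth}(b) is not needed at this step; what your route buys is a more modular argument in which the score event and cone restriction are made explicit up front. Both reach the same $\lambda\asymp\vartheta^2 B\underline f\varphi^2 r$ and failure probability $n^{1-\vartheta^2}+1/\vartheta$, and the deterministic half of your argument (localization via a convex combination, Taylor expansion of the excess risk, compatibility) is essentially the paper's oracle lemma.
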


The first result in Theorem \ref{thm:main} is related to \citet*{belloni2011} and \citet*{zheng2015}  with essential contrasts. First, note that \eqref{eq:main_thm_res_1} is an $\ell_1$-convergence rate as opposed to $\ell_2$ in the aforementioned papers. This difference is vital to be able to derive result \eqref{eq:main_thm_res_2}, which does not follow from the second inequality in Theorem 2 in \citet*{belloni2011}. Second, Theorem 1 accommodates non-independent data via the $\beta$-mixing coefficients with an arbitrary decay and a wide range of distribution tails. Notice that for geometric $\beta$-mixing decay (for instance, most ARMA processes), we pay the price of an extra $\log T_0$ term in the numerator as compared to the independent counterpart. Also, we recover the $\ell_1$-sharp rate $s_0\sqrt{\log n/T_0}$ uniformly in $\tau\in\m{T}$ under independence ($\beta_m =0$ for $m\geq 1$) and absolutely bounded random variables (c.f. Corollary 6.2 in \citet*{buhlmann2011statistics}).

Furthermore, note that our estimator differs from \citet*{belloni2011}, who, in the independence setting, consider a self-normalized process by using a penalty specific for each variable equal to its sample standard deviation. From the practical point of view, this can always be achieved by normalizing the variable before running the LASSO estimator. However, this modification considerably complicates the theoretical analysis of the estimator. That is because, with a self-normalized process under dependency, we need to guarantee the Berbee's Coupling construction (refer to Proposition 2 in \citet*{DMR_1995}) conditional on the $X_{t}$. This explains why in \citet*{belloni2011}, the tail $X_t$ influence appears on the condition that the sample normalization converges uniformly to the population normalization. Refer to the definition of $\gamma$ in Condition D.3 in \citet*{belloni2011}. In most cases, there will be a dependence on the number of covariates implicit in $\gamma$. Our estimator also differs from \citet*{zheng2015}, who consider a $\ell_1$-adaptive penalization in ultra high-dimensional setup. Under independence and compact support assumptions, the authors improve the  $\ell_2$-rate appearing in \citet*{belloni2011}  from $\sqrt{s_0\log n/T_0}$ to $\sqrt{s_0\log s_0/T_0}$ which allow them to accommodate ultra-high dimensional, i.e., when $\log n$ increases polynomial in $T$.

For our purposes, \eqref{eq:main_thm_res_1} is an intermediate step for the final results appearing in Theorem \ref{thm:main}, which gives non-asymptotic bounds for the estimation of the conditional quantile function of the counterfactual uniformly over the quantiles in $\m{T}$. The results are presented both in terms of its ``closeness'' to the true conditional quantile function, inequality \eqref{eq:main_thm_res_2}; and the probability of deviation from it when we use the estimated quantile function instead of the true one, inequality  \eqref{eq:main_thm_res_3}. The term $n^{1-\vartheta^2}$ captures the variation in the ``empirical process'' component, which is controlled by $\lambda$, whereas the terms of the form $\psi^{-1}(\cdot)$ represents the dependency on the tail of $X_t$. For instance,  if $X_t$ has $q\geq 4$ finite moments then $\psi^{-1}(x) = x^{1/q}$, whereas in  the sub-Gaussissian case, we may take $\psi^{-1}(x)$ to be of order $\sqrt{\log (x)}$.

The main consequence of Theorem \ref{thm:main} is that we can estimate and draw inference on the intervention effect. For instance, from \eqref{eq:main_thm_res_2} we conclude under mild conditions that $\widehat{Q}(\tau|X_t)-Q(\tau|X_t) = O_\P\left(s_0\psi^{-1}(n)r\right)$ uniformly in $\tau\in\m{T}$.  In that case,  $\widehat{Q}(\tau|X_t)$ is a consistent  estimator of the $\tau$ conditional quantile of the the counterfactual uniformly in $\tau\in\m{T}$,  provided that $s\psi^{-1}(n)r=o(1)$ as $T_0\to\infty$.  We could for instance  take the median $\{\widehat{Q}(0.5|X_t)\}_{t> T_0}$ as the point estimate for the counterfactual $\{Y_t^{(0)}\}_{t>T_0}$. Furthermore, since we have a model for each conditional quantile, we can construct confidence bands of asymptotically correct size for the unobservable $Y_t^{(0)}$ after the intervention.  As a concrete example, take the sub-Gaussian case ($\psi(x) = \exp(x^2)-1$) with geometric mixing $\beta_m\ls \exp(-cm)$, the last result becomes
\[
\sup_{\tau\in\m{T}}|\P(Y_t^{(0)}\leq \widehat{Q}(\tau|X_t)) -\tau| \ = O\left(s_0\sqrt{\frac{\log (n\lor T_0)\log T_0)(\log n)^2}{T_0}} \right).
\]

We formalize the discussion in the last paragraph in the following Corollary.

\begin{corollary}\label{C:CI} Under the same conditions of Theorem \ref{thm:main},  if further  $B$, $\overline{f}$, $\underline{f}$ and $\varphi$ are fixed constant (not depending on $n$ and $T$) 
 then for each $T_0<t\leq T$:
\[
\sup_{\tau\in\m{T}}| \widehat{Q}(\tau|X_t)-Q(\tau|X_t)|=O_\P(s_0\psi^{-1}(n)r).
\]
Additionally, under  Assumption \ref{A:identification} and $s_0\psi^{-1}(n)r = o_\P(1)$,  we have that
\[\mathscr{C}_{t}(\alpha):=[Y_t -\widehat{Q}(1-\alpha/2|X_t), Y_t -\widehat{Q}(\alpha/2|X_t)]\]  is an asymptotically $1-\alpha$ confidence interval for the treatment effect $\delta_t:=Y_t^{(1)}-Y_t^{(0)}$  for each post intervention period;  in the sense that, for all $T_0<t\leq T$,  we have
\[\sup_{0\leq \alpha<0.5}\left|\P\left( \delta_t \in\mathscr{C}_{t}(\alpha)\right)-(1-\alpha)\right|=o(1) \quad\text{as}\quad T_0\to\infty.\]
\end{corollary}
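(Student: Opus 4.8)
The plan is to derive both assertions directly from the non-asymptotic inequalities \eqref{eq:main_thm_res_2} and \eqref{eq:main_thm_res_3} of Theorem \ref{thm:main}, converting them into a stochastic-order statement and an $o(1)$ statement by choosing the free parameters $\vartheta$ and $b$ appropriately; the only genuinely new ingredient is an algebraic reduction of the coverage event $\{\delta_t\in\mathscr{C}_t(\alpha)\}$ to an event about the counterfactual $Y_t^{(0)}$.

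For the first display, substitute the prescribed penalty $\lambda=C\vartheta^2 B\underline{f}\varphi^2 r$ into \eqref{eq:main_thm_res_2}: the threshold $bB\lambda s_0\psi^{-1}(n)$ then equals $CB^2\underline{f}\varphi^2\cdot b\vartheta^2\cdot s_0\psi^{-1}(n)r$, which under the hypothesis that $B,\overline{f},\underline{f},\varphi$ are fixed is a constant multiple of $b\vartheta^2\cdot s_0\psi^{-1}(n)r$. To obtain the $O_\P$ bound it suffices to note that, given $\varepsilon>0$, one may first fix $\vartheta=\vartheta(\varepsilon)$ so large that $2^{1-\vartheta^2}+\vartheta^{-1}<\varepsilon/2$ (recall $n\ge 2$, so $n^{1-\vartheta^2}\le 2^{1-\vartheta^2}$), then fix $b=b(\varepsilon)$ with $b^{-1}<\varepsilon/2$; \eqref{eq:main_thm_res_2} now gives $\P\bigl(\sup_{\tau\in\m{T}}|\widehat{Q}(\tau|X_t)-Q(\tau|X_t)|>K_\varepsilon\,s_0\psi^{-1}(n)r\bigr)<\varepsilon$ for every admissible $(n,T_0)$, with $K_\varepsilon$ absorbing the universal constant, the factor $b(\varepsilon)\vartheta(\varepsilon)^2$, and the mild dependence of $r$ on the now-fixed $\vartheta$. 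This is precisely tightness of $\bigl(s_0\psi^{-1}(n)r\bigr)^{-1}\sup_{\tau\in\m{T}}|\widehat{Q}(\tau|X_t)-Q(\tau|X_t)|$, i.e. the claimed $O_\P(s_0\psi^{-1}(n)r)$.

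For the confidence-interval claim I would first rewrite the coverage event. Writing $Y_t$ for the observed outcome $Z_{1t}$, we have $Y_t=Z_{1t}^{(1)}=Y_t^{(1)}$ for $t>T_0$, while Assumption \ref{A:identification}(b) identifies the observed control vector $(1,Z_{2t},\dots,Z_{nt})'$ with the $X_t$ appearing in Theorem \ref{thm:main} and Assumption \ref{A:identification}(a) makes the pre-intervention window, conditionally on $T_0$, a sample from the no-intervention DGP, so that \eqref{eq:main_thm_res_3} applies with its counterfactual reading. Since $\delta_t=Y_t^{(1)}-Y_t^{(0)}=Y_t-Y_t^{(0)}$, the event $\{\delta_t\in\mathscr{C}_t(\alpha)\}$ equals $\{\widehat{Q}(\alpha/2|X_t)\le Y_t^{(0)}\le\widehat{Q}(1-\alpha/2|X_t)\}$ (a proper interval by monotonicity of the rearranged $\widehat{Q}$ in $\tau$); using Assumption \ref{A:conditional_density} to discard the point mass $\P\bigl(Y_t^{(0)}=\widehat{Q}(\tau|X_t)\bigr)$ — $\widehat{Q}(\tau|X_t)$ being a function of the pre-intervention sample and of $X_t$ only, against a $Y_t^{(0)}$ with continuous conditional law — this probability equals $\P\bigl(Y_t^{(0)}\le\widehat{Q}(1-\alpha/2|X_t)\bigr)-\P\bigl(Y_t^{(0)}\le\widehat{Q}(\alpha/2|X_t)\bigr)$. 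Now invoke \eqref{eq:main_thm_res_3}: under $s_0\psi^{-1}(n)r=o(1)$ one may take $\vartheta\to\infty$ and $b\to\infty$ slowly enough — the admissible rate depending on the mixing decay, since $r$ itself grows with $\vartheta$, but still compatible with Theorem \ref{thm:main}'s side constraint — that $n^{1-\vartheta^2}+\vartheta^{-1}+b^{-1}+bB\lambda s_0\psi^{-1}(n)\to 0$, whence $\sup_{\tau\in\m{T}}|\P(Y_t^{(0)}\le\widehat{Q}(\tau|X_t))-\tau|=o(1)$. Evaluating at $\tau=1-\alpha/2$ and $\tau=\alpha/2$ yields $\P(\delta_t\in\mathscr{C}_t(\alpha))=(1-\alpha)+o(1)$, and since the $o(1)$ term inherited from \eqref{eq:main_thm_res_3} is uniform over $\tau\in\m{T}$ it is uniform over the range of $\alpha$ with $\{\alpha/2,\,1-\alpha/2\}\subset\m{T}$ (the sense in which the uniform-in-$\alpha$ statement is to be read), which is the asserted conclusion.

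\textbf{Main obstacle.} The routine parts are this last algebra and the two parameter choices; the steps needing care — and the ones I expect to be the real obstacle — are: (i) keeping the $O_\P$ and $o(1)$ conversions genuinely uniform, over $\tau\in\m{T}$ in both displays and over the admissible $\alpha$ in the second, while simultaneously honoring the side constraint $[B\psi^{-1}(nT_0)]^2 s_0 r\le\underline{f}^2\varphi^2/\overline{f}$ of Theorem \ref{thm:main} and the fact that $\lambda$ and $r$ both move with $\vartheta$; and (ii) the measure-theoretic step $\P\bigl(Y_t^{(0)}=\widehat{Q}(\tau|X_t)\bigr)=0$, which requires combining the continuity supplied by Assumption \ref{A:conditional_density} with the measurability structure of $\widehat{Q}(\tau|X_t)$ relative to the pre-intervention sample and $X_t$.
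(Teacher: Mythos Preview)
Your proposal is correct and coincides with the paper's intended argument. The paper does not supply a separate proof of this corollary: it is stated as an immediate consequence of Theorem \ref{thm:main}, with only the informal remarks in the paragraph preceding it, and your write-up fills in precisely those details --- freezing $\vartheta$ and $b$ in \eqref{eq:main_thm_res_2} to extract the $O_\P$ rate, rewriting $\{\delta_t\in\mathscr{C}_t(\alpha)\}$ as $\{\widehat{Q}(\alpha/2|X_t)\le Y_t^{(0)}\le \widehat{Q}(1-\alpha/2|X_t)\}$ via Assumption \ref{A:identification}, and then reading off coverage from \eqref{eq:main_thm_res_3}. Your caveat that the uniform-in-$\alpha$ claim should be read as uniformity over those $\alpha$ with $\{\alpha/2,1-\alpha/2\}\subset\m{T}$ is well taken and is indeed how the statement must be interpreted given that $\m{T}$ is a compact subset of $(0,1)$.
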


\subsubsection{Hypothesis Testing}\label{SS:Hypothesis Testing}

For now, suppose we have only a single post-intervention period $(T_1=1)$ and $Q(\cdot|\cdot)$ is known.  Consider a test that rejects when $Y_T< Q(\tau_1|X_t) $ or  $Y_T> Q(\tau_2|X_t) $ for some $\tau_1\leq \tau_2\in\m{T}$.  The test has the correct size $\alpha$ as soon as we choose $\tau_1,\tau_2$ such that $\tau_2-\tau_1\geq 1 - \alpha$.   Clearly,  there are infinite as many tests indexed by the pairs $(\tau_1,\tau_2)\in\m{T}^2$.  If we take the rejection region to be symmetric around zero,  we can show the test is powerful against alternatives of the type $Y_T^{(1)} = Y_T^{(0)} + c$ for some constant $c\in\R$.  Consider the case when the treatment is mean preserving but reduces the dispersion; for instance, $Y^{(1)}_T = \sigma Y^{(0)}_T$ for some $\sigma>0$,  the test now looses power for $0< \sigma<1$ and it can its power approach size as $\sigma$ gets smaller.  

In this situation, it would be better to consider the rejection region centered instead of on the tails. However, we loss power against alternatives when $\sigma>1$ or when the treatment is of the type $Y_T^{(1)} = Y_T^{(0)} + \epsilon$ for some non-degenerate random variable $\epsilon$. We could also consider different rejection regions, such as non-symmetric,  non-connected, or any other shape, aiming to gain power against more complex alternatives.  There are several (parametric) weak null hypotheses implied by \eqref{eq:null_hypothesis}, notably $\E Y_T^{(1)}=\E Y_T^{(0)}$, the classical null hypothesis on ATT estimation in cross-sectional setup. 

We propose to test the null \eqref{eq:null_hypothesis} by testing the stability of all the quantiles of the conditional distribution of $Y_T$ given $X_T$ before and after the intervention.  To expose the idea, consider the stochastic process $\{M_T^0(\tau):\tau\in \m{T}\}$ where $M^0_T(\tau):=\1\{Y_T\leq  Q(\tau|X_T)\}-\tau$ and $\m{T}$ is any compact subset of $(0,1)$.  By construction,  under the null hypothesis,  $\{M_T^0(\tau):\tau\in \m{T}\}$ is  a zero-mean process and $\1\{Y_T\leq  Q(\tau|X_T)\}$ is distributed as a $\text{Bernoulli} (\tau)$ for each $\tau\in\m{T}$. Thus,  $\{M_T^0(\tau):\tau\in \m{T}\}$ can be seen as a process of dependent over $\tau\in\m{T}$ centered Bernoullis.

Obviously,  any hypothesis test based on $\{M_T^0(\tau):\tau\in \m{T}\}$ is infeasible. However,  apart from the unobservable mapping $\tau\mapsto Q(\tau|\cdot)$,  which can be estimated  (uniformly) by $\widehat{Q}$,  the distribution of $\{M_T^0(\tau):\tau\in \m{T}\}$ is pivotal under the null.  To see that, notice that  $\{Y_T\leq Q(\tau|X_T)\} = \{U\leq \tau\}$ almost surely under the null, where $U$ denotes a uniformly distributed random variable over the interval $(0,1)$. Then, $\{M_T^0(\tau):\tau\in \m{T}\}$ can be (almost surely) equivalently expressed as $\{\1\{U\leq \tau\}-\tau:\tau\in\m{T}\}$. As a consequence,  finite-sample (in terms of post-intervention periods) inference is possible as long as we can compute critical values for $\|M^0_{T}\|_{p,\m{T}}$,  where $\|\cdot\|_{p,\m{T}}$ denotes the $\m{L}_p$ norm for $p\in[1,\infty]$ under the uniform measure over $\m{T}$. 

For a given $\m{T}$ and $p\in[1,\infty]$, the proposed test-statistic takes the form
\begin{equation}\label{eq:test_statistic}
\phi_{{\m{T}},p}:=\tfrac{1}{T_1}\sum_{t>T_0}^T\|M_t\|_{p,\m{T}}; \quad  M_t(\tau):= \1\{Y_t\leq  \widehat{Q}(\tau|X_t)\}-\tau,\quad \tau\in\m{T}. 
\end{equation}

The following result bounds the difference in the distribution of $\phi_{{\m{T}},p}$ and $\phi_{{\m{T}},p}^0$ where the latter is the former with $\widehat{Q}$ replaced by $Q$. The result is a ``convergence in distribution type'' except that the target distribution also depends on the sample size $T$ and number of peers $n$, and, for that reason, it does not necessarily weak-converge in the usual sense.

\begin{theorem}\label{T:test} Let $\m{T}\subset (0,1)$ be a compact and $p\in[1,\infty]$.  Under the same conditions of Theorem \ref{thm:main} and $\m{H}_0$, there is a constant $C_p$ only depending on $p$ such that
\begin{align*}
|\P(\phi_{{\m{T}},p} \leq x) - \P(\phi_{{\m{T}},p} ^0\leq x)|  &\leq \\C_p\inf_{\vartheta\geq 1, b>0} &\left[bB\lambda s_0\psi^{-1}(n) +  T_1\left(n^{1-\vartheta^2} + \frac{1}{\vartheta}+ \frac{1}{b}\right)\right],
\end{align*}
for every continuity point of $x\mapsto \P(\phi_{{\m{T}},p}^0 \leq x)$; where $T_1:=T-T_0$ is the number of post-intervention periods.
\end{theorem}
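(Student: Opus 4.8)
The plan is to control the difference between $\phi_{\m{T},p}$ and $\phi_{\m{T},p}^0$ by the uniform-in-$\tau$ deviation of $\widehat Q$ from $Q$, and then invoke a standard ``anti-concentration'' argument to turn that pointwise bound into a bound on the cdfs. First I would write, for each post-intervention $t$,
\[
\big|\,\|M_t\|_{p,\m{T}} - \|M_t^0\|_{p,\m{T}}\,\big| \le \|M_t - M_t^0\|_{p,\m{T}} \le \|M_t - M_t^0\|_{\infty,\m{T}} = \sup_{\tau\in\m{T}}\big|\1\{Y_t\le\widehat Q(\tau|X_t)\} - \1\{Y_t\le Q(\tau|X_t)\}\big|,
\]
using the triangle inequality for the $\m{L}_p(\text{Unif}_{\m{T}})$ norm and the fact that this norm is dominated by the sup norm. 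Summing over $t$ and dividing by $T_1$ gives $|\phi_{\m{T},p}-\phi_{\m{T},p}^0| \le \frac{1}{T_1}\sum_{t>T_0}\Delta_t$ with $\Delta_t := \sup_{\tau}|\1\{Y_t\le\widehat Q(\tau|X_t)\}-\1\{Y_t\le Q(\tau|X_t)\}|$, which is either $0$ or $1$.

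Next I would show that $\Delta_t=1$ forces $Y_t$ to lie within the uniform estimation error of the true conditional quantile curve, i.e. $\Delta_t = 1$ implies $\min_{\tau\in\m{T}}|Y_t - Q(\tau|X_t)| \le \sup_{\tau\in\m{T}}|\widehat Q(\tau|X_t) - Q(\tau|X_t)| =: \eta_t$. Indeed, if the indicators disagree at some $\tau$, then $Y_t$ is sandwiched between $Q(\tau|X_t)$ and $\widehat Q(\tau|X_t)$. Under $\m{H}_0$ we have $F(Y_t|X_t)\sim\text{Unif}(0,1)$, and by Assumption \ref{A:conditional_density}(b) the density $f(\cdot|X_t)$ is bounded below by $\underline f$ on the relevant range, so the event $\{Y_t$ within distance $\eta$ of the curve $\tau\mapsto Q(\tau|X_t)\}$ has conditional probability $\lesssim \eta/\underline f$ (mapping through $F$; the event is contained in a $\tau$-interval of length $\le 2\bar f\eta$ using the upper density bound, or directly $\lesssim\underline f^{-1}\eta$ in the $Y$-scale). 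Combining with the tail bound \eqref{eq:main_thm_res_2} of Theorem \ref{thm:main}, which gives $\P(\eta_t \gs bB\lambda s_0\psi^{-1}(n)) \ls n^{1-\vartheta^2}+\vartheta^{-1}+b^{-1}$, a union bound over the $T_1$ post-intervention periods yields $\P\big(|\phi_{\m{T},p}-\phi_{\m{T},p}^0| > C\,bB\lambda s_0\psi^{-1}(n)\big) \ls T_1(n^{1-\vartheta^2}+\vartheta^{-1}+b^{-1})$, plus the anti-concentration contribution $\lesssim T_1 \underline f^{-1} bB\lambda s_0\psi^{-1}(n)$ absorbed into the first term.

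Finally I would convert this into the cdf bound. For any $x$ and $\varepsilon>0$, $\{\phi_{\m{T},p}\le x\} \subseteq \{\phi_{\m{T},p}^0 \le x+\varepsilon\} \cup \{|\phi_{\m{T},p}-\phi_{\m{T},p}^0|>\varepsilon\}$, so $\P(\phi_{\m{T},p}\le x) \le \P(\phi_{\m{T},p}^0\le x+\varepsilon) + \P(|\phi_{\m{T},p}-\phi_{\m{T},p}^0|>\varepsilon)$, and symmetrically with $x-\varepsilon$. Taking $\varepsilon = C bB\lambda s_0\psi^{-1}(n)$ and using continuity of $x\mapsto\P(\phi_{\m{T},p}^0\le x)$ at $x$ to let the $\varepsilon$-shift contribute only through a further anti-concentration term on $\phi^0_{\m{T},p}$ — which is itself controlled since $\phi^0_{\m{T},p}$ is a bounded functional of finitely many Bernoullis with a density-type spread near continuity points — I obtain $|\P(\phi_{\m{T},p}\le x)-\P(\phi_{\m{T},p}^0\le x)| \ls bB\lambda s_0\psi^{-1}(n) + T_1(n^{1-\vartheta^2}+\vartheta^{-1}+b^{-1})$, and taking the infimum over $\vartheta\ge 1$, $b>0$ gives the claim with a $p$-dependent constant $C_p$ (the $p$-dependence enters only through the norm-comparison constant in the first display). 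The main obstacle I anticipate is the anti-concentration step: making precise that the $\varepsilon$-perturbation of the cdf of $\phi^0_{\m{T},p}$ at a continuity point is itself $O(\varepsilon)$ up to the same order, which requires a small argument that the distribution of $\|M_T^0\|_{p,\m{T}} = \|\1\{U\le\cdot\}-\cdot\|_{p,\m{T}}$ (a monotone transform of a single uniform $U$ when $T_1=1$, a convolution of such for $T_1>1$) has no atoms away from its finitely many jump points and locally behaves like a Lipschitz function of $U$ — so that near any fixed continuity point $x$ the shift only costs order $\varepsilon$; this is where the constant $C_p$ and the precise statement ``at continuity points'' are genuinely needed.
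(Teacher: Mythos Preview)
Your overall architecture matches the paper's: reduce $|\phi_{\m{T},p}-\phi_{\m{T},p}^0|$ to the indicator difference via the triangle inequality, invoke the tail bound \eqref{eq:main_thm_res_2} on $\sup_\tau|\widehat Q-Q|$, and convert to a cdf bound via a Lemma~\ref{L:useful}-type sandwich together with anti-concentration of $\phi^0_{\m{T},p}$ at continuity points; your third paragraph is essentially the paper's concluding step. The gap is in your second paragraph. You correctly observe that $\Delta_t=1$ forces $\min_{\tau\in\m{T}}|Y_t-Q(\tau|X_t)|\le\eta_t$, but then assert that this event has conditional probability $\lesssim\eta_t$. It does not: the image $\{Q(\tau|X_t):\tau\in\m{T}\}$ is an \emph{interval} whose $F(\cdot|X_t)$-measure is of order the length of $\m{T}$, so the probability that $Y_t$ lands within $\eta$ of it is roughly $|\m{T}|+O(\overline{f}\,\eta)$, which does not vanish with $\eta$. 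Hence $\P(\Delta_t=1)$ is not small, and your claimed bound on $\P(|\phi-\phi^0|>C\delta)$ cannot be obtained along this route.

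The paper does not try to make $\P(\Delta_t=1)$ small; it bounds $\|M_t-M_t^0\|$ \emph{pathwise} in terms of $\eta_t:=\sup_\tau|\widehat Q(\tau|X_t)-Q(\tau|X_t)|$. Writing $U_t=F(Y_t|X_t)$ and, via the mean-value theorem, $\1\{Y_t\le\widehat Q(\tau|X_t)\}=\1\{U_t-Z_t(\tau)\le\tau\}$ with $|Z_t(\tau)|\le\overline{f}\,\eta_t$, the two indicators can differ at a given $\tau$ only when $|U_t-\tau|\le\overline{f}\,\eta_t$. This is a deterministic constraint on the \emph{set of $\tau$'s} (of Lebesgue measure at most $2\overline{f}\,\eta_t$) at which a discrepancy can occur, and it is this that controls $\|M_t-M_t^0\|_{p,\m{T}}$ by a quantity shrinking with $\eta_t$ --- as opposed to your attempt to control, over the randomness in $Y_t$, the probability that a discrepancy occurs \emph{somewhere}. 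One then sets $\delta=bB\lambda s_0\psi^{-1}(n)$, bounds $\P(|\phi-\phi^0|>\delta)\le\sum_{t>T_0}\P(\eta_t>\delta/\overline{f})$ via \eqref{eq:main_thm_res_2} and a union bound, and finishes with Lemma~\ref{L:useful} plus the anti-concentration of $\phi^0_{\m{T},p}$ exactly as you outline.
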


We propose a test that rejects $\m{H}_0$ whenever  $\phi_{{\m{T}},p} > c$ for some constant $c$ and given $\m{T}$ and $p\in[1,\infty]$. To control for the test size $\alpha$, say,  we set $c=c_{\m{T},p}(\alpha/T_1)$ where $c_{\m{T},p}(\mathfrak{q})$ denotes the $\mathfrak{q}$-quantile of $\|M_t^0\|_{p,\m{T}}$ under the null. Notice that that the distribution of $M_t^0$ is independent of $t$ under Assumption \ref{A:CGF}.

\begin{corollary} For $\alpha>0$, $p\in[1,\infty]$ and some (large enough) compact interval $\m{T}\subset(0,1)$ we have, under the same conditions of Theorem \ref{T:test}, 
\[
\P\big[\phi_{{\m{T}},p}>c_{\m{T},p}(\alpha/T_1)\big] - \alpha  \ls\inf_{\vartheta\geq 1, b>0} \left[bB\lambda s_0\psi^{-1}(n) +  T_1\left(n^{1-\vartheta^2} + \frac{1}{\vartheta}+ \frac{1}{b}\right)\right].
\]
In particular, the test has the correct asymptotic size $\alpha$ provided that the rate condition in Corollary \ref{C:CI} is satisfied. 
\end{corollary}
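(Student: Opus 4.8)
The plan is to read the size control straight off the distributional comparison already furnished by Theorem~\ref{T:test}, adding only an elementary union bound built on the pivotality of the oracle process under the null. Write $\mathcal{E}:=C_p\inf_{\vartheta\ge 1,\,b>0}\big[bB\lambda s_0\psi^{-1}(n)+T_1(n^{1-\vartheta^2}+\vartheta^{-1}+b^{-1})\big]$ for the right-hand side of Theorem~\ref{T:test}. That theorem states $|\P(\phi_{\m{T},p}\le x)-\P(\phi_{\m{T},p}^0\le x)|\le\mathcal{E}$ at every continuity point $x$ of $x\mapsto\P(\phi_{\m{T},p}^0\le x)$, hence $\P(\phi_{\m{T},p}>x)\le\P(\phi_{\m{T},p}^0>x)+\mathcal{E}$ at such $x$.

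The next step is to bound $\P(\phi_{\m{T},p}^0>c_{\m{T},p}(\alpha/T_1))$ by $\alpha$. Under $\m{H}_0$ and Assumption~\ref{A:conditional_density}(a) the conditional c.d.f. $F(\cdot|x)$ is continuous, so $U_t:=F(Y_t|X_t)$ is standard uniform and $\{Y_t\le Q(\tau|X_t)\}=\{U_t\le\tau\}$ almost surely (as in Section~\ref{SS:Hypothesis Testing}); consequently the process $\{M_t^0(\tau):\tau\in\m{T}\}$ equals $\{\1\{U\le\tau\}-\tau:\tau\in\m{T}\}$ in distribution and the law of $\|M_t^0\|_{p,\m{T}}$ does not depend on $t$ (Assumption~\ref{A:CGF}) --- this is precisely what makes $c_{\m{T},p}(\cdot)$ well defined and simulable from i.i.d.\ uniforms. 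Since an average of nonnegative numbers exceeds a threshold only if some summand does, $\{\phi_{\m{T},p}^0>c\}\subseteq\bigcup_{t>T_0}\{\|M_t^0\|_{p,\m{T}}>c\}$; a union bound, the right-continuity of $x\mapsto\P(\|M_1^0\|_{p,\m{T}}>x)$, and the definition of $c_{\m{T},p}(\alpha/T_1)$ as the critical value with upper-tail probability $\alpha/T_1$ then yield $\P(\phi_{\m{T},p}^0>c_{\m{T},p}(\alpha/T_1))\le T_1\,\P(\|M_1^0\|_{p,\m{T}}>c_{\m{T},p}(\alpha/T_1))\le T_1\cdot(\alpha/T_1)=\alpha$.

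Combining the two steps, for every continuity point $x>c:=c_{\m{T},p}(\alpha/T_1)$ one has $\P(\phi_{\m{T},p}>x)\le\P(\phi_{\m{T},p}^0>x)+\mathcal{E}\le\P(\phi_{\m{T},p}^0>c)+\mathcal{E}\le\alpha+\mathcal{E}$; taking continuity points $x_k\downarrow c$ (they are dense, since $\P(\phi_{\m{T},p}^0\le\cdot)$ has at most countably many jumps) and using continuity from below of $\P$ gives $\P(\phi_{\m{T},p}>c)\le\alpha+\mathcal{E}$, which is the asserted inequality once $C_p$ is absorbed into $\ls$; this detour removes any need for $c$ itself to be a continuity point. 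For the asymptotic claim, $T_1$ is held fixed, and under the rate condition of Corollary~\ref{C:CI} ($B,\overline{f},\underline{f},\varphi$ constant and $s_0\psi^{-1}(n)r\to 0$) one has $\lambda\asymp\vartheta^2 r$, so $\mathcal{E}\asymp\inf_{\vartheta\ge 1,b>0}\big[\,b\vartheta^2 s_0 r\psi^{-1}(n)+T_1(n^{1-\vartheta^2}+\vartheta^{-1}+b^{-1})\,\big]$; choosing $\vartheta=\vartheta_{T_0}\to\infty$ and $b=b_{T_0}\to\infty$ slowly enough that $b_{T_0}\vartheta_{T_0}^2\, s_0 r\psi^{-1}(n)\to 0$ (possible since $s_0 r\psi^{-1}(n)\to 0$) forces $\mathcal{E}\to 0$, whence $\limsup_{T_0\to\infty}\P[\phi_{\m{T},p}>c_{\m{T},p}(\alpha/T_1)]\le\alpha$.

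The genuinely substantive work is upstream, in Theorems~\ref{thm:main} and~\ref{T:test}; here the only steps needing care are (i) the pivotality/union-bound argument --- in particular that a critical value calibrated from the \emph{marginal} law of $\|M_t^0\|_{p,\m{T}}$, rather than from the joint law of the post-intervention block, still controls the average via the crude $\tfrac1{T_1}\sum\le\max$ inequality --- and (ii) the tuning: because $r$ itself depends on $\vartheta$ through $\beta^{-1}((\vartheta T_0)^{-1})$, one must verify that letting $\vartheta\to\infty$ does not destroy $s_0 r\psi^{-1}(n)\to 0$. For geometric, or any polynomial-in-$T_0$, $\beta$-mixing this inflates $r$ only by a logarithmic factor, so a slowly growing $\vartheta_{T_0}$ suffices; heavier-than-geometric dependence is where this budget is tightest.
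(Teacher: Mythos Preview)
Your argument is correct and follows the same strategy the paper sketches in one sentence: apply Theorem~\ref{T:test} at the critical value and combine it with a union bound over the $T_1$ post-intervention summands. The only substantive difference is how the continuity requirement of Theorem~\ref{T:test} is handled. The paper invokes the ``large enough $\m{T}$'' hypothesis to claim that the critical value $c_{\m{T},p}(\alpha/T_1)$ is itself a continuity point of $x\mapsto\P(\phi_{\m{T},p}^0\le x)$, so Theorem~\ref{T:test} applies directly at $x=c$. You instead bypass this by approaching $c$ from above along continuity points and using monotone convergence of $\{\phi_{\m{T},p}>x_k\}\uparrow\{\phi_{\m{T},p}>c\}$. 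Your route is slightly more robust---it does not rely on the somewhat vague ``large enough'' qualifier---while the paper's route is shorter once that hypothesis is accepted. Both arrive at the same bound, and your treatment of the asymptotic claim (letting $\vartheta,b\to\infty$ slowly, with the noted caveat about $r$'s dependence on $\vartheta$ through $\beta^{-1}$) is a welcome elaboration of what the paper leaves implicit.
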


The condition that interval $\m{T}$ must be large enough is to ensure that it includes the  $1-\alpha$ quantile of $\phi_{\m{T},p}^0$, which then becomes a continuity point and the result above follows directly from Theorem \ref{T:test} and the union bound.

To implement the hypothesis test described above, we need to compute the critical value $c_{\m{T},p}(\cdot)$ or p-values. Fortunately,  as previously mentioned,  the distribution of $\{M^0_t(\tau):\tau\in\m{T}\}$ is pivotal under the null so that a direct Monte Carlo simulation can be used.  For every $u\in[0,1]$ define
\begin{equation*}
g_{\m{T},p}(u) :=\|\1\{u\leq\tau\}-\tau\|_{\m{T},p}:=\begin{cases}
\left(\int_{\m{T}}|\1\{u\leq\tau\}-\tau|^p d\tau \right)^{1/p}&;  \quad p\in[1,\infty)\\
\sup_{\tau\in\m{T}} |\1\{u\leq\tau\}-\tau|\quad &; \quad p=\infty.
\end{cases}
\end{equation*}

Therefore, $\|M_t^0\|_{\m{T},p}$ and $g_{\m{T},p}(U)$ share the same distribution under the null for every $t>T_0$ where $U$ is a uniform random variable on the unit interval. If we take $\m{T}$ to be a closed interval with endpoints $0<\underline{\tau}\leq \overline{\tau}< 1$, since $\tau\mapsto \1\{u\leq\tau\}-\tau$ is piece-wise linear for each $u\in[0,1]$, we can solve the integral analytically for $p\in[1,\infty)$ and the suprema appearing in the case $p=\infty$ is straightforward to compute. Specifically, the last display becomes
\begin{equation}\label{eq:g_p}
g_{\m{T},p}(u)=\begin{cases}
\left(\tfrac{1}{p+1}\left[{\tau^*(u)}^{p+1} - \underline{\tau}^{p+1} - ( 1 - \overline{\tau})^{p+1} + (1-\tau^*(u))^{p+1} \right]\right)^{1/p} &;  \quad p\in[1,\infty)\\
 \tau^*(u)\lor (1-\tau^*(u)) \quad &; \quad p=\infty,
\end{cases}
\end{equation}
where $\tau^*(u):=\underline{\tau}\lor(u\land\overline{\tau})$.

Equation \eqref{eq:g_p} is key to computing critical values by simply drawing a large number of uniform random variables in a Monte Carlo simulation.  Figure \ref{fig:norm_T1_influence} above was constructed using this approach.  Furthermore,  since $g_{\m{T},p}$ is continuously differentiable except at finitely many points, we conclude that $\phi_{\m{T},p}^0$ is an absolutely continuous random variable for each $p\in[1,\infty]$.

Also, in practice, we cannot compute the test statistics $\phi_{\m{T},p}$ defined by \eqref{eq:test_statistic} for an infinite set $\m{T}$.  So, we take a large (but finite) $\m{T}$ as an approximation.  For instance,  $\m{T}$ could be the set of the $n_\m{T} \geq 2$ equally-spaced quantiles from the lowest chosen quantile,  $0<{\underline{\tau}}$,  to the largest chosen quantile ${\overline{\tau}}<1$.  For finite  $\m{T}$ we have
\[
\phi_{\m{T},p} = \begin{cases}
T_1^{-1}\sum_{t>T_0}^T\left(|\m{T}|^{-1}\sum_{k=1}^{|\m{T}|}\left|\1\{Y_t\leq \h{Q}(\tau_k|X_t)\} -\tau_k\right|^p\right)^{1/p}; & p\in [1,\infty)\\
T_1 ^{-1}\sum_{t>T_0}^T \max_{1\leq k\leq |\m{T}|}\left|\1\{Y_t\leq \h{Q}(\tau_k|X_t)\} -\tau_k\right|; & p=\infty.\\
\end{cases}
\]

Finally, the choice of the norm $p\in[1,\infty]$ ultimately depends on the expected alternative. Similarly to the non-parametric hypothesis test situation, it is not possible to have the optimum $p$ in the sense of maximizing the test power against all potential alternatives. Recall the alternative in our case is defined by $Y_t^{(0)} \neq Y_t^{(1)}$ in distribution for the post-intervention periods. While it is possible to compute the test power function for a given alternative, i.e., for a fixed pair of laws $(Y_t^{(0)}, Y_t^{(1)})$, in most empirical applications, the \emph{expected} distribution effect is usually unknown by the econometrician.

Heuristically, we have that when all (or most) of the quantiles of $Y_t$ are expected to change due to the intervention, for instance, when a shift to distribution location occurs, smaller norms such as $p\in\{1,2\}$ are more powerful when compared to $p=\infty$. The opposite is true, i.e., when the change is localized around a single quantile, for instance, only the extreme right tail of the distribution changes, then $p=\infty$ is preferable as opposed to $p\in\{1,2\}$. Both behaviors are intuitive and are corroborated in the simulations. As a rule of thumb, we recommend computing the hypothesis test for $p\in\{1,2,\infty\}$ to guide one's conclusion in empirical investigation. We follow this recommendation in the empirical illustration presented in Section \ref{S:application}.

\section{Monte Carlo Simulation}\label{S:montecarlo}

 We conducted a Monte Carlo study to evaluate the performance of the estimator \eqref{eq:estimator} and the test \eqref{eq:test_statistic} in finite-sample.  The simulated DGP is a linear location and scale model given by
\begin{equation}\label{eq:MC_DGP}
  Y_t =X_t' \beta_0 + \sigma(X_t)U_t,
\end{equation}
where  $X_t=(1,X_{2t},\dots, X_{nt})'$ is a  $n$-dimensional random vector with covariance structure $\Omega:=(\omega_{ij})$ among the last $n-1$ entries given by $\omega_{ij} = \rho_X^{|i-j|}$ for $\rho_X\in[0,1)$ and $U_t$ is zero-mean unit-variance absolutely continuous random variable independent of $X_t$ with cdf $G$ . Also, $\{(X_t,U_t):t\geq 1\}$ is an autoregressive process with coefficient $\rho_Z\in[0,1)$ for each element. Finally, $\beta_0=(\alpha_0, \t{\beta}'_0)'\in\R^{n}$ is a sparse vector  and  $x\mapsto \sigma(x)> 0$ the scale function both to be specified later.
 
The  conditional quantile function of \eqref{eq:MC_DGP} becomes
 \[Q(\tau|x) = x' \beta_0  + \sigma(x)G^{-1}(\tau).\]
Clearly, we have a correctly specified linear quantile model (Assumption \ref{A:CGF}) if the scale function is a constant $\sigma(x)=\sigma^*> 0$ or, more generally when the scale function is linear on $x$, i.e.,  $\sigma(x)=\sigma^* + x'\gamma$ for some  $\gamma\in\R^{n}$ provided that $\gamma'x>0$ for all $x$ in the support of $X_t$. In that case,  the the conditional quantile becomes $Q(\tau|x) = x' \theta_0(\tau)$ where $\theta_0(\tau) = \big( \alpha_{0} + \sigma^*G^{-1}(\gamma), (\t{\beta}_0 + G^{-1}(\tau)\gamma)'\big)'$  We simulate \eqref{eq:MC_DGP} with non-linear scale function to evaluate departures from Assumption \ref{A:CGF}.  The conditional density of $Y_t$ given $X_t=x$ is given by $f(y|x) = g(y)/\sigma(x)$ where $g$ denotes the density of $G$.  Assumption \ref{A:conditional_density} is fulfilled provided that $\sigma(x)$ is bounded away from zero and infinity over the support of $X_t$, and $g$ is bounded away from zero and infinity over the support of the conditional quantiles for a given compact interval $\m{T}\subset (0,1)$.

 The baseline for our simulation is $n = 200$ units ( including the treated one) with $s_0=5$ relevant units for all quantiles, $T_0=100$ pre-intervention periods,  $T_1 = 3$ post-intervention periods.  All marginal distributions are Gaussian, and the correlation coefficient among the regressors is $\rho_X=0.25$ and the serial correlation coefficient $\rho_Z=0.5$.  The linear quantile model is correctly specified.  For comparison, we show in the first column of Tables \ref{thm:main} and \ref{T:test} the results for the ``Oracle'' estimation, i.e.,  the estimation using unpenalized quantile regression if we knew which are the relevant regressors ($\m{S}_\tau$ is known for each $\tau\in\m{T}$).
 
 We also consider different configurations around a \emph{baseline} scenario. We investigate the influence of decreasing the number of pre-intervention periods to $T_0 = 80$ and $50$. An increase in the number of peers to $n=300$ and $500$.  Change the Signal-to-Noise ratio (SNR) to $\text{SNR}=0.1$ and $10$.  Alter the distribution of the $G$  to a Chi-squared and t-distribution with 4 and 10 degrees of freedom, respectively.  Finally, we consider a missecifed case where $\sigma(x) = 1 + (\gamma'x)^2$ with $\gamma= (1,1/2,\dots, 1/5)$.  See the header of Table \ref{tab:MC_estimation} and  \ref{tab:MC_HT} for a summary.  For all cases,  $\m{T}$ is set to be an equally-spaced grid over $(0,1)$ containing 100 points.  

\begin{table}[h]
\footnotesize
\centering
\begin{threeparttable}
\caption{Monte Carlo Simulation: CQF Estimation Error \label{tab:MC_estimation}}
\begin{tabular*}{\textwidth}{@{\extracolsep{\fill} } c c c c c c c c c c c c  c }
\toprule
& & &\multicolumn{2}{c}{$T_0=$} & \multicolumn{2}{c}{ $n=$} & \multicolumn{2}{c}{SNR}&\multicolumn{2}{c}{$G$} \\
  \cmidrule{2-3}\cmidrule{4-5} \cmidrule{6-7} \cmidrule{8-9}\cmidrule{10-11}
 Nominal &Oracle & Baseline & $80$ & $50$ & $300$ & $500$ & $0.1$ & $10$ & $\chi^2_4$ & $t_{10}$ & Misspec.\\
   \midrule
  \multicolumn{10}{l}{\textit{Panel A: CQF Estimation Error}} \\
 \midrule
   $\m{L}_p$ \\
   \midrule
$p=1$ & 1.23 & 1.38 & 1.33 & 1.31 & 1.40 & 1.36 & 2.05 & 1.81 & 1.42 & 1.40 & 4.29 \\ 
  $p=2$ & 1.37 & 1.66 & 1.65 & 1.62 & 1.68 & 1.65 & 2.55 & 2.32 & 1.69 & 1.72 & 7.50\\
  $p=\infty$ & 2.91 & 3.41 & 3.20 & 3.05 & 3.36 & 3.30 & 4.89 & 5.60 & 3.63 & 3.54 & 22.48 \\ 
   \midrule
  \multicolumn{10}{l}{\textit{Panel B: Empirical (Conditional)} Quantiles} \\
 \midrule
   $\tau$\\
   \midrule
  0.1 & .135 & .112 & .113 & .108 & .108 & .107 & .106 & .108 & .105 & .105 & .113 \\ 
  0.2 & .231 & .206 & .209 & .206 & .214 & .202 & .198 & .215 & .210 & .201 & .208 \\ 
  0.3 & .315 & .294 & .312 & .302 & .317 & .299 & .293 & .317 & .302 & .301 & .304 \\ 
  0.4 & .412 & .403 & .410 & .400 & .420 & .390 & .388 & .415 & .411 & .394 & .405 \\ 
  0.5 & .500 & .498 & .510 & .494 & .512 & .496 & .497 & .511 & .505 & .494 & .511 \\ 
  0.6 & .596 & .599 & .595 & .588 & .610 & .594 & .588 & .605 & .595 & .588 & .607 \\ 
  0.7 & .685 & .690 & .693 & .692 & .707 & .693 & .693 & .701 & .701 & .686 & .699 \\ 
  0.8 & .776 & .783 & .799 & .789 & .804 & .792 & .787 & .797 & .793 & .785 & .795 \\ 
  0.9 & .865 & .879 & .894 & .880 & .901 & .893 & .888 & .890 & .899 & .887 & .885 \\ 
  \midrule
 \multicolumn{10}{l}{\textit{Panel C: Confidence Interval Empirical Coverage}} \\
 \midrule
   Nominal\\
   \midrule
0.9 & .829 & .876 & .890 & .888 & .895 & .881 & .881 & .884 & .886 & .890 & .875 \\ 
  0.95 & .886 & .938 & .926 & .922 & .946 & .937 & .938 & .941 & .943 & .942 & .939 \\ 
  0.99 & .913 & .980 & .975 & .960 & .979 & .976 & .981 & .979 & .979 & .984 & .974 \\ 
\bottomrule
\end{tabular*}
\begin{tablenotes}
\footnotesize
\item \underline{Note}: Results based on 1,000 replications for each case.  Baseline scenario ($n=200,T=100, T_1 = 3, s_0=5,  G=N(0,1), \rho_X =0.25, \text{SNR}=1, \beta_{0j}=1/j$ for $1\leq j\leq s_0$). $\m{T}$ is  equally-spaced grid over $(0,1)$ containing 100 points. Penalty parameter selected as per \citet*{belloni2011} with confidence level 90\%. Signal-to-Noise ratio (SNR); Misspecified model (Misspec)  has $\sigma(x) = 1 + (\gamma'x)^2$ with $\gamma= (1,1/2,\dots, 1/5)$.
\end{tablenotes}
\end{threeparttable}
\end{table}

In every replication,  we compute the estimator defined in  \eqref{eq:estimator} and the test statistics described in \eqref{eq:test_statistic} over $\m{T}$.  Table \ref{tab:MC_estimation} organizes the CQF estimation results.  Panel A evaluates the estimator performance in terms of the distance between  $\tau\mapsto \widehat{Q}(\tau|X_t)$ and $\tau\mapsto Q(\tau|X_t)$ in the $\m{L}_p$ norm for $p\in\{1,2,\infty\}$ over the uniform distribution in the grid. The result is then averaged over the post-intervention periods. Panel B presents the empirical conditional quantiles,  i.e., the conditional quantiles implied by the estimated CQF.  Finally,  Panel (C) shows the empirical coverage of confidence intervals created as described in Corollary \ref{C:CI}.  Table \ref{tab:MC_HT} collects the results of the hypothesis test based on test statistic \eqref{eq:test_statistic} for the norm $p\in\{1,2,\infty\}$. It shows the rejection rates under the null (empirical size), computed via the simulated null distribution using $10,000$ replications of \eqref{eq:g_p}.

\begin{table}[h]
\footnotesize
\centering
\begin{threeparttable}
\caption{Monte Carlo Simulation: Empirical Size} \label{tab:MC_HT}
\begin{tabular*}{\textwidth}{@{\extracolsep{\fill} } c c c c c c c c c c c c  c }
\toprule
& & &\multicolumn{2}{c}{$T_0=$} & \multicolumn{2}{c}{ $n=$} & \multicolumn{2}{c}{SNR}&\multicolumn{2}{c}{$G$} \\
  \cmidrule{2-3}\cmidrule{4-5} \cmidrule{6-7} \cmidrule{8-9}\cmidrule{10-11}
 Nominal &Oracle & Baseline & $80$ & $50$ & $300$ & $500$ & $0.1$ & $10$ & $\chi^2_4$ & $t_{10}$ & Misspec.\\
  \midrule
 \multicolumn{10}{l}{$\m{L}_1$-norm} \\
 \midrule
0.1 & .157 & .120 & .119 & .125 & .101 & .101 & .135 & .118 & .106 & .113 & .126 \\ 
  0.05 & .095 & .073 & .055 & .069 & .052 & .054 & .066 & .067 & .058 & .053 & .062 \\ 
  0.01 & .036 & .017 & .014 & .022 & .015 & .015 & .019 & .014 & .009 & .017 & .011 \\ 
   \midrule
 \multicolumn{10}{l}{$\m{L}_2$-norm} \\
 \midrule
 0.1 & .150 & .110 & .116 & .130 & .096 & .114 & .131 & .129 & .115 & .107 & .122 \\ 
  0.05 & .083 & .064 & .061 & .072 & .052 & .059 & .063 & .061 & .055 & .055 & .058 \\ 
  0.01 & .018 & .020 & .020 & .0.19 & .018 & .015 & .018 & .014 & .011 & .012 & .009 \\ 
   \midrule
 \multicolumn{10}{l}{$\m{L}_\infty$-norm} \\
 \midrule
  0.1 & .045 & .046 & .047 & .047 & .037 & .036 & .050 & .041 & .042 & .038 & .060 \\ 
  0.05 & .044 & .043 & .047 & .045 & .045 & .039 & .051 & .042 & .043 & .041 & .042 \\ 
  0.01 & .008 & .011 & .012 & .015 & .009 & .010 & .011 & .011 & .008 & .009 & .019 \\ 
\bottomrule
\end{tabular*}
\begin{tablenotes}
\footnotesize
\item  \underline{Note}: See notes from Table \ref{tab:MC_estimation}.  Critical values based on \eqref{eq:g_p} for $p\in\{1,2,\infty\}$, $T_1=3$  and 10,000 replications. $\m{T}$ is  equally-spaced grid over $(0,1)$ containing 100 points.
\end{tablenotes}
\end{threeparttable}
\end{table}

 Overall, the test seems to be correctly sized in all scenarios considered.  Interestingly,  the size distortions for the Oracle are systematically larger than the ones observed in the Baseline, which might be partially explained by the increase in the estimation error shown in Panel A of Table \ref{thm:main}.  Also, the test based on the $\m{L}_\infty$ appears to be consistently slightly undersized, whereas the tests based on $\m{L}_1$ and $\m{L}_2$  are somewhat oversized in most cases.  Therefore,  all distributional tests seem satisfactory for practical purposes even in the Misspecified model considered (last column of Table \ref{T:test}).

\section{Empirical Illustration}\label{S:application}

To describe the methodology from the practitioner perspective, we revisit the empirical study in \citet*{acemoglu2016value},  henceforth AJKKM. Below, we discuss the basic setup, including a brief discussion of the authors' motivation and essential details about the dataset. For a comprehensive account, refer to AJKKM.

\subsection*{Background} On November 24, 2008, Tim Geithner was nominated by President-elect Obama as Treasury Secretary. The task is to understand the effects of the appointment announcement on the stock returns of financial firms supposedly connected to him. In the authors' words, \emph{In a time of crisis the position of Treasury Secretary has unusual powers, including over the financial system,  a point that was confirmed by the emergency legislation passed in October 2008. When immediate action is necessary, social connections are likely to become more important as sources of both ideas and human capital for the U.S. executive branch. In complex, stressful situations, it is natural to tap private sector friends, associates, and acquaintances with relevant expertise}. The authors find compelling evidence that firms linked to Geithner experienced higher cumulative abnormal returns (on average) than non-connected firms.

\subsection*{Dataset } The sample consists of 583 firms traded on NYSE or Nasdaq categorized as banks or financial services firms. The data consists of daily stock returns based on closing prices from 2007 and 2008. Out of those firms,   22 were considered treated, i.e., connected to Geithner. The firm's connection to Geithner was labeled based on three criteria: (A) Personal connection,  the number of shared board memberships between the firm's executives and Geithner ;  (B) Schedule Connections,  the number of times that Geithner interacted with executives from each firm while he was president of the New York Fed based on his schedule for each day from January 2007 through January 2009;  (C) N.Y. Headquartered,  a dummy variable equal to one if a firm's headquarters is identified as New York City since Geithner was president of the Federal Reserve Bank of New York. Those labels are not mutually exclusive.

Even though Geithner's nomination was officially announced on Monday, November 24,  news of his impending nomination was leaked to the press late in the trading day on Friday,  November 21, 2008, at approximately 3:00 p.m., which coincides with the beginning of a stock market rally. Therefore, we treat the first-day post-treatment ($T_0+1$ is our notation) as Friday,  November 21,  and refer to it as day $1$. Day  $2$ refers to the following Monday,  November 24, and so on throughout the subsequent trading days.

\subsection*{Results} We apply the methodology described in Section \ref{S:estimator} independently on each of the 22 connected firms using all the remaining $n=583-22$ firms as potential controls. The pre-intervention data are stock returns for 250 days ending 30 days before the announcement,  $T_0=250$. We consider the window of up to 10 days after the announcement, $T_1$. We take $\m{T}$ as an equally-spaced grid on $(0,1)$ with $1000$ points.  We present the results for one firm on each connection category with a single label.

The distribution counterfactual for selected firms in each connection type is shown in Figure \ref{fig:empirical_quant} together with the actual return for ten days after the announcement, while Figure \ref{fig:empirical_ci} plots the median and the 95\% confidence interval for announcement effect on returns for each of the ten post-announcement periods for the same firms selected in Figure \ref{fig:empirical_quant}. 

\begin{figure}[h]
\captionsetup[subfigure]{justification=centering}
\begin{subfigure}{0.32\textwidth}
    \centering
    \includegraphics[width=\textwidth]{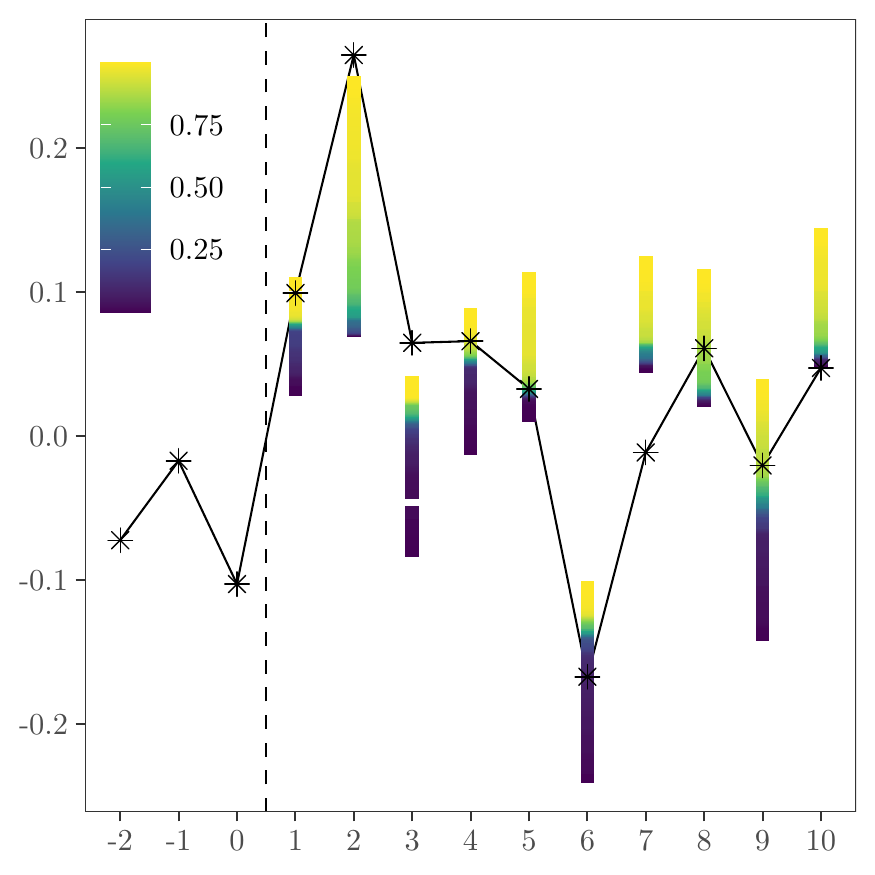}
    \caption{Personal Connection}
\end{subfigure}
\begin{subfigure}{0.32\textwidth}
    \centering
     \includegraphics[width=\textwidth]{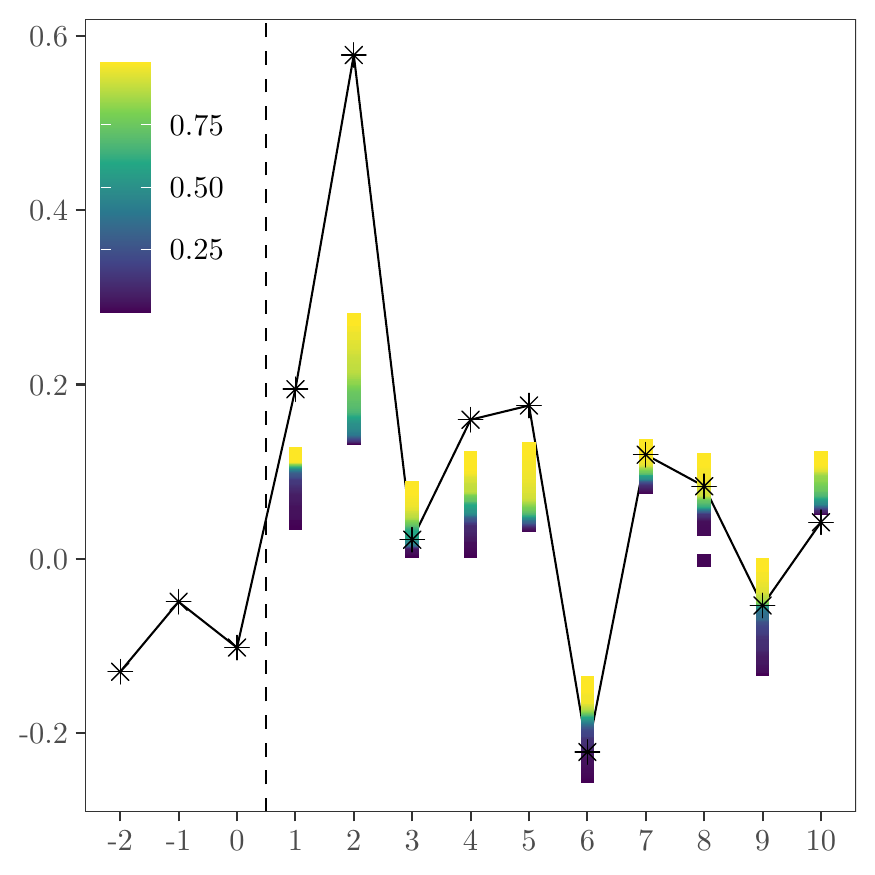}
	  \caption{Schedule Connection}
\end{subfigure}
\begin{subfigure}{0.32\textwidth}
    \centering
     \includegraphics[width=\textwidth]{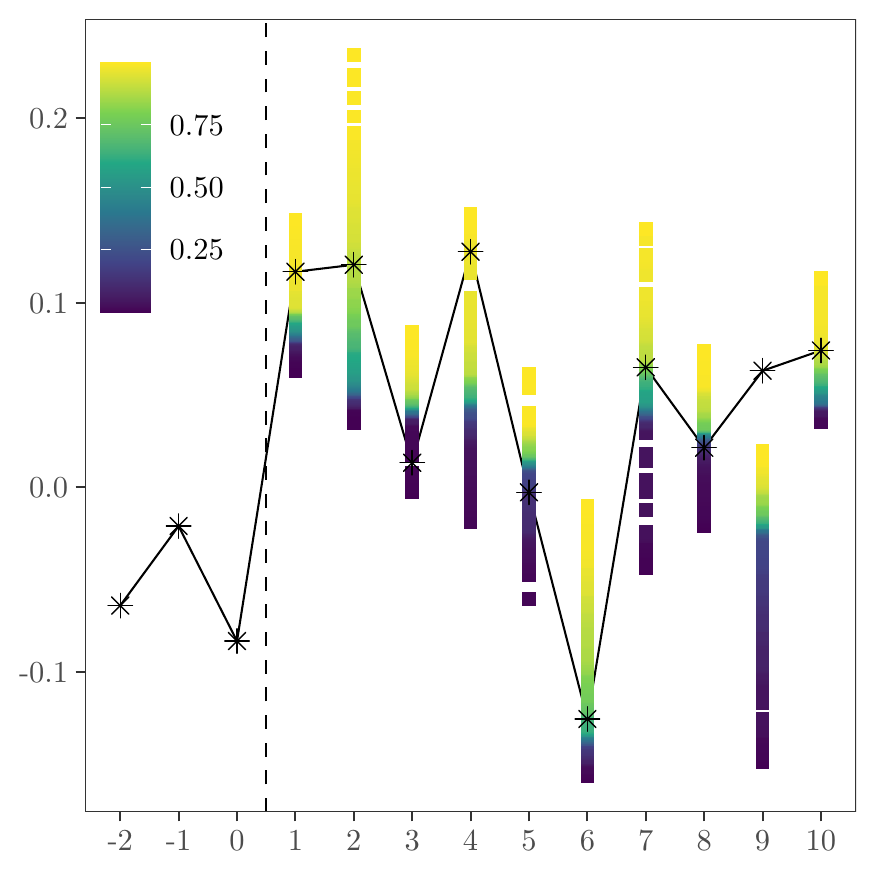}
	  \caption{NY Headquartered}
\end{subfigure}
\caption{Distributional Counterfactual Analysis for each type of connection. The solid black line is the actual returns. The color bar represents the conditional quantile given the peer's return. }\label{fig:empirical_quant}
\end{figure}

\begin{figure}[h]
\captionsetup[subfigure]{justification=centering}
\begin{subfigure}{0.32\textwidth}
    \centering
     \includegraphics[width=\textwidth]{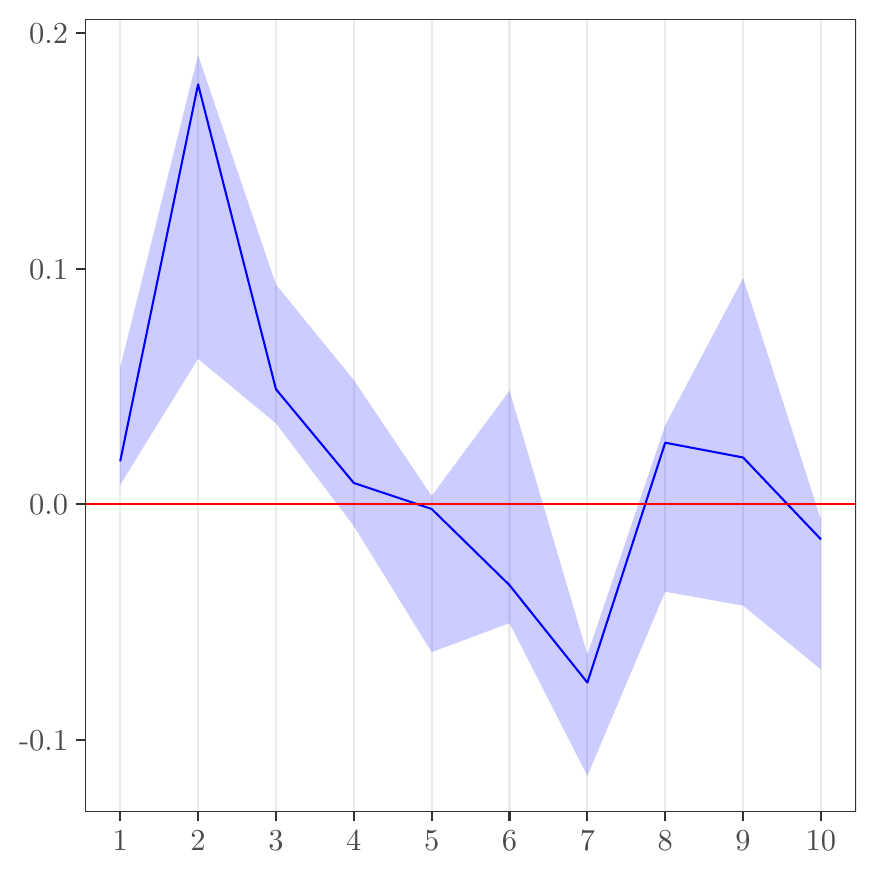}
    \caption{Personal Connection}
\end{subfigure}
\begin{subfigure}{0.32\textwidth}
    \centering
    \includegraphics[width=\textwidth]{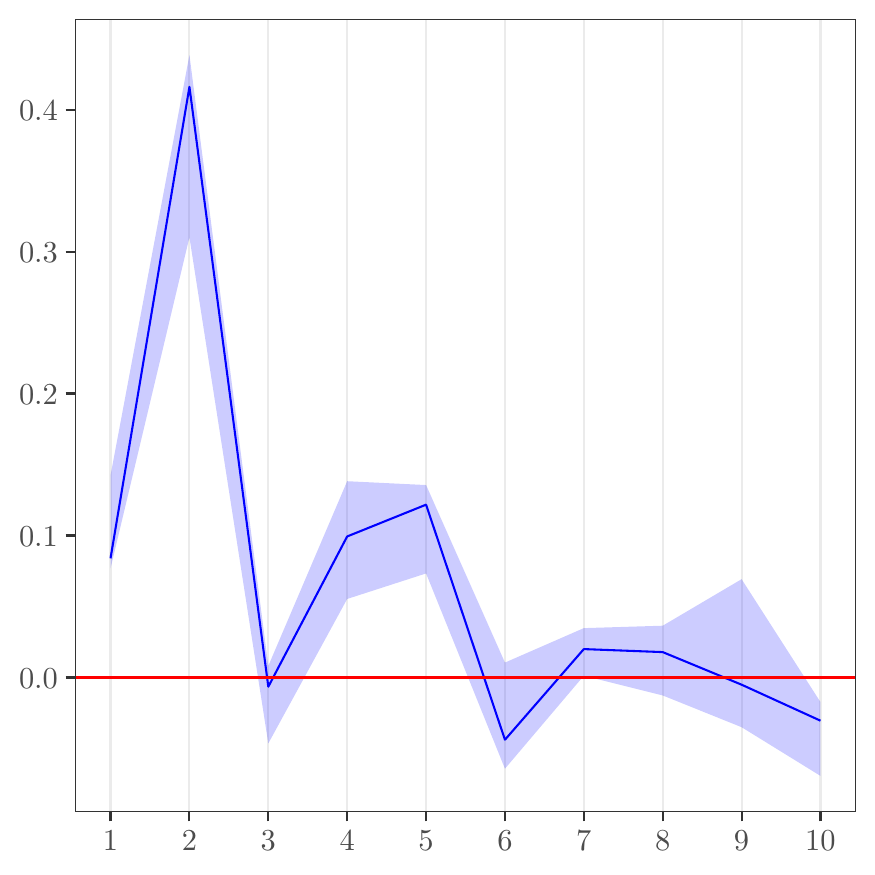}
	\caption{Schedule Connection}
\end{subfigure}
\begin{subfigure}{0.32\textwidth}
    \centering
	 \includegraphics[width=\textwidth]{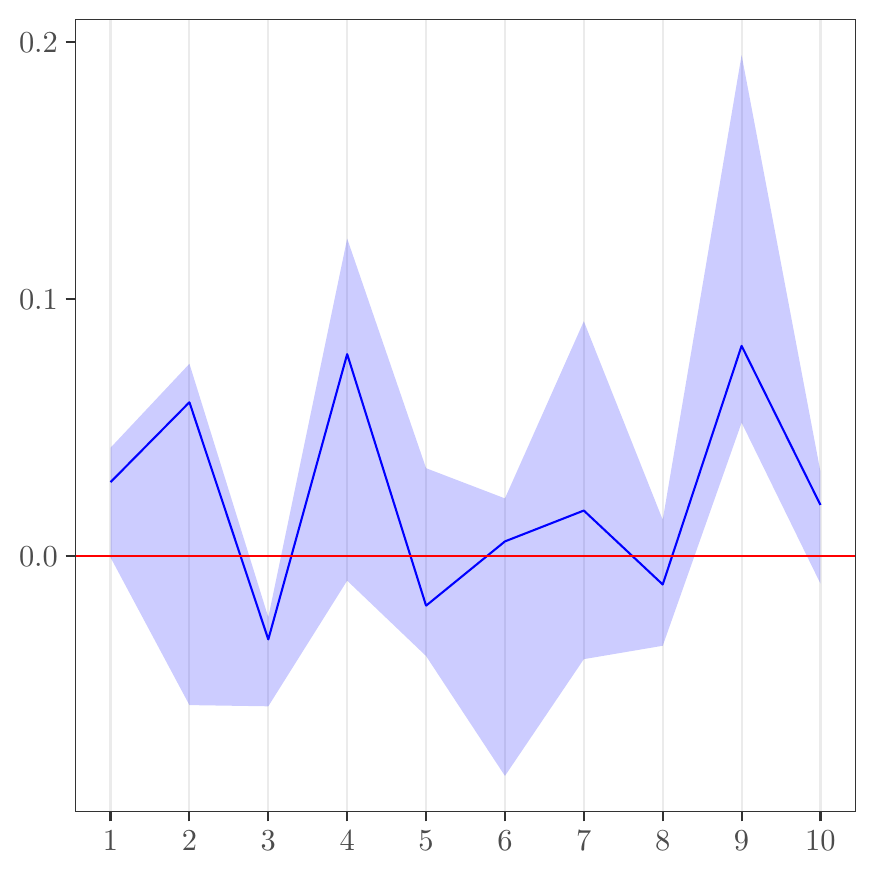}
	  \caption{NY Headquartered}
\end{subfigure}
\caption{Median and 95\% Confidence Interval for announcement effect on returns for each of the 10 post-announcement periods for the type of connection.}\label{fig:empirical_ci}
\end{figure}

Finally,  we test the null hypothesis of no effect over 2, 5, and 10 days after the announcement.  The null is tested individually for each of the 22 connected units using the test statistic \eqref{eq:test_statistic} for the norms $p\in\{1,2,\infty\}$. The p-value for each test is presented in Table \ref{tab:empirical}., and are computed via a Monte Carlo simulation using $10,000$ replications of  \eqref{eq:g_p} for each combination of $p$ and $T_1$.

In roughly half of the cases,  the null is rejected at $5\%$ across all the post-announcement windows and all $\m{L}_p$ norms considered. Overall, the rejection is more decisive for the shortest window $(T_1=2)$, which includes only the day of the leak and the official announcement. We have solid rejections for all connection types,  which indicates no systematic relation between the kind of connection and the effect on daily return.

Suppose we aggregate the results over all the connected units for each period. In that case, we conclude that the median daily return increase is about $3\%$ at day one,  around  $1\%$ at day two, and smaller than $0.3\%$ for the subsequent periods in a non-cumulative way. Interestingly, the distribution of the medians of the connected units at each given period is right-skewed, which might explain why AJKKM reports a \emph{mean} daily return increase of about $6\%$ after a full day of trading, which corresponds to day 1 in our analysis. We also have two other results using the same dataset as a comparison. The first is based on a penalized synthetic control method in \citet*{abadie2021}, and the second is obtained by applying a nonlinear factor model proposed in \citet*{feng2021causal}. The former reports a mean return increase of 6.1\% while the former reports an increase between $8$-$9\%$,  depending on the chosen parameter for the local PCA,  for the first full day of trading. 

In summary, based on these four independent analyses, there seems to be compelling evidence to support a positive effect on the daily returns for connected firms after Geithner was nominated Treasury Secretary. The degree of the impact depends on how we measure this effect (mean vs. median),  the aggregation level (all units vs. individual units), the type of connection, and the time span after the nomination (day one or cumulative over ten days).

\begin{table}[h]
\footnotesize
\centering
\begin{threeparttable}
\caption{p-values for the null hypothesis that the Geithner announcement did not affect the daily returns over $T_1$ periods after the announcement. \label{tab:empirical}}
\begin{tabular*}{\textwidth}{@{\extracolsep{\fill} } c c c c c c c c c c c c  }
\toprule
& & \multicolumn{3}{c}{$T_1 = 2$} & \multicolumn{3}{c}{$T_1 = 5$} & \multicolumn{3}{c}{$T_1 = 10$}\\
 \cmidrule{3-5} \cmidrule{6-8} \cmidrule{9-11}
  Unit ID &Connect & $p=1$ & $2$ & $\infty$ & $p=1$ & $2$ & $\infty$ & $p=1$ & $2$ & $\infty$\\
\midrule
23 & C & .347 & .369 & .491 & .216 & .247 & .344 & .043 & .046 & .077 \\ 
  40 & C & .768 & .769 & .738 & .824 & .812 & .742 & .086 & .096 & .182 \\ 
  51 & B & .039 & .051 & .065 & .598 & .675 & .856 & .141 & .100 & .097 \\ 
  58 & B,C & .000 & .000 & .000 & .054 & .050 & .030 & .046 & .042 & .031 \\ 
  68 & A & .000 & .000 & .000 & .226 & .230 & .315 & .017 & .018 & .012 \\ 
  73 & A,B & .504 & .506 & .505 & .404 & .353 & .253 & .296 & .299 & .249 \\ 
  74 & A,C & .002 & .002 & .003 & .152 & .186 & .274 & .053 & .065 & .103 \\ 
  122 & B & .000 & .000 & .000 & .002 & .002 & .001 & .035 & .016 & .005 \\ 
  131 & C & .379 & .396 & .505 & .434 & .487 & .545 & .216 & .281 & .520 \\ 
  192 & A & .001 & .001 & .002 & .007 & .003 & .001 & .006 & .005 & .015 \\ 
  197 & A & .390 & .404 & .505 & .254 & .282 & .315 & .141 & .160 & .216 \\ 
  261 & A & .000 & .000 & .000 & .008 & .006 & .005 & .047 & .046 & .028 \\ 
  309 & B,C & .665 & .666 & .588 & .544 & .503 & .378 & .373 & .396 & .364 \\ 
  323 & A & .096 & .103 & .114 & .025 & .036 & .073 & .384 & .316 & .207 \\ 
  356 & A,C & .000 & .000 & .000 & .093 & .072 & .030 & .016 & .016 & .009 \\ 
  363 & C & .023 & .027 & .032 & .162 & .158 & .143 & .061 & .058 & .086 \\ 
  379 & A & .215 & .253 & .342 & .074 & .083 & .110 & .002 & .002 & .009 \\ 
  385 & C & .578 & .579 & .532 & .084 & .068 & .031 & .175 & .150 & .067 \\ 
  386 & C & .067 & .079 & .095 & .545 & .480 & .334 & .458 & .461 & .324 \\ 
  428 & B & .175 & .214 & .288 & .467 & .452 & .353 & .074 & .050 & .014 \\ 
  441 & C & .338 & .363 & .505 & .091 & .101 & .115 & .022 & .019 & .047 \\ 
  491 & A & .002 & .002 & .003 & .018 & .013 & .012 & .029 & .028 & .021 \\ 
\bottomrule
\end{tabular*}
\begin{tablenotes}
\footnotesize
\item NB: \emph{Unit ID}: identifies the connected unit in the original dataset used by AJKKM. \emph{Connect} identifies the type of connection: (A) Personal Connection; (B) Schedule Connection; (C) if the unit's headquarters is in NYC. $T_1\in\{2,5,10\}$ is the number of post-announcement periods considered for the test. $p\in\{1,2,\infty\}$ is the $\m{L}_p$ considered to construct the test statistic \eqref{eq:test_statistic}. The reported p-values are simulated using 10,000 replications of \eqref{eq:g_p}.  For all cases, we consider $T_0=250$ pre-announcement periods and $n=561$ peers.
\end{tablenotes}
\end{threeparttable}
\end{table}

\section{Conclusion}\label{S:conclusion}

We propose a new methodology to carry out counterfactual analysis to evaluate the impact of interventions on the distribution of the variable of interest. Our approach is especially suited to situations when we observe a single (or just a few) treated unit and several potential controls. The setup is partially motivated by what is encountered in previous work. However, we depart from the standard approach of modeling only the conditional mean and model all the conditional quantiles. The methodology  also allows for the number of  units to be equal to or  much larger than the number of time periods, which is usually the case in event studies or when working with aggregated
data.

Therefore, we added to the counterfactual analysis toolkit. Moreover, the proposed methodology's estimation and test procedure can be quickly implemented using standard quantile estimation software and straightforward simulation\footnote{The Monte Carlo simulation results and the empirical application were coded in R based on the \emph{quantreg} package. The code is available from the author upon request.},  which is the reason we believe in its broad applicability to empirical work. As our empirical illustration demonstrates, virtually every empirical application that falls within the ``few treated units''  panel data setup can be re-considered under the proposed methodology to gain further insight via the treatment distributional effects. 

\newpage

\bibliographystyle{econ}

\bibliography{references}

@article{cCrMmM2016,
title = {ArCo: An artificial counterfactual approach for high-dimensional panel time-series data},
journal = {Journal of Econometrics},
volume = {207},
number = {2},
pages = {352-380},
year = {2018},
author = {Carlos V. Carvalho and Ricardo Masini and Marcelo C. Medeiros},
}

@article{vCcH2005,
  author =       "Victor Chernozhukov and Christian Hansen",
  title =        "An IV model of quantile treatment effects",
  journal =      "Econometrica",
  year =         "2005",
  volume =       "73",
  pages =        "245--261"
}

@article{cHhsCskW2012,
  title={A panel data approach for program evaluation: measuring the benefits of political and economic integration of Hong Kong with mainland China},
  author={Hsiao, Cheng and Ching, Steve H. and Wan, Shui Ki},
  journal={Journal of Applied Econometrics},
  volume={27},
  number={5},
  pages={705--740},
  year={2012},
  publisher={Wiley Online Library}
}

@article{10.2307/3598810,
 author = {Joshua Angrist and Victor Chernozhukov and Iv{\'a}n Fern{\'a}ndez-Val},
 journal = {Econometrica},
 number = {2},
 pages = {539-563},
 publisher = {[Wiley, Econometric Society]},
 title = {Quantile Regression under Misspecification, with an Application to the U.S. Wage Structure},
 volume = {74},
 year = {2006}
}

@article{chernozhukov2010quantile,
  title={Quantile and probability curves without crossing},
  author={Chernozhukov, Victor and Fern{\'a}ndez-Val, Iv{\'a}n and Galichon, Alfred},
  journal={Econometrica},
  volume={78},
  number={3},
  pages={1093--1125},
  year={2010},
  publisher={Wiley Online Library}
}

@article{acemoglu2016value,
  title={The value of connections in turbulent times: Evidence from the United States},
  author={Acemoglu, Daron and Johnson, Simon and Kermani, Amir and Kwak, James and Mitton, Todd},
  journal={Journal of Financial Economics},
  volume={121},
  number={2},
  pages={368--391},
  year={2016},
  publisher={Elsevier}
}

@misc{feng2021causal,
      title={Causal Inference in Possibly Nonlinear Factor Models}, 
      author={Feng, Yingjie},
      year={2021},
      eprint={2008.13651},
      archivePrefix={arXiv},
      primaryClass={econ.EM}
}

@article{abadie2021,
author = {Abadie,  Alberto  and L'Hour, Jeremy },
title = {A Penalized Synthetic Control Estimator for Disaggregated Data},
journal = {Journal of the American Statistical Association},
volume = {116},
number = {536},
pages = {1817-1834},
year  = {2021},
publisher = {Taylor & Francis},
}

@article{chen2020distributional,
  title={A distributional synthetic control method for policy evaluation},
  author={Chen, Yi-Ting},
  journal={Journal of Applied Econometrics},
  volume={35},
  number={5},
  pages={505--525},
  year={2020},
  publisher={Wiley Online Library}
}

@misc{fan2021bridging,
      title={Bridging factor and sparse models}, 
      author={Fan, Jianqing and Masini, Ricardo and Medeiros,  Marcelo C. },
      year={2021},
      eprint={2102.11341},
      archivePrefix={arXiv},
      primaryClass={econ.EM}
}

@article{koenker2013distributional,
  title={Distributional vs. quantile regression},
  author={Koenker, Roger and Leorato, Samantha and Peracchi, Franco},
  journal={CEIS Working Paper No 300}, 
  year={2013},
  volume ={1},
  number={1},
  publisher={CEIS Working Paper}
}

@article{belloni2011,
  title={l1-penalized quantile regression in high-dimensional sparse models},
  author={Belloni, Alexandre and Chernozhukov, Victor},
  journal={The Annals of Statistics},
  volume={39},
  number={1},
  pages={82--130},
  year={2011},
  publisher={Institute of Mathematical Statistics}
}

@book{VW96,
  title={Weak convergence and empirical processes: with applications to statistics},
  author={Van Der Vaart, Aad and Wellner, Jon},
  year={1996},
  publisher={Springer Science \& Business Media}
}

@article{fan2021we,
  title={Do we exploit all information for counterfactual analysis? benefits of factor models and idiosyncratic correction},
  author={Fan, Jianqing and Masini, Ricardo and Medeiros, Marcelo C.},
  journal={Journal of the American Statistical Association},
  number={forthcoming},
  volume={1},
  pages={1--53},
  year={2021},
  publisher={Taylor \& Francis}
}

@article{abadie2018econometricMF,
  title={Econometric methods for program evaluation},
  author={Abadie, Alberto and Cattaneo, Matias D},
  journal={Annual Review of Economics},
  volume={10},
  pages={465--503},
  year={2018},
  publisher={Annual Reviews}
}

@article{masini_jasa,
author = {Ricardo Masini and Marcelo C. Medeiros},
title = {Counterfactual Analysis With Artificial Controls: Inference, High Dimensions, and Nonstationarity},
journal = {Journal of the American Statistical Association},
volume = {116},
number = {536},
pages = {1773-1788},
year  = {2021},
publisher = {Taylor & Francis},
doi = {10.1080/01621459.2021.1964978}
}

@article{abadie2021using,
  title={Using synthetic controls: Feasibility, data requirements, and methodological aspects},
  author={Abadie, Alberto},
  journal={Journal of Economic Literature},
  volume={59},
  number={2},
  pages={391--425},
  year={2021}
}

@article{BCH2014,
Author = {Belloni, Alexandre and Chernozhukov, Victor and Hansen, Christian},
Title = {High-Dimensional Methods and Inference on Structural and Treatment Effects},
Journal = {Journal of Economic Perspectives},
Volume = {28},
Number = {2},
Year = {2014},
Month = {May},
Pages = {29-50}
}

@article{HeWangHong2013,
author = {Xuming He and Lan Wang and Hyokyoung Grace Hong},
title = {{Quantile-adaptive model-free variable screening for high-dimensional heterogeneous data}},
volume = {41},
journal = {The Annals of Statistics},
number = {1},
publisher = {Institute of Mathematical Statistics},
pages = {342 -- 369},
keywords = {feature screening, high dimension, polynomial splines, Quantile regression, randomly censored data, sure independence screening},
year = {2013},
doi = {10.1214/13-AOS1087},
URL = {https://doi.org/10.1214/13-AOS1087}
}

@Inbook{Doukhan1994,
author="Doukhan, Paul",
title="Mixing",
bookTitle="Mixing: Properties and Examples",
year="1994",
publisher="Springer New York",
address="New York, NY",
pages="15--23",
abstract="Once the measures of dependence between two $\sigma$-algebras have been introduced, various notions of mixing may be defined for general processes. A multitude of definitions of a mixing random field can be introduced, we shall focus on the simplest in {\textsection} 1.3.1. The definition of a mixing random process proposed in {\textsection} 1.3.2. is the classical one. Useful definitions from ergodic theory are recalled in {\textsection} 1.3.3. and relations between the mixing notions for processes and fields are given in {\textsection} 1.3.4.",
isbn="978-1-4612-2642-0",
doi="10.1007/978-1-4612-2642-0_3",
url="https://doi.org/10.1007/978-1-4612-2642-0_3"
}

@article{DMR_1995,
     author = {Doukhan, P. and Massart, P. and Rio, E.},
     title = {Invariance principles for absolutely regular empirical processes},
     journal = {Annales de l'I.H.P. Probabilit\'es et statistiques},
     pages = {393--427},
     publisher = {Gauthier-Villars},
     volume = {31},
     number = {2},
     year = {1995},
     zbl = {0817.60028},
     mrnumber = {1324814},
     language = {en},
     url = {http://www.numdam.org/item/AIHPB_1995__31_2_393_0/}
}

@article{zheng2015,
 ISSN = {00905364},
 URL = {http://www.jstor.org/stable/43818576},
 abstract = {Quantile regression has become a valuable tool to analyze heterogeneous covaraite-response associations that are often encountered in practice. The development of quantile regression methodology for high-dimensional covariates primarily focuses on the examination of model sparsity at a single or multiple quantile levels, which are typically prespecified ad hoc by the users. The resulting models may be sensitive to the specific choices of the quantile levels, leading to difficulties in interpretation and erosion of confidence in the results. In this article, we propose a new penalization framework for quantile regression in the high-dimensional setting. We employ adaptive L₁ penalties, and more importantly, propose a uniform selector of the tuning parameter for a set of quantile levels to avoid some of the potential problems with model selection at individual quantile levels. Our proposed approach achieves consistent shrinkage of regression quantile estimates across a continuous range of quantiles levels, enhancing the flexibility and robustness of the existing penalized quantile regression methods. Our theoretical results include the oracle rate of uniform convergence and weak convergence of the parameter estimators. We also use numerical studies to confirm our theoretical findings and illustrate the practical utility of our proposal.},
 author = {Qi Zheng and Limin Peng and Xuming He},
 journal = {The Annals of Statistics},
 number = {5},
 pages = {2225--2258},
 publisher = {Institute of Mathematical Statistics},
 title = {Globally Adaptive Quantile Regression with ultra-high Dimensional Data},
 urldate = {2023-08-21},
 volume = {43},
 year = {2015}
}

@book{buhlmann2011statistics,
  title={Statistics for High-Dimensional Data: Methods, Theory and Applications},
  author={B{\"u}hlmann, P. and van de Geer, S.},
  isbn={9783642201929},
  lccn={2011930793},
  series={Springer Series in Statistics},
  url={https://books.google.com/books?id=S6jYXmh988UC},
  year={2011},
  publisher={Springer Berlin Heidelberg}
}

@misc{gunsilius2021distributional,
      title={Distributional synthetic controls}, 
      author={Florian Gunsilius},
      year={2021},
      eprint={2001.06118},
      archivePrefix={arXiv},
      primaryClass={econ.EM}
}

\pagebreak

\begin{center}
\textbf{\large Supplemental Material: \textit{Distributional Counterfactual Analysis in High-Dimensional Setup}}
\end{center}
\vspace{1cm}

This supplemental material is organized into two appendices. Appendix \ref{App:proofs} contains the proofs of the main results, Theorem \ref{thm:main} and \ref{T:test} stated in Section \ref{S:Results} and its supporting lemmas. Appendix \ref{App:figures} collects additional figures from the empirical exercises described in Section \ref{S:application}.

\setcounter{equation}{0}
\setcounter{section}{0}
\setcounter{figure}{0}
\setcounter{table}{0}
\setcounter{page}{1}
\makeatletter
\renewcommand{\theequation}{S\arabic{equation}}
\renewcommand{\thefigure}{S\arabic{figure}}
\renewcommand{\bibnumfmt}[1]{[S#1]}
\renewcommand{\citenumfont}[1]{S#1}
\appendix

\section{Proofs}\label{App:proofs}

\subsection{Proof of Theorem \ref{thm:main}} 

The proof structure is as follows. The main argument relies on two key results. The first one is a determinist oracle-bound (Lemma \ref{lem:deterministic_part})  while the second control of the empirical process part (Lemma \ref{lem:control_empirical_process}). The latter, in turn, depends on the approximation of a beta-mixing sequence to an independent sequence (Lemma \ref{lem:mixing_to_independence}), which is based on Berbee's Coupling Lemma (refer to Proposition 2 in \citet*{DMR_1995}).

Before we start, let's prepare some additional notation. For $M,L,\lambda_0>0$ and $\m{T}$, define
\begin{align*}
    \m{E}&:=\m{E}_{M,\m{T}} := \sup|(\P_T - \P)(\rho_\tau(\theta) -\rho_\tau(\theta_0(\tau)))|\\
    \mathscr{A}&:=\mathscr{A}_{M,\m{T},L,\lambda_0}:=\{\m{E}_{M,\m{T}}\leq \lambda_0M, \max_{j,t}|X_{j,t}|\leq L\},
\end{align*}
where the supremum is taken over $\tau\in\m{T}$ and $\|\theta-\theta_0(\tau)\|_1\leq M$, and we use the empirical process standard notation $\P_T g = T_0^{-1}\sum_{t=1}^{T_0} g$, $\P =  T_0^{-1}\E\sum_{t=1}^{T_0} g $ for $\P$-measurable $g$ and $\rho_\tau(\theta) :=\rho_\tau(Y_t-X_t'\theta)$ for $1\leq t\leq T_0$ and $\theta\in\R^n$ for short.

For $\vartheta\geq 1$, set $L=\vartheta B \psi^{-1}( nT_0)$ and $\lambda_0=8\sqrt{2} \vartheta L\sqrt{a\log (4an)/T_0}$, then Lemma \ref{lem:control_empirical_process} yields
\[
\P\left(\m{E}> \lambda_0 M, \max_{j,t} |X_{t,j}|\leq L\right)\leq (4 an)^{1-\vartheta^2} + \frac{T_0}{a}\beta_a.
\]
The Markov inequality, a maximum inequality (Lemma 2.2.2 in \citet*{VW96}) and Assumption \ref{A:sampling_tail}(a) give us
\[
\P\left(\max_{t,j} |X_{t,j}|>L\right)\leq\frac{\E[\max_{j,t} |X_{t,j}|]}{L}\leq  \frac{C_\psi\max_{t,j}\|X_{t,j}\|_\psi}{\vartheta B}\leq \frac{C_\psi}{\vartheta},
\]
for some constant $C_\psi>0$ only depending on $\psi$.

Therefore, combining the last two displays and the union bound
\[
\P\left(\mathscr{A}^c\right)\leq(4 an)^{1-\vartheta^2} + \frac{T_0}{a}\beta_a + \frac{C_\psi}{\vartheta}.
\]
Set $M = M^*:=\frac{8\sqrt{2}\lambda^2s_0}{\lambda_0\varphi^2\underline{f}}$ and Lemma \ref{lem:deterministic_part} allow us to conclude that
\[
\P\left(\sup_{\tau\in\m{T}}\|\widehat{\theta}(\tau) - \theta_0(\tau)\|_1\gs \lambda s_0\right) \ls (an)^{1-\vartheta^2} + \frac{T_0}{a}\beta_a + \frac{1}{\vartheta}.
\]
provided that $\lambda\geq 8\lambda_0$ and $\frac{\lambda^2 s_0L}{\lambda_0}\leq\frac{3\underline{f}^2\varphi^2}{64\overline{f}}$.
Finally set $a=\lceil\beta^{-1}(1/(\vartheta T_0))\rceil$ to obtain \eqref{eq:main_thm_res_1} from the last expression since
\[
\P\left(\sup_{\tau\in\m{T}}\|\widehat{\theta}(\tau) - \theta_0(\tau)\|_1\gs\vartheta^2 B\psi^{-1}(nT_0)s_0\sqrt{\frac{(\beta^{-1}([\vartheta T_0)]^{-1})\lor 1)\log n}{T_0}}\right)  \ls n^{1-\vartheta^2} + \frac{1}{\vartheta}.
\]
 
For the second result, by the maximum inequality (Lemma 2.2.2 in \citet*{VW96}) and Assumption \ref{A:sampling_tail}(a),  $\|X_t\|_\infty\leq bB\psi^{-1}(n)$ with probability at least $1-1/b$ for any $b>0$. Then \eqref{eq:main_thm_res_2} follows because
\begin{align*}
\sup_{\tau\in\m{T}}|\h{Q}(\tau|X_t) - Q(\tau|X_t)|&\leq \sup_{\tau\in\m{T}}|\t{Q}(\tau|X_t) - Q(\tau|X_t)|\\
&=\sup_{\tau\in\m{T}}|X_t'\big[\widehat{\theta}(\tau)-\theta_0(\tau)\big]|\\
&\leq \|X_t\|_\infty \sup_{\tau\in\m{T}}\|\widehat{\theta}(\tau)-\theta_0(\tau)\|_1 \\
&\leq bB\psi^{-1}(n) \lambda s_0,
\end{align*}
with probability at least $ 1 - ( n^{1-\vartheta^2} + \frac{1}{\vartheta} + b^{-1})$. The first inequality follows from Proposition 4 in \citet*{chernozhukov2010quantile} with $p=\infty$, the second one from H{\"o}lder's inequality, and the last one by the first result and $\lambda$ set as stated in the Theorem.

For the last result, since $\tau =\P[Y_t(0)\leq Q(\tau|X_t)]$, we use Lemma \ref{L:useful} with $X$ equals to $U_t(\tau):=Y_t - \Q(\tau|X_t)$, $Y$ equals to $ \Q(\tau|X_t) -  \widehat{\Q}(\tau|X_t)$, $\eta=0$, and write for every $\tau\in\m{T}$ and $\delta>0$,
\[
|\P[Y_t\leq \widehat{Q}(\tau|X_t)] -\tau |\leq \P(|U_t(\tau)|\leq\delta) + \P(| Q(\tau|X_t) -  \h{Q}(\tau|X_t)|\geq \delta).
\]
Since $f(y|x)$ is bounded by Assumption \ref{A:conditional_density}(a), the first term is upper bounded by $2\delta\bar{f}$ uniformly in $\tau\in\m{T}$.  Set $\delta = \nu b\psi^{-1}(n)r$ and use the previous result to conclude the for all $\vartheta\geq 1$ and $b>0$ and some constant $C>0$
\[
|\P[Y_t\leq \widehat{Q}(\tau|X_t)] -\tau |\leq C\big[  n^{1-\vartheta^2} + \frac{1}{\vartheta}+ \frac{1}{b} + b B \lambda s_0 \psi^{-1}(n)\big].
\]
Taking the infimum of the right-hand size over $\vartheta\geq 1$ and $b>0$ concludes the proof of \eqref{eq:main_thm_res_3}.

 \begin{lemma}[\textbf{Deterministic Oracle Bound}]\label{lem:deterministic_part} On $\mathscr{A}^*:=\{\m{E}_{M^*,\m{T}}\leq \lambda_0M^*, \max_{j,t}|X_{j,t}|\leq L\}$ for $\lambda\geq 8\lambda_0$ and $\frac{\lambda^2 s_0L}{\lambda_0}\leq\frac{3\underline{f}^2\varphi^2}{64\overline{f}}$
\begin{equation}
\sup_{\tau\in\m{T}}\m{R}_\tau(\widehat{\theta})+\lambda\sup_{\tau\in\m{T}}\|\widehat{\theta}(\tau) - \theta_0(\tau)\|_1\leq \frac{64\lambda^2s_0}{\underline{f}\varphi^2},
\end{equation}
where $M^*:=\frac{8\sqrt{2}\lambda^2s_0}{\lambda_0\varphi^2\underline{f}}$ and $\m{R}_\tau(\theta):=\P (\rho_\tau(\theta) - \rho_\tau(\theta_0(\tau)))$ for $\theta\in\R^n$  and $\tau\in\m{T}$.
\end{lemma}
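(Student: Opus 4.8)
\medskip

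The plan is to run the standard localized analysis for $\ell_1$-penalized convex $M$-estimation. Fix $\tau\in\m{T}$ and set $\delta:=\h{\theta}(\tau)-\theta_0(\tau)$. First I reduce to $\|\delta\|_1\le M^*$ by convexity: since $\theta\mapsto \P_T\rho_\tau(\theta)+\lambda\|\theta\|_1$ is convex with minimiser $\h{\theta}(\tau)$, its value at the truncated point $\h{\theta}_{M^*}(\tau):=\theta_0(\tau)+\big(1\wedge(M^*/\|\delta\|_1)\big)\delta$ is no larger than at $\theta_0(\tau)$; I carry the whole argument with $\h{\theta}_{M^*}(\tau)$ — so its $\ell_1$-distance to $\theta_0(\tau)$ is at most $M^*$ and $\mathscr{A}^*$ controls the empirical part via $\m{E}_{M^*,\m{T}}\le\lambda_0M^*$ — and only at the end verify that the resulting $\ell_1$-bound is strictly below $M^*$, forcing $\h{\theta}_{M^*}(\tau)=\h{\theta}(\tau)$. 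Retaining $\delta$ for $\h{\theta}_{M^*}(\tau)-\theta_0(\tau)$, the basic inequality reads $\P_T(\rho_\tau(\h{\theta}_{M^*})-\rho_\tau(\theta_0))\le\lambda(\|\theta_0(\tau)\|_1-\|\h{\theta}_{M^*}(\tau)\|_1)$; adding and subtracting $\P$, bounding $|(\P-\P_T)(\cdots)|\le\lambda_0M^*$ on $\mathscr{A}^*$, using the elementary bound $\|\theta_0(\tau)\|_1-\|\h{\theta}_{M^*}(\tau)\|_1\le\|\delta_{\m{S}_\tau}\|_1-\|\delta_{\m{S}_\tau^c}\|_1$ (with $\m{S}_\tau$ the support of $\theta_0(\tau)$), and adding $\lambda\|\delta\|_1$ to both sides yields
\[
\m{R}_\tau(\h{\theta}_{M^*})+\lambda\|\delta\|_1\ \le\ 2\lambda\|\delta_{\m{S}_\tau}\|_1+\lambda_0M^*,
\]
where $\m{R}_\tau\ge0$ because $\theta_0(\tau)$ minimises the population check loss.

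Next I split into two regimes. If $\lambda_0M^*\ge 2\lambda\|\delta_{\m{S}_\tau}\|_1$, the display is at most $2\lambda_0M^*=16\sqrt{2}\,\lambda^2s_0/(\varphi^2\underline{f})\le 64\lambda^2s_0/(\underline{f}\varphi^2)$ and we are done for this $\tau$. Otherwise, subtracting $\lambda\|\delta_{\m{S}_\tau}\|_1$ and discarding $\m{R}_\tau\ge0$ gives $\|\delta_{\m{S}_\tau^c}\|_1\le 3\|\delta_{\m{S}_\tau}\|_1$, i.e.\ $\delta\in\m{C}_\tau$, so Assumption \ref{A:comp_cond} yields $\|\delta_{\m{S}_\tau}\|_1^2\le |\m{S}_\tau|\,\delta'\Sigma\delta/\varphi^2\le s_0\,\delta'\Sigma\delta/\varphi^2$.

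The remaining ingredient — and the crux of the proof — is a restricted strong-convexity (minoration) bound for the quantile excess risk, uniform in $\tau\in\m{T}$. Using Knight's identity and taking conditional expectations given $X_t$, the linear term vanishes (the conditional $\tau$-quantile of $Y_t-X_t'\theta_0(\tau)$ is $0$) and $\m{R}_\tau(\theta_0(\tau)+\delta)=T_0^{-1}\sum_t\E\int_0^{X_t'\delta}\big(F(Q(\tau|X_t)+s\mid X_t)-\tau\big)\,ds$; a Taylor expansion of $F(\cdot\mid X_t)$ together with Assumption \ref{A:conditional_density} (density at least $\underline{f}$ at the conditional quantile, $\dot{f}$ at most $\bar{f}$) produces a quadratic main term $\gtrsim\underline{f}\,\delta'\Sigma\delta$ minus a cubic remainder of order $\bar{f}\,\E|X_t'\delta|^3$, and the side hypothesis $\lambda^2s_0L/\lambda_0\le 3\underline{f}^2\varphi^2/(64\bar{f})$ — equivalently $\bar{f}LM^*\lesssim\underline{f}$ — is exactly what lets the quadratic term absorb the remainder (after, if needed, passing through a truncated form $\E[(X_t'\delta)^2\wedge c\,|X_t'\delta|]$ of the quadratic form and using the fourth moments from Assumption \ref{A:sampling_tail}(a) to discard the heavy-tail part on the cone). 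This delivers $\m{R}_\tau(\h{\theta}_{M^*})\gtrsim\underline{f}\,\delta'\Sigma\delta$; combined with the compatibility bound, $\|\delta_{\m{S}_\tau}\|_1\lesssim\sqrt{s_0\,\m{R}_\tau(\h{\theta}_{M^*})/(\underline{f}\varphi^2)}$, so in the signal-dominated regime the display above becomes a scalar inequality $a^2+\lambda\|\delta\|_1\lesssim\lambda\sqrt{s_0/(\underline{f}\varphi^2)}\,a$ in $a:=\sqrt{\m{R}_\tau(\h{\theta}_{M^*})}$; absorbing a fixed fraction of $a^2$ into the left side, applying $2\sqrt{AB}\le A+B$ to the cross term, and using $\lambda\ge 8\lambda_0$, one obtains $\m{R}_\tau(\h{\theta}_{M^*})+\lambda\|\delta\|_1\le 64\lambda^2s_0/(\underline{f}\varphi^2)$.

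To close, the bound just obtained gives $\|\delta\|_1\le 64\lambda s_0/(\underline{f}\varphi^2)$, which under $\lambda\ge 8\lambda_0$ is strictly below $M^*=8\sqrt{2}\,\lambda^2s_0/(\lambda_0\varphi^2\underline{f})$; hence the truncation is inactive, $\h{\theta}_{M^*}(\tau)=\h{\theta}(\tau)$, and the bound holds for $\h{\theta}(\tau)$. Since $\lambda_0$, $M^*$, $\varphi$, $\underline{f}$, $\bar{f}$ and the event $\mathscr{A}^*$ (through $\m{E}_{M^*,\m{T}}$, already a supremum over $\tau$) do not depend on $\tau$, taking the supremum over $\tau\in\m{T}$ gives the stated inequality. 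I expect the main obstacle to be exactly the minoration step: one must reconcile the population expectation defining $\m{R}_\tau$ with the in-sample envelope $\max_{j,t}|X_{j,t}|\le L$ — i.e.\ control the contribution of large $|X_t'\delta|$ to the quadratic form — while tracking the joint constraint among $M^*$, $L$, $\bar{f}/\underline{f}$ and $\varphi$ so that the quadratic term genuinely dominates the cubic remainder; this is the role of the side condition on $\lambda^2s_0L/\lambda_0$ and where the numerical constants come from.
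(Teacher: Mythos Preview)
Your proposal is correct and follows essentially the same route as the paper: convex localization to an $\ell_1$-ball of radius $M^*$, the basic inequality on $\mathscr{A}^*$, a two-regime split according to whether $\lambda_0M^*$ or $2\lambda\|\delta_{\m{S}_\tau}\|_1$ dominates, compatibility plus a quadratic minoration of $\m{R}_\tau$ in the signal-dominated case (the paper uses a direct second-order Taylor expansion of the expected check loss rather than Knight's identity, but these are equivalent), and a closing step showing the localization is inactive so the bound transfers to $\widehat{\theta}(\tau)$.

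One remark: you rightly flag the tension between the \emph{population} expectation defining $\m{R}_\tau$ and the \emph{sample} envelope $\max_{j,t}|X_{j,t}|\le L$ used to control the cubic remainder. The paper's proof in fact glosses over this point --- it bounds $\E|X_t'\delta|^3$ by $\max_{t\le T_0}\|X_t'\delta\|_\infty\cdot\E(X_t'\delta)^2$ and then invokes $\max_{j,t}|X_{j,t}|\le L$ as if the design were fixed --- so your caution there is warranted, and your suggested remedies (truncation of the quadratic form, or an appeal to the tail assumption \ref{A:sampling_tail}(a)) are the natural ways to make that step rigorous.
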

\begin{proof}[Proof of Lemma \ref{lem:deterministic_part}] The proof parallels the oracle inequalitiy proofs for general convex loss functions (refer to Ch. 6 of \citet*{buhlmann2011statistics}). The difference lies in the control of the bounds uniformly in $\tau\in \m{T}$. Define the convex combination $\widetilde{\theta}(\tau):= \alpha \widehat{\theta}(\tau) + (1-\alpha)\theta_0(\tau)$ with 
\[
\alpha := \frac{M^*}{M^*+\sup_{\tau\in\m{T}}\|\widehat{\theta}(\tau)-\theta_0(\tau)\|_1}\in[0,1].
\]
Then
\[
\sup_{\tau\in\m{T}}\|\widetilde{\theta}(\tau)-\theta_0(\tau)\|_1= \alpha\sup_{\tau\in\m{T}}\|\widehat{\theta}(\tau)-\theta_0(\tau)\|_1\leq M^*.
\]

Set $H_\tau(\theta):=\P_{T_0} \rho_\tau(\theta) + \lambda\|\theta\|_1$ for $\theta\in\R^n$ and $\tau\in\m{T}$. By the optimality of $\widehat{\theta}(\tau)$ we have $H_\tau(\widehat{\theta}(\tau))\leq H_\tau(\theta_0(\tau))$ for all $\tau\in\m{T}$. Moreover, since $H_\tau$ is convex
\[
H_\tau(\widetilde{\theta})\leq \alpha H_\tau(\widehat{\theta}(\tau)) + (1-\alpha)H_\tau(\theta_0(\tau))\leq H_\tau(\theta_0(\tau)).
\]
Which can be rewritten as
\[
\P (\rho_\tau(\widetilde{\theta}(\tau) - \rho_\tau(\theta_0(\tau))) + \lambda\|\widetilde{\theta}(\tau)\|_1\leq (\P_{T_0} -\P)[\rho_\tau(\theta_0(\tau)) - \rho_\tau(\widetilde{\theta}(\tau))] + \lambda\|\theta_0(\tau)\|_1.
\]
Then on the event $\mathscr{A}^*$ we have
\[
    \m{R}_\tau(\widetilde{\theta}(\tau)) + \lambda\|\widetilde{\theta}(\tau)\|_1\leq \lambda_0 M^* + \lambda\|\theta_0(\tau)\|_1.
\]
Decomposing $\theta = \theta_S -\theta_{S^c}$ for $S\subset\{1,\dots,n\}$ where $\theta_S$ is the vector $\theta$ with the entries not in $S$ set to zero.
\begin{equation}\label{eq:basic_oracle_inequality}
\m{R}_\tau(\widetilde{\theta}(\tau)) + \lambda\|\widetilde{\theta}_{S^c}(\tau)\|_1\leq \lambda_0 M^* + \lambda\|\widetilde{\theta}_{S}(\tau) - \theta_0(\tau)\|_1.
\end{equation}

We separate the analysis of the last expression into two cases: (i) if $\lambda_0 M^*\leq 2\lambda\|\widetilde{\theta}_{S}(\tau) - \theta_0(\tau)\|_1$ then
\[
\m{R}_\tau(\widetilde{\theta}(\tau)) + \lambda\|\widetilde{\theta}_{S^c}(\tau)\|_1\leq 3\lambda\|\widetilde{\theta}_{S}(\tau) - \theta_0(\tau)\|_1,
\]
hence $\widetilde{\theta}(\tau) - \theta_0(\tau)\in\m{C}_\tau$ for $\tau\in\m{T}$ and the compatibility condition (Assumption \ref{A:comp_cond}) give us $\|\widetilde{\theta}_{S}(\tau) - \theta_0(\tau)\|_1\leq \sqrt{s_0}\| \Sigma^{1/2}(\widetilde{\theta}(\tau) - \theta_0(\tau))\|_2/\varphi$. Also, a second-order Taylor expansion 
\begin{align*}
    \m{R}_\tau(\widetilde{\theta}(\tau)) &= \frac{1}{T_0}\sum_{t=1}^{T_0}\E\left\{ \frac{f(Q(\tau|X_t)|X_t)}{2}(X_t'(\widetilde{\theta}(\tau) - \theta_0(\tau))^2 \right.\\
    &\qquad+\left. \frac{\dot{f}(\widetilde{Y}_t|X_t)}{6}(X_t'(\widetilde{\theta}(\tau) - \theta_0(\tau))^3 ]\right\}\\
    &\geq \left[\frac{\underline{f}}{2} - \frac{\bar{f}}{6}\max_{t\leq T_0}\|X_t'(\widetilde{\theta}(\tau)-\theta_0(\tau))\|_\infty \right]\| \Sigma^{1/2}(\widetilde{\theta}(\tau) - \theta_0(\tau))\|_2^2,
\end{align*}
where we use Assumption \ref{A:conditional_density} to obtain the last inequality.

Then on $\sup_{\tau\in\m{T}}\max_{t\leq T_0}\|X'_t(\widetilde{\theta}(\tau)-\theta_0(\tau))\|_\infty \leq \eta$ we have $\inf_{\tau\in\m{T}}\m{R}_\tau(\theta)\geq \big[\frac{\underline{f}}{2} - \frac{\bar{f}\eta}{6}\big]\| \Sigma^{1/2}(\widetilde{\theta}(\tau) - \theta_0(\tau))\|_2^2 $. Also $\sup_{\tau\in\m{T}}\max_{t\leq T_0}\|X'_t(\widetilde{\theta}(\tau)-\theta_0(\tau))\|_\infty\leq \sup_{\tau\in\m{T}}\|\widehat{\theta}(\tau)-\theta_0(\tau)\|_1\max_{t,j}|X_{t,j}\|\leq M^*L$. So if set $\eta = \frac{3\underline{f}}{2\overline{f}}$ we have $M^*L\leq \eta$ because by assumption $\frac{\lambda^2 s_0L}{\lambda_0}\leq\frac{3\underline{f}^2\varphi^2}{64\overline{f}}$. Therefore $\inf_{\tau\in\m{T}}\m{R}_\tau(\theta)\geq \underline{f}/4 \| \Sigma^{1/2}(\widetilde{\theta}(\tau) - \theta_0(\tau))\|_2^2$ for $\theta$ such that $\sup_{\tau\in\m{T}}\|\theta - \theta_0(\tau)\|_1\leq M^*$. Using the fect that $H(u):=u^2/(4c)$ is the convex conjugate of $G(u) := c u^2$ for $c>0$ we have
\[
\frac{2\lambda\sqrt{s_0}\| \Sigma^{1/2}(\widetilde{\theta}(\tau) - \theta_0(\tau))\|_2}{\varphi}\leq \frac{4\lambda^2 s_0}{\kappa\varphi} + \frac{\m{R}_\tau(\widetilde{\theta}(\tau))}{2}.
\]
Then \eqref{eq:basic_oracle_inequality} yields
\[
\frac{\m{R}_\tau(\widetilde{\theta})}{2} + \lambda\|\widetilde{\theta}(\tau) - \theta_0(\tau)\|_1\leq \lambda_0 M^* + \frac{2\lambda\sqrt{s_0}\| \Sigma^{1/2}(\widetilde{\theta}(\tau) - \theta_0(\tau))\|_2}{\varphi}\leq  2\lambda_0 M^*.
\]

(ii) If $\lambda_0 M^*> 2\lambda\|\widetilde{\theta}_{S}(\tau) - \theta_0(\tau)\|_1$ then \eqref{eq:basic_oracle_inequality} implies that $\m{R}_\tau(\widetilde{\theta}(\tau)) + \lambda\|\widetilde{\theta}_{S^c}(\tau)\|_1\leq 3\lambda_0M^*$ and hence 
\[
\m{R}_\tau(\widetilde{\theta}(\tau)) + \lambda\|\widetilde{\theta}(\tau) - \theta_0(\tau)\|_1\leq 4\lambda_0M^*.
\]

Notice that in both cases the last two displays imply that $\|\widetilde{\theta}(\tau) - \theta_0(\tau)\|_1\leq 4\frac{\lambda_0}{\lambda} M^*\leq M^*/2$ since $\lambda\geq 8\lambda_0$ by assumption, which in turn implies that $\|\widetilde{\theta}(\tau) - \theta_0(\tau)\|_1\leq  M^*$. Therefore, we may repeat the arguments above with $\widetilde{\theta}(\tau)$ replaced by $\widehat{\theta}(\tau)$, which gives us the final result.
\end{proof}

\begin{lemma}[\textbf{Empirical Process Uniform Probability Bound}]\label{lem:control_empirical_process} Suppose that $\max_{1\leq t\leq T_0}\|X_t\|_\infty\leq L$ for some $L>0$, then for $x>0$ and $M>0$
\begin{equation}
\P(\m{E}>x)\leq \inf\left\{4 an\exp\left[-\left(\frac{x}{8\sqrt{2}ML}\sqrt{\frac{T_0}{a}}\right)^2\right] + \frac{T_0}{a}\beta_a\right\},
\end{equation}
where the infimum is taken over $a\in \{1,2,\dots, \lfloor n/2 \rfloor \}$.
\end{lemma}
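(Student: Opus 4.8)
The plan is to first replace the $\beta$-mixing array by an independent one, via blocking and Berbee-type coupling, and then to bound the resulting independent empirical process by a union bound over a finite set that exploits the $\ell_1$-geometry of the constraint, combined with an elementary sub-Gaussian (Hoeffding) estimate at each point of that set. Concretely, partition $\{1,\dots,T_0\}$ into $\lceil T_0/a\rceil$ consecutive blocks of length $a$ and split them into an ``odd'' and an ``even'' family as in the classical construction; applying the coupling lemma (Lemma~\ref{lem:mixing_to_independence}) to each family replaces its block array by an independent copy differing from the original with probability at most $\tfrac{T_0}{2a}\beta_a$, so that up to a total additive cost $\tfrac{T_0}{a}\beta_a$ we may assume the block sums are mutually independent.

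It remains to bound $\P(\m{E}>x)$ for the independent array. For fixed $\tau$, the map $\theta\mapsto\rho_\tau(Y_t-X_t'\theta)$ is convex and piecewise linear, hence so is $\theta\mapsto\P_{T_0}\rho_\tau(\theta)$; together with the $1$-Lipschitz property of the check function, which gives $|\rho_\tau(Y_t-X_t'\theta)-\rho_\tau(Y_t-X_t'\theta')|\le\|X_t\|_\infty\|\theta-\theta'\|_1\le L\|\theta-\theta'\|_1$ on the event $\{\max_t\|X_t\|_\infty\le L\}$, and with the quadratic curvature bound $\P(\rho_\tau(\theta_0(\tau)+\eta)-\rho_\tau(\theta_0(\tau)))\le\tfrac{\bar f}{2}\E(X_t'\eta)^2$ implied by Assumption~\ref{A:conditional_density} and the fact that $\theta_0(\tau)$ minimizes the population objective, the supremum over the $\ell_1$-ball $\{\|\theta-\theta_0(\tau)\|_1\le M\}$ of $|(\P_{T_0}-\P)(\rho_\tau(\theta)-\rho_\tau(\theta_0(\tau)))|$ is controlled, up to an additive term of order $\bar f M^2 L^2$ that is negligible at the relevant scale, by the maximum over the $2n$ signed extreme points $\theta_0(\tau)\pm M e_j$ (for the ``hard'' direction one supplements the convexity argument with an $\ell_1$-Maurey sparsification / symmetrization). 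The supremum over $\tau\in\m{T}$ is reduced to a finite grid using Lipschitzness of $\tau\mapsto\rho_\tau(Y_t-X_t'\theta_0(\tau))$, which follows from the Lipschitz bound on $\tau\mapsto\theta_0(\tau)$ in Assumption~\ref{A:CGF_smooth}(b); the grid has polynomial size in $T\vee n$ and therefore enters only logarithmically.

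At a fixed signed vertex and a fixed $\tau$ in the grid, each centered summand $\rho_\tau(Y_t-X_t'(\theta_0(\tau)\pm M e_j))-\rho_\tau(Y_t-X_t'\theta_0(\tau))$ lies in an interval of length $\le 2ML$; writing the centered partial sum as a sum of the $\sim T_0/a$ independent block sums produced above, each of range $\lesssim aML$, Hoeffding's inequality gives a bound of the form $2\exp(-c(x/ML)^2 T_0/a)$. A union bound over the $\le 2n$ signed directions, the two block families and the $\tau$-grid — absorbing the grid logarithm into the numerical constant — then yields the prefactor $4an$ together with the stated exponent, and one finally takes the infimum over admissible block lengths $a\in\{1,\dots,\lfloor n/2\rfloor\}$.

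The main obstacle is the second step: one must bound the supremum of this difference-of-convex empirical process over the entire $\ell_1$-ball, in both tails and uniformly in $\tau$, while keeping the prefactor polynomial in $n$ — a naive $\ell_1$-net would cost $\exp(\Omega(n))$ points. Exploiting the $\ell_1$-geometry (extreme points / Maurey sparsification) together with the convexity of the empirical objective and the quadratic curvature of the population objective is what makes a poly-$n$ bound possible, and one must also check that the extra uniformity over $\tau\in\m{T}$ costs only a logarithmic factor.
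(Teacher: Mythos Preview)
Your coupling/blocking step agrees with the paper, but the second step has two genuine gaps. First, the reduction to the $2n$ extreme points of the $\ell_1$-ball is only clean for the upper tail: the empirical process is a \emph{difference} of convex functions of $\theta$, and for the lower tail you are maximizing a concave function over a polytope, which need not be attained at a vertex. Maurey sparsification does not obviously close this, since it controls $\ell_2$- rather than the $\ell_1$-approximation error that the Lipschitz bound on $\rho_\tau$ would need; and if by ``symmetrization'' you mean Rademacher symmetrization plus contraction, that is the paper's argument, not a supplement to yours. Second, a $\tau$-grid of size $(T\vee n)^c$ contributes a multiplicative $(T\vee n)^c$ to the union bound, i.e.\ an additive $c\log(T\vee n)$ in the exponent; this is not a numerical constant and cannot be ``absorbed'' into the $8\sqrt{2}$, so you do not recover the stated prefactor $4an$. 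Along the way you invoke Assumptions~\ref{A:CGF_smooth}(b) and~\ref{A:conditional_density} and incur an additive $\bar f M^2L^2$ term, none of which appear in the lemma as stated.

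The paper's route avoids all three issues with one device. After symmetrization it applies the Lipschitz bound $|\rho_\tau(u)-\rho_\tau(v)|\le 2|u-v|$ together with a Rademacher contraction step to pass to $\sum_k\varepsilon_k\sum_{t\in H_k}|X_t'\delta(\tau)|$; using that $\varepsilon_k|X_t'\delta|$ and $\varepsilon_k X_t'\delta$ are equidistributed, H\"older's inequality then gives the bound $\|\delta(\tau)\|_1\,\bigl\|\sum_k\varepsilon_k X^{(k)}\bigr\|_\infty\le M\,\bigl\|\sum_k\varepsilon_k X^{(k)}\bigr\|_\infty$. The right-hand side no longer depends on $\tau$ (nor on $\theta$ beyond the constraint $\|\delta\|_1\le M$), so the suprema over $\m{T}$ and over the $\ell_1$-ball both collapse for free---no grid, no extreme-point reduction, no curvature assumption---and a single union bound over the $n$ coordinates of $X$ plus a sub-Gaussian moment-generating-function estimate finishes.
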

\begin{proof}[Proof of Lemma \ref{lem:control_empirical_process}]
Recall that $\rho_\tau(\theta) :=\rho_\tau(Y_t-X_t'\theta)$ for $1\leq t\leq T_0$ and $\theta\in\R^n$ and $\m{E} := \sup|(\P_{T_0} - \P)(\rho_\tau(\theta) -\rho_\tau(\theta_0(\tau)))|$,
where the supremum is taken over $\tau\in\m{T}$ and $\|\theta-\theta_0(\tau)\|_1\leq M$.
From Lemma \ref{lem:mixing_to_independence} with  $\m{F}=\{(y,x)\to\rho_\tau(y-x'\theta):\theta\in\R^n, \|\theta-\theta_0(\tau)\|_1\leq M,\tau\in\m{T}\}$ we have, for $1\leq a\leq T_0$,
\begin{equation}\label{eq:union_bound}
    \P(\m{E}>x)\leq \P\left(\m{E}_{\mathsf{odd}}> x/2\right) + \P\left(\m{E}_{
\mathsf{even}}> x/2\right) + \frac{T_0}{a}\beta_a.
\end{equation}
Since both $\m{E}_{\mathsf{odd}}$ and $\m{E}_{\mathsf{even}}$ are sums of independent variables, by symmetrization of convex functions (Lemma 2.3.1 in \citet*{VW96}), for every $\zeta>0$,
\[
\E\exp(\zeta \m{E}_{\mathsf{odd}})\leq \E\exp(2\zeta \m{E}_{\mathsf{odd}}^0),
\]
where, for Radamacher random variables $\varepsilon_1,\dots \varepsilon_k$,
\[
\m{E}_{\mathsf{odd}}^0:=\sup\left|\sum_{k=1}^{m_1}\varepsilon_k \sum_{t\in H_{2k-1}}(\rho_\tau^t(\theta_0(\tau)) -\rho_\tau^t(\theta ))\right|. 
\]
It is easy to verify that $|\rho_\tau(u)-\rho_\tau(v)|\leq (\tau+1)|u-v|\leq 2|u-v|$. Therefore for $1\leq k \leq m_1$
\[
\left|\sum_{t\in H_{2k-1}}\rho_\tau^t(\theta_0(\tau)) -\rho_\tau^t(\theta ) \right|\leq 2\sum_{t\in H_{2k-1}}|X_t'(\theta-\theta_0(\tau))|.
\]
Define $\omega_k :=\left[\sum_{t\in H_k}\rho_\tau^t(\theta_0(\tau)) -\rho_\tau^t(\theta )\right]/\left[2\sum_{t\in H_k}|X_t'(\theta-\theta_0(\tau))|\right] $ with $0/0$ taken as $0$, then
\begin{align*}
    \E_\varepsilon\exp(2\zeta \m{E}_{\mathsf{odd}}^0) &= \E_\varepsilon\left\{\exp\left(4\zeta \sup\left|\sum_{k=1}^{m_1}\varepsilon_k\omega_k\sum_{t\in H_k}|X_t'(\theta-\theta_0(\tau))|\right|\right)\right\}\\
    &\leq \E_\varepsilon\left\{\exp\left(4\zeta \sup\sup_{\alpha_k\in[-1,1]}\left|\sum_{k=1}^{m_1}\varepsilon_k\alpha_k\sum_{t\in H_k}|X_t'(\theta-\theta_0(\tau))|\right|\right)\right\}\\
    &= \E_\varepsilon\left\{\exp\left(4\zeta \sup\left|\sum_{k=1}^{m_1}\varepsilon_k\sum_{t\in H_k}|X_t'(\theta-\theta_0(\tau))|\right|\right)\right\},
\end{align*}
where the inequality follows because $|\omega_k|\leq 1$. 

Let $\delta(\tau):=\theta-\theta_0(\tau)$ and the denote by $X_{j}^{(k)}$ the $j$-th observation of the $k$-th block, for $1\leq j\leq a$ and $1\leq k\leq m$. Apply triangle inequality to the right-hand side of the last expression to obtain
\begin{align*}
    \E_\varepsilon\exp(2\zeta \m{E}_{\mathsf{odd}}^0) &\leq  \E_\varepsilon\left\{\exp\left(4\zeta\sum_{j=1}^a\sup_{\|\delta(\tau)\|_1\leq M} \left|\sum_{k=1}^{m_1}\varepsilon_k|{X_j^{(k)}}'\delta(\tau)|\right|\right)\right\}.
\end{align*}
Notice that $\varepsilon_k|X_t'\delta(\tau)|$ has the same distribution as $\varepsilon_k X_t'\delta(\tau)$. Also, by H\"older's inequality 
\[
\left|\sum_{k=1}^{m_1}\varepsilon_k {X_j^{(k)}}'\delta(\tau)\right|\leq\left\|\delta(\tau)\right\|_1\left\|\sum_{k=1}^{m_1}\varepsilon_k X_j^{(k)}\right\|_\infty.
\]
Then
\begin{align*}
    \E_\varepsilon\exp(2\zeta \m{E}_{\mathsf{odd}}^0) &\leq  \E_\varepsilon\left\{\exp\left(4\zeta M\sum_{j=1}^a \left\|\sum_{k=1}^{m_1}\varepsilon_k X_j^{(k)}\right\|_\infty \right)\right\}\\
    &\leq n\max_{1\leq \ell\leq n}  \E_\varepsilon\left\{\exp\left(4\zeta M\sum_{j=1}^a \left|\sum_{k=1}^{m_1}\varepsilon_k X_{j,\ell}^{(k)}\right|\right)\right\}\\
      &\leq 2n\max_{1\leq \ell\leq n}  \E_\varepsilon\left\{\exp\left(4 \zeta M \sum_{k=1}^{m_1}\varepsilon_k \sum_{t\in H_{2k-1}} X_{t,\ell}\right)\right\}.
\end{align*}
Using the fact that $\exp(x) + \exp(-x)\leq 2\exp (x^2/2)$, we can bound the the last expectation by
\[\exp\left(\frac{(4\zeta M)^2}{2}\sum_{k=1}^{m_1} Y_{k,\ell}^2\right)= \exp\left(\frac{(4\zeta M\|Y_\ell\|_2)^2}{2}\right),\]
where $Y_{k,\ell}:=\sum_{t\in H_{2k-1}} X_{t,\ell}$ for $1\leq k\leq m_1$. Therefore
\[
\E_\varepsilon\exp(2\zeta \m{E}_{\mathsf{odd}}^0)\leq  2n\max_{1\leq \ell\leq n} 
\exp\left(\frac{(4\zeta M\|Y_\ell\|_2)^2}{2}\right)\leq 2n 
\exp\left(\frac{m(4\zeta a M L)^2}{2}\right).\]

By Markov's inequality, for any $x>0$
\begin{align*}
    \P(\m{E}_{\mathsf{odd}}>x)\leq \frac{\E\exp(\zeta \m{E}_{\mathsf{odd}})}{\exp(\zeta x)}\leq \frac{\E\E_\varepsilon\exp(2\zeta \m{E}^0_{\mathsf{odd}})}{\exp(\zeta x)}\leq 2n
\exp\left(\frac{m(4\zeta a ML)^2}{2} - \zeta x\right).
\end{align*}
Set $\zeta =\frac{x}{16m(aML)^2} $ to conclude
\[\P(\m{E}_{\mathsf{odd}}>x)\leq 2an\exp\left[-\frac{1}{2m}\left(\frac{xT_0}{4aML}\right)^2\right] =2an\exp\left[-\frac{1}{2}\left(\frac{x\sqrt{T_0/a}}{4ML}\right)^2\right]. \]
The same upper bound can be derived for $\m{E}_\mathsf{even}$. Therefore for $a\in\{1,\dots, \lfloor n/2 \rfloor \}$, $M,x>0$ it follows from \eqref{eq:union_bound}
\[
\P(\m{E}>x)\leq 4 an\exp\left[-\frac{1}{2}\left(\frac{x\sqrt{T_0/a}}{8ML}\right)^2\right] + \frac{T_0}{a}\beta_a.
\]
Take the infimum to conclude.
\end{proof}

\begin{lemma}[\textbf{From $\beta$-mixing to independence}]\label{lem:mixing_to_independence} For $T\in\N$, let $\{X_t:1\leq t\leq T\}$ be a sequence of random variables taking value on a Polish space $\m{X}$ with $\beta$-mixing coefficient denoted by $\{\beta_n\}_{n\in\N}$. Also, let $\m{F}$ be a class of measurable real-valued functions on $\m{X}$. Then, for $x>0$,
\begin{align*}
     \P\left(\sup_{f\in\m{F}}|(\P_T -\P) f|\geq x\right)&\leq \inf_{1\leq a\leq T}\left\{\P\left(\sup_{f\in\m{F}}|\Q_{\textsf{odd}} f|> x/2\right)\right.\\
      &\left.\qquad+\P\left(\sup_{f\in\m{F}}|\Q_{\textsf{even}} f|> x/2\right) + \frac{T}{a}\beta_a\right\},
\end{align*}
with 
\begin{align*}
   \Q_{\textsf{odd}} f &:= \tfrac{1}{T}\sum_{k=1}^{m_1} \sum_{t\in H_{2k-1}} f(\widetilde{X}_t)- \E f(\widetilde{X}_t)\\
     \Q_{\textsf{even}} f &:= \tfrac{1}{T}\sum_{k=1}^{m_2} \sum_{t\in H_{2k}} f(\widetilde{X}_t)- \E f(\widetilde{X}_t),
\end{align*}
where $m := \lceil T/a\rceil$ and $m_1 + m_2 = m$, $H_k := \{t\in \N :(k-1)a <t \leq ka\land T\}$ for $1\leq k\leq m$ and the sequence $\{\widetilde{X}_1,\dots, \widetilde{X}_{T}\}$ is  such that the distribution of $\{\widetilde{X}_t:t\in H_k\}$ is equal to the distribution of $\{X_t:t\in H_k\}$ for every $1\leq k\leq m+1$, and $\{\widetilde{X}_t:t\in H_k\}$ is independent of $\{\widetilde{X}_t:t\in H_j\}$ for $k\neq j$. We define summation over an empty set as zero.
\end{lemma}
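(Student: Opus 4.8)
The plan is to couple the $\beta$-mixing sequence $\{X_t\}$ with a sequence $\{\widetilde X_t\}$ that agrees with it block-by-block with high probability but whose same-parity blocks are \emph{mutually independent}, and then to push the empirical-process supremum from $\{X_t\}$ onto $\{\widetilde X_t\}$ on the (large) event where the coupling succeeds, splitting the sum over time into its odd- and even-block contributions by a triangle inequality. Fix a block length $a\in\{1,\dots,T\}$ and the blocks $H_1,\dots,H_m$ with $m=\lceil T/a\rceil$ as in the statement.

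The construction of $\{\widetilde X_t\}$ is a recursive application of Berbee's coupling lemma (Proposition 2 in \citet*{DMR_1995}), available because $\m{X}$ is Polish; after possibly enlarging the probability space one disposes of an i.i.d.\ sequence of uniforms independent of $\{X_t\}$ to carry out each coupling. Process the odd-indexed blocks in increasing order: set $\widetilde X_t:=X_t$ on $H_1$, and, having defined $\{\widetilde X_t:t\in H_1\cup H_3\cup\dots\cup H_{2k-1}\}$, use Berbee's lemma to produce $\{\widetilde X_t:t\in H_{2k+1}\}$ with the same joint law as $\{X_t:t\in H_{2k+1}\}$, independent of all previously defined odd blocks, and coinciding with $\{X_t:t\in H_{2k+1}\}$ outside an event whose probability is the $\beta$-coefficient between $\sigma(X_s:s\le(2k-1)a)$ and $\sigma(X_t:t\in H_{2k+1})$. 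The point is that the interposed even block $H_{2k}$ separates these two $\sigma$-fields by a full block, so by monotonicity of $\{\beta_m\}$ this probability is at most $\beta_a$. Doing the same for the even-indexed blocks and summing over the at most $m$ coupling steps, the event $\Omega_0:=\{\widetilde X_t=X_t\text{ for all }t\}$ has $\P(\Omega_0^c)\le (T/a)\beta_a$, and by construction $\{\widetilde X_t:t\in H_k\}$ is independent of $\{\widetilde X_t:t\in H_j\}$ whenever $k,j$ have the same parity, which is the only independence used downstream.

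It remains to transfer the bound. On $\Omega_0$ the empirical measures of $\{X_t\}$ and $\{\widetilde X_t\}$ coincide, and since each tilde block is equidistributed with its original, $\E\sum_t f(\widetilde X_t)=\E\sum_t f(X_t)$ for every $f$; hence on $\Omega_0$ one has $\sup_{f\in\m{F}}|(\P_T-\P)f|=\sup_{f\in\m{F}}|\Q_{\textsf{odd}}f+\Q_{\textsf{even}}f|$ with $\Q_{\textsf{odd}},\Q_{\textsf{even}}$ as in the statement. By the triangle inequality and a union bound, $\P(\sup_f|(\P_T-\P)f|>x)\le\P(\Omega_0^c)+\P(\sup_f|\Q_{\textsf{odd}}f|>x/2)+\P(\sup_f|\Q_{\textsf{even}}f|>x/2)$, and taking the infimum over $a\in\{1,\dots,T\}$ (the construction being valid for each $a$) yields the claim. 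The delicate part is the second paragraph: making the recursion and its measurability precise, and verifying that \emph{every} coupling step is governed by $\beta_a$ rather than by the lag-one coefficient $\beta_1$ — which is exactly why the blocks are separated by parity, with the opposite-parity blocks acting as spacers — while checking that all within-block joint laws are preserved throughout. Everything after the construction is routine bookkeeping.
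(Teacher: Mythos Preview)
Your proposal is correct and follows essentially the same route as the paper: block the sequence, apply Berbee's coupling recursively within each parity class (using the opposite-parity blocks as spacers to secure the $\beta_a$ bound), then split into odd and even contributions by the triangle inequality and union bound. The only cosmetic difference is that the paper first splits $(\P_T-\P)f$ into odd and even block sums and then couples, bounding the coupling error via an indicator-substitution argument $|g(Y_{f,1},\dots)-g(\widetilde Y_{f,1},\dots)|\leq \sum_k \1\{Y_{f,k}\neq\widetilde Y_{f,k}\}$, whereas you work on the single event $\Omega_0$ where the entire coupling succeeds; both yield the same $m\beta_a\leq (T/a)\beta_a$ error term. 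Your remark that only same-parity independence is actually achieved (and is all that is needed downstream) is an accurate reading of what the argument delivers.
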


\begin{proof} For $f\in\m{F}$ and $1\leq k\leq m$, define the random variables
\begin{align*}
   Y_{f,k} := \sum_{t\in H_{2k-1}} f(X_t) -\E f(X_t);\qquad Z_{f,k} := \sum_{t\in H_{2k}} f(X_t) -\E f(X_t).
\end{align*}
Then, by construction
\begin{align*}
    (\P_T - \P) f :=\frac{1}{T}\sum_{t=1}^T f(X_t) -\E f(X_t) &= \frac{1}{T}\sum_{k=1}^{m_1} Y_{f,k} + \frac{1}{T}\sum_{k=1}^{m_2} Z_{f,k}.
\end{align*}
Therefore, for $x>0$, we have by the union bound
\begin{align*}
    \P\left(\sup_{f\in\m{F}}|(\P_T - \P) f|\geq x\right) &\leq \P\left(\sup_{f\in\m{F}}\left|\frac{1}{T}\sum_{k=1}^{m_1}Y_{f,k}\right|\geq x/2\right) + \P\left(\sup_{f\in\m{F}}\left|\frac{1}{T}\sum_{k=1}^{m_2}Z_{f,k}\right|\geq x/2\right).
\end{align*}

Now construct another sequence $\{\widetilde{X}_1,\dots, \widetilde{X}_{T}\}$ as described in the Lemma. Berbee's Coupling Lemma ensures the construction (refer to Proposition 2 in \citet*{DMR_1995}). Finally, define $\widetilde{Y}_{f,k} := \sum_{t\in H_{2k-1}} f(\widetilde{X}_t) - \E f(\widetilde{X}_t)$ and $\widetilde{Z}_{f,k} := \sum_{t\in H_{2k}} f(\widetilde{X}_t) - \E f(\widetilde{X}_t)$ for $f\in\m{F}$ and $1\leq k\leq m$. 

Note that by construction  $\{\widetilde{Y}_{f,k}:1\leq k\leq m_1\}$ is an independent sequence such that $\widetilde{Y}_{f,k}$ and $Y_{f,k}$ share the same distribution for all $1\leq k\leq m$ and $f\in\m{F}$. Also, since the elements of sequence $\{Y_{f,k}:1\leq k\leq m\}$ are separated by $a$ (with respect to the original sequence $\{X_t:1\leq t\leq T$), we have that the $\beta$-mixing coefficient of $\{Y_{f,k}:1\leq k\leq m\}$ is upper bound by $\beta_a$ so Berbee's Coupling Lemma also ensures that $\P(Y_{f,k}\neq \widetilde{Y}_{f,k}) \leq \beta_a$ for all $1\leq k\leq m_1$ and $f\in\m{F}$. 

We now bound the difference of functions on these two sequences. For a measurable $g:\R^{m+1}\to \R$ we have by substitution
\[
|g(Y_{f,1},\dots,Y_{f,m+1}) - g(\widetilde{Y}_{f,1},\dots,\widetilde{Y}_{f,m_1})|\leq \|g\|_\infty\sum_{k=1}^{m_1}\1\{Y_{f,k}\neq \widetilde{Y}_{f,k}\}.
\]
Use this inequality with $g$ being the indicator function of the event $\{\sup_{f\in\m{F}}\sum_{k=1}^m Y_{f,k}>x/2\}$ and the fact that
\[
|\E g(Y_{f,1},\dots,Y_{f,m+1}) - \E g(\widetilde{Y}_{f,1},\dots,\widetilde{Y}_{f,m_1})|\leq \E| g(Y_{f,1},\dots,Y_{f,m+1}) - g(\widetilde{Y}_{f,1},\dots,\widetilde{Y}_{f,m_1})|.
\]
to conclude
\begin{align*}
    \left|\P\left(\sup_{f\in\m{F}}\left|\frac{1}{T}\sum_{k=1}^{m_1}Y_{f,k}\right|\geq x/2\right) - \P\left(\sup_{f\in\m{F}}\left|\frac{1}{T}\sum_{k=1}^{m_1}\widetilde{Y}_{f,k}\right|\geq x/2\right)\right|&\leq m_1\beta_a.
\end{align*}
Following the same steps above  for the sequences $\{Z_{f,k}:1\leq k\leq {m+1}\}$ and $\{\widetilde{Z}_{f,k}:1\leq k\leq {m+1}\}$ we have
\begin{align*}
     \left|\P\left(\sup_{f\in\m{F}}\left|\frac{1}{T}\sum_{k=1}^{m_2}Z_{f,k}\right|\geq x/2\right) - \P\left(\sup_{f\in\m{F}}\left|\frac{1}{T}\sum_{k=1}^{m_2}\widetilde{Z}_{f,k}\right|\geq x/2\right)\right|&\leq m_2\beta_a.
\end{align*}
The result then follows because $m_1 + m_2 = m$.
\end{proof}

\subsection{Proof of Theorem \ref{T:test}}

Fix a compact $\m{T}\subset (0,1)$ and $p\in[0,1]$ and let $t\in\R$ be a continuity point of $x\mapsto G_p(x):= \P(\phi_{\m{T},p}^0\leq x)$. Applying Lemma \ref{L:useful},  we have for any $\delta>0$,
\begin{equation}\label{eq:split_theorem2}
|\P(\phi_{\m{T},p}\leq t) - \P(\phi_{\m{T},p}^0\leq t)| \leq |\P(|\phi_{\m{T},p}^0-t|\leq \delta)|  + \P(|\phi_{\m{T},p}-\phi_{\m{T},p}^0| >\delta).
\end{equation}
When $\m{T}$ is countable,  $G_p(x)$  is constant around some small enough neighborhood of $t$; otherwise, it has bounded density in its continuous portion. In both cases,  the first term in \eqref{eq:split_theorem2} is bounded by $C\delta$ for some constant $C$ only depending on $p$.  Thus, we are left to bound the second term.

Apply triangle inequality twice followed by Holders inequality to write
\begin{align*}
   |\phi_{\m{T},p}-\phi_{\m{T},p}^0|&\leq \frac{1}{T_1}\sum_{t>T_0}^T\left|\|\1\{Y_t\leq \widehat{Q}(\tau|X_t)\}\|_{p,\m{T}} - \|\1\{Y_t\leq Q(\tau|X_t)\}\|_{p,\m{T}}\right|\\
   &\leq \frac{1}{T_1}\sum_{t>T_0}^T\|\1\{Y_t\leq \widehat{Q}(\tau|X_t)\} - \1\{Y_t\leq Q(\tau|X_t)\}\|_{p,\m{T}}\\
   &\leq \frac{1}{T_1}\sum_{t>T_0}^T\|\1\{Y_t\leq \widehat{Q}(\tau|X_t)\} - \1\{Y_t\leq Q(\tau|X_t)\}\|_{\infty,\m{T}}.
\end{align*}

Recall that $\1\{Y_t\leq \h{Q}(\tau|X_t)\} = \1\{Y_t -\Delta_t(\tau)\leq Q(\tau|X_t)\} = \1\{F(Y_t -\Delta_t(\tau))|X_t)\leq\tau\}=  \1\{U_t -Z_t(\tau)\leq\tau\}$ where $\Delta_t(\tau):=\widehat{Q}(\tau|x)-Q(\tau|x)$ and  $Z_t(\tau):=-f(\t{Z}(\tau))\Delta_t(\tau)$ for some $\t{Z}_t(\tau)$ by the mean value theorem using Assumption \ref{A:conditional_density}(a).  Since $f$ is uniformly bounded by the \ref{A:conditional_density}(a) we have
\begin{align*}
    \|\1\{Y_t\leq \widehat{Q}(\tau|X_t)\} - \1\{Y_t\leq Q(\tau|X_t)\}\|_{\infty,\m{T}} &= \|\1\{U_t - Z_t(\tau)\leq \tau\} - \1\{U_t \leq \tau\}\|_{\infty,\m{T}}\\
    &\leq \overline{f}\sup_{\tau\in\m{T}}|\Delta_t(\tau)|.
\end{align*}
Set $\delta = bB \lambda s_0 \psi^{-1}(n)$ for $\vartheta\geq 1$ and $b>0$; and by the second result in Theorem \ref{thm:main}, we obtain
\begin{align*}
    \P( |\phi_{\m{T},p}-\phi_{\m{T},p}^0|>\delta)\leq \sum_{t>T_0}^T\P(\sup_{\tau\in\m{T}}|\Delta_t(\tau)|>\delta/\overline{f})\ls T_1\left( n^{1-\vartheta^2} + \frac{1}{\vartheta}+ \frac{1}{b}\right).
\end{align*}
The result follows from \eqref{eq:split_theorem2} and the above expression.

\begin{lemma}\label{L:useful} Let $X,Y$ be random variables then for $\eta\in \R$ and $\delta\geq 0$:
\[|\P(X+Y\leq \eta) -\P(X\leq\eta)|\leq \P(|X-\eta|\leq\delta) + \P(|Y|\geq \delta) \]
\end{lemma}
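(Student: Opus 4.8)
The plan is to write the left-hand side as the probability of a symmetric difference. Set $A:=\{X+Y\le\eta\}$ and $B:=\{X\le\eta\}$. Since for any two events one has $|\P(A)-\P(B)|\le\P(A\triangle B)=\P\big((A\setminus B)\cup(B\setminus A)\big)$, it suffices to show that each of the two one-sided differences is contained in the event $\{|X-\eta|\le\delta\}\cup\{|Y|\ge\delta\}$, after which the union bound finishes the argument.

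First I would analyze $A\setminus B=\{X+Y\le\eta\}\cap\{X>\eta\}$. On this event $\eta<X\le\eta-Y$, which forces $Y<0$ (so $|Y|=-Y$) and $0<X-\eta\le|Y|$. Hence on the sub-event where in addition $|Y|<\delta$ we get $0<X-\eta<\delta$, i.e. $|X-\eta|\le\delta$; therefore $A\setminus B\subseteq\{|X-\eta|\le\delta\}\cup\{|Y|\ge\delta\}$. The symmetric computation for $B\setminus A=\{X\le\eta\}\cap\{X+Y>\eta\}$ gives $\eta-Y<X\le\eta$, so $Y>0$ and $0\le\eta-X<|Y|$, and intersecting again with $\{|Y|<\delta\}$ yields $|X-\eta|=\eta-X<\delta$, so $B\setminus A\subseteq\{|X-\eta|\le\delta\}\cup\{|Y|\ge\delta\}$ as well.

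Combining the two inclusions gives $A\triangle B\subseteq\{|X-\eta|\le\delta\}\cup\{|Y|\ge\delta\}$, and the union bound delivers
\[
|\P(X+Y\le\eta)-\P(X\le\eta)|\le\P(A\triangle B)\le\P(|X-\eta|\le\delta)+\P(|Y|\ge\delta),
\]
which is exactly the claimed inequality. There is essentially no hard step here; the only points requiring a little care are the bookkeeping of strict versus non-strict inequalities (the inclusions naturally produce $\{|X-\eta|<\delta\}$, which is then relaxed to $\{|X-\eta|\le\delta\}$) and the degenerate case $\delta=0$, where the right-hand side is at least $\P(|Y|\ge0)=1$ and the bound is trivial. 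An equivalent route, if one prefers to avoid symmetric differences, is to split each of $\P(X+Y\le\eta)$ and $\P(X\le\eta)$ over $\{|Y|<\delta\}$ and $\{|Y|\ge\delta\}$ and note that on $\{|Y|<\delta\}$ the events $\{X+Y\le\eta\}$ and $\{X\le\eta\}$ can disagree only when $X$ lies within $\delta$ of $\eta$; this produces the same three terms.
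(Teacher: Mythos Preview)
Your proof is correct. The symmetric-difference route you take is a clean repackaging of the paper's argument: the paper bounds $\P(X+Y\le\eta)-\P(X\le\eta)$ from above and below separately by splitting on $\{|Y|\le\delta\}$ versus $\{|Y|>\delta\}$ (exactly the ``equivalent route'' you describe at the end), whereas you handle both directions at once via $|\P(A)-\P(B)|\le\P(A\triangle B)$ and the inclusion $A\triangle B\subseteq\{|X-\eta|\le\delta\}\cup\{|Y|\ge\delta\}$. The content is the same; your organization is arguably tidier since it avoids separately tracking the upper and lower one-sided errors and then taking a maximum.
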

\begin{proof} For $\delta>0$ we have
\begin{align*}
\P(X+Y\leq \eta) &\leq \P(X+Y\leq\eta \cap |Y|\leq \delta) +  \P( |Y|> \delta)\\
&\leq \P(X\leq\eta  +\delta) +  \P( |Y|> \delta)\\
&= \P(X\leq\eta) + \P(\eta<X\leq\eta  +\delta)  +  \P( |Y|> \delta),
\end{align*}
and 
\begin{align*}
\P(X+Y\leq \eta) &\geq \P(X\leq\eta-\delta \cap |Y|\leq \delta)\\
&= \P(X\leq\eta  -\delta)  + \P(X\leq\eta-\delta \cap |Y|\leq \delta) -\P(X\leq\eta  -\delta)\big]\\
&= \P(X\leq\eta) -  \big[\P(\eta-\delta <X\leq\eta) + \P(X\leq\eta  -\delta)- \P(X\leq\eta-\delta \cap |Y|\leq \delta)\big]\\
&\geq \P(X\leq\eta) -  \big[\P(\eta-\delta <X\leq\eta) + \P(|Y|> \delta)\big].
\end{align*}
Hence
\begin{align*}
-  \big[\P(\eta-\delta <X\leq\eta) + \P(|Y|> \delta)\big]\leq      &\\
\P(X+Y\leq \eta) -\P(X\leq\eta) &\leq \\
\P(\eta<X\leq\eta  +\delta)&  +  \P( |Y|> \delta).
\end{align*}
Using that fact that $-a\leq x \leq b$ implies  $|x|\leq a\lor b$ for $a,b\geq 0$, we conclude
\begin{align*}
|\P(X+Y\leq \eta) -\P(X\leq\eta)|&\leq \big[\P(\eta-\delta <X\leq\eta)\lor \P(\eta<X\leq\eta  +\delta)\big]\\ &\qquad +  \P( |Y|> \delta)\\
&\leq \P(|X-\eta|  \leq \delta)+  \P( |Y|> \delta).
\end{align*}
\end{proof}

\newpage
\section{Additional Figures}\label{App:figures}

\begin{figure}[h!]
\captionsetup[subfigure]{justification=centering, labelformat=empty}
\begin{subfigure}{0.32\textwidth}
    \centering
    \includegraphics[width=\textwidth]{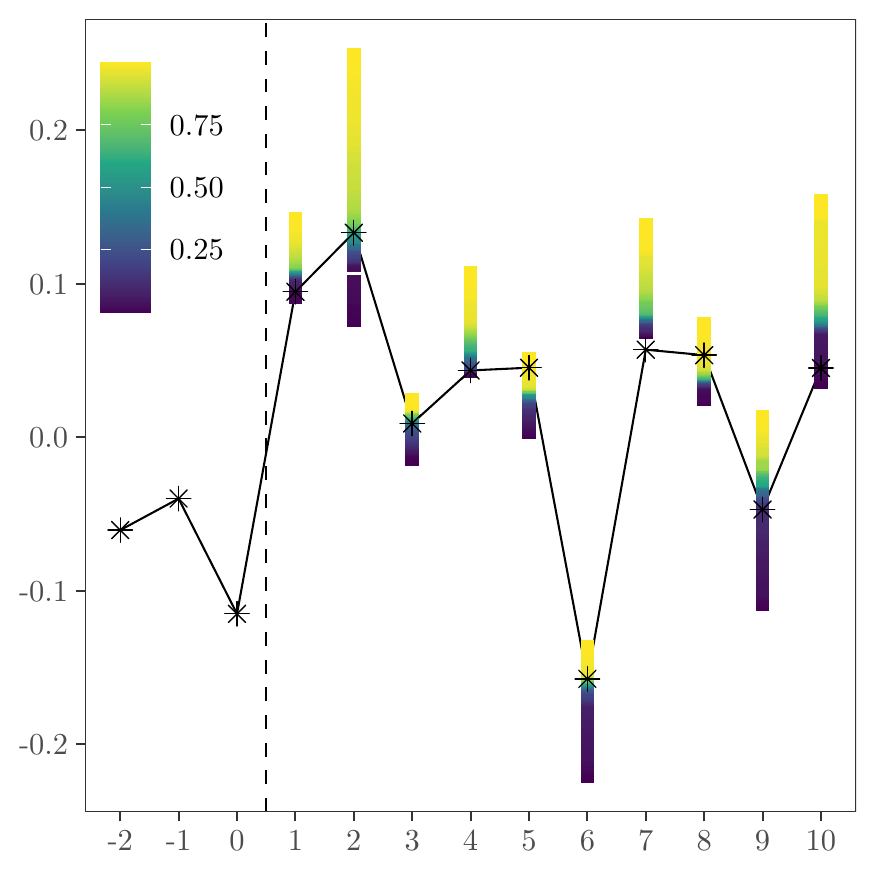}
    \caption{Personal Connection}
\end{subfigure}
\begin{subfigure}{0.32\textwidth}
    \centering
     \includegraphics[width=\textwidth]{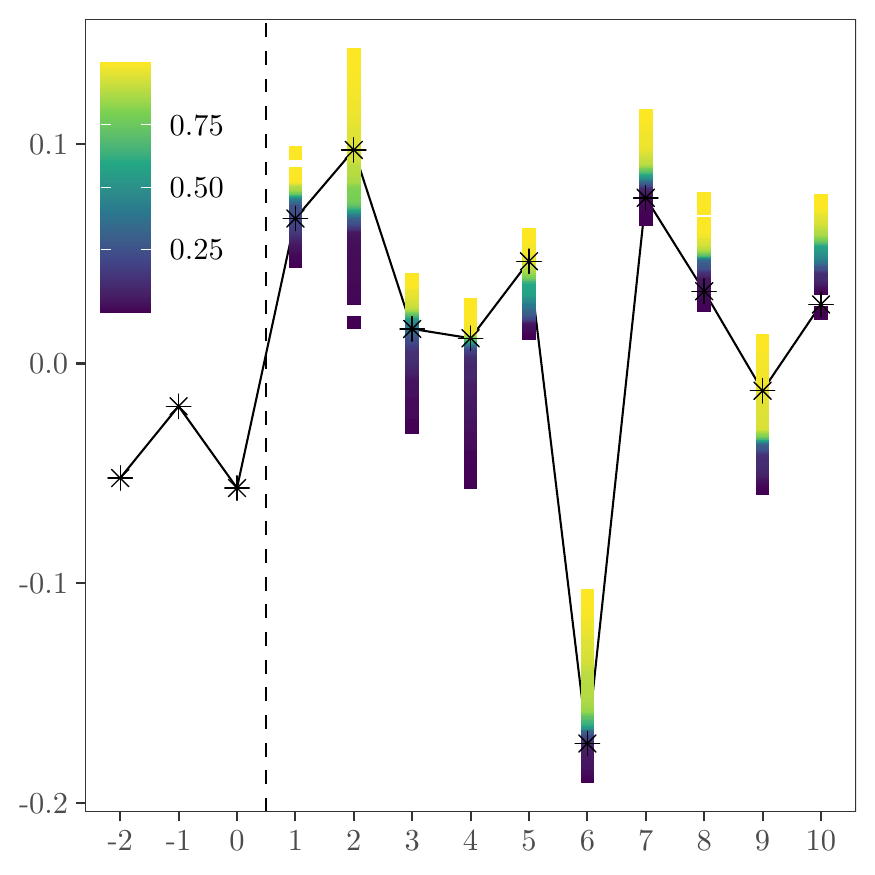}
	  \caption{Schedule Connection}
\end{subfigure}
\begin{subfigure}{0.32\textwidth}
    \centering
     \includegraphics[width=\textwidth]{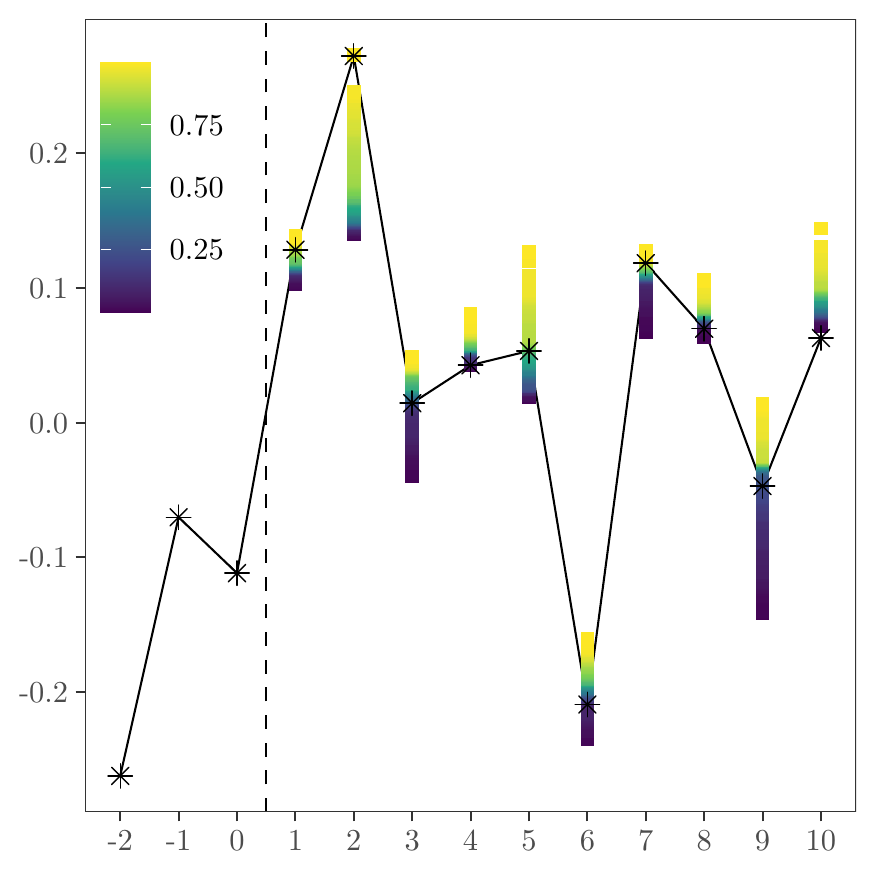}
	  \caption{NY Headquartered}
\end{subfigure}
\begin{subfigure}{0.32\textwidth}
    \centering
    \includegraphics[width=\textwidth]{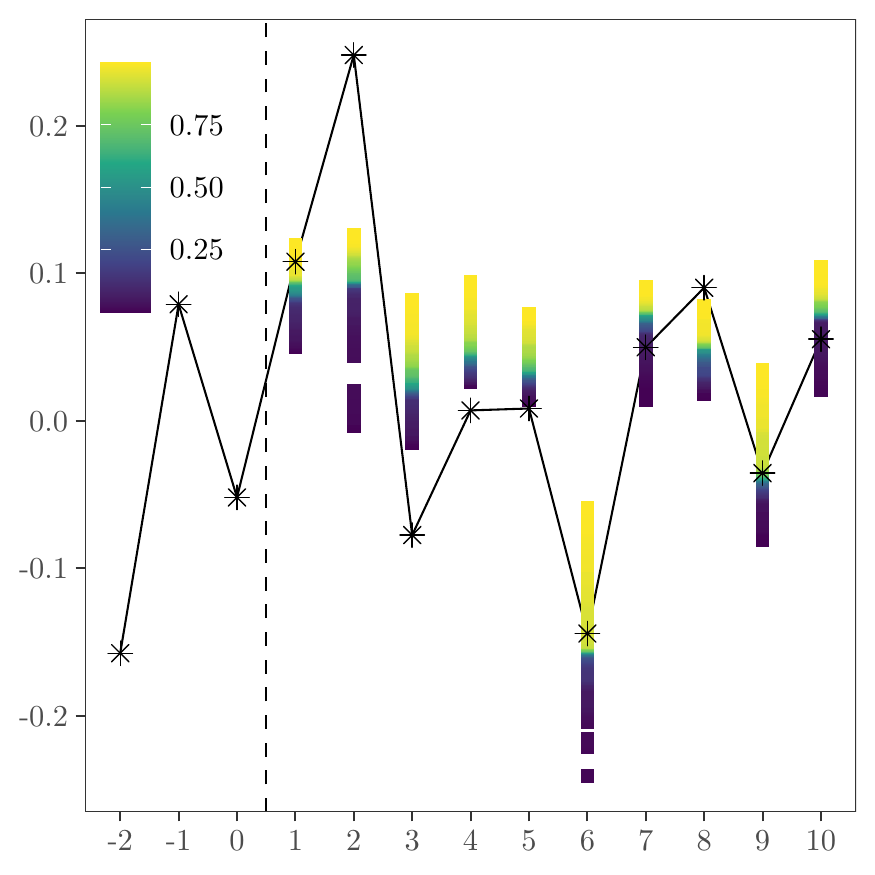}
    \caption{Personal Connection}
\end{subfigure}
\begin{subfigure}{0.32\textwidth}
    \centering
     \includegraphics[width=\textwidth]{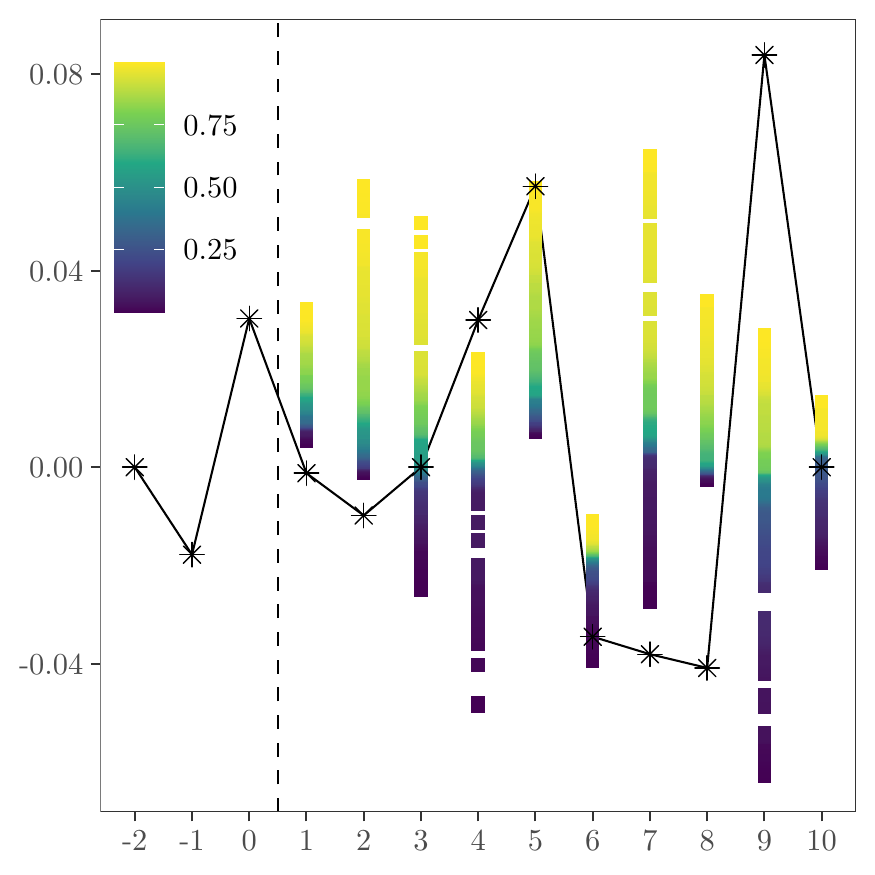}
	  \caption{Schedule Connection}
\end{subfigure}
\begin{subfigure}{0.32\textwidth}
    \centering
     \includegraphics[width=\textwidth]{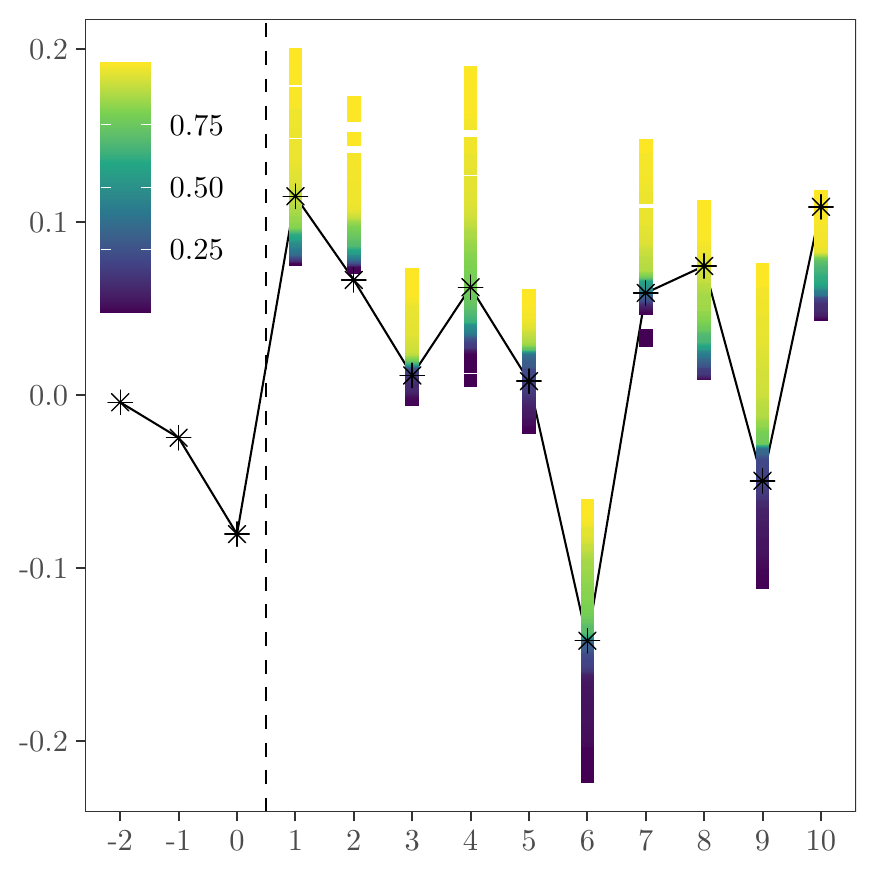}
	  \caption{NY Headquartered}
\end{subfigure}
\begin{subfigure}{0.32\textwidth}
    \centering
    \includegraphics[width=\textwidth]{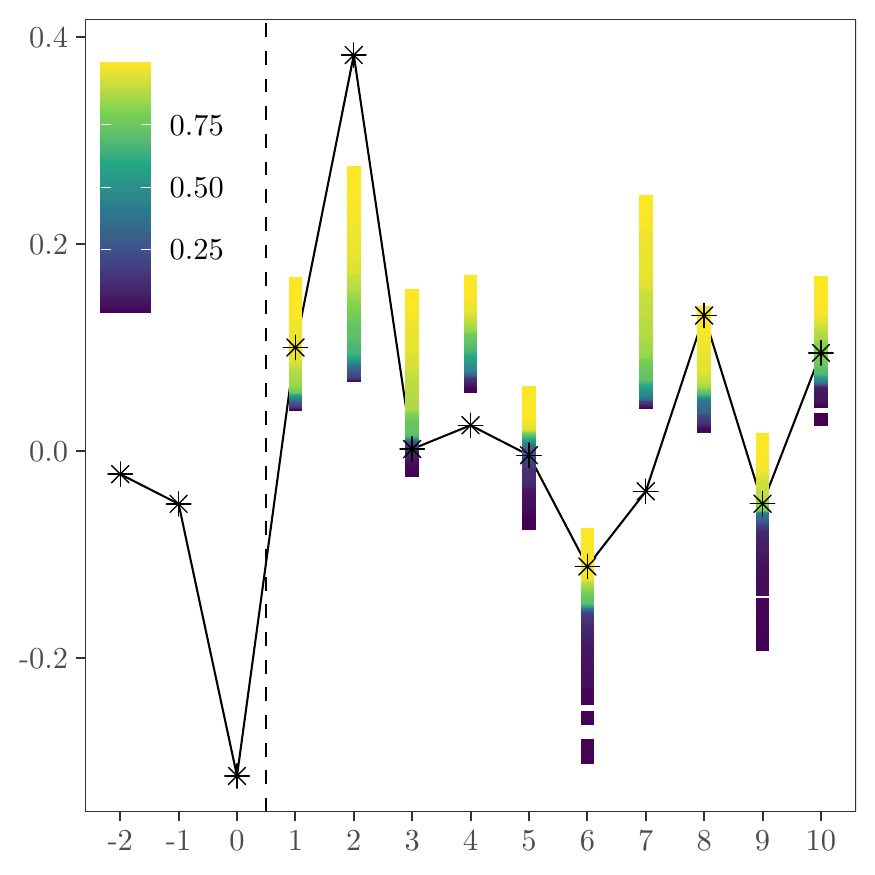}
    \caption{Personal Connection}
\end{subfigure}
\begin{subfigure}{0.32\textwidth}
    \centering
     \includegraphics[width=\textwidth]{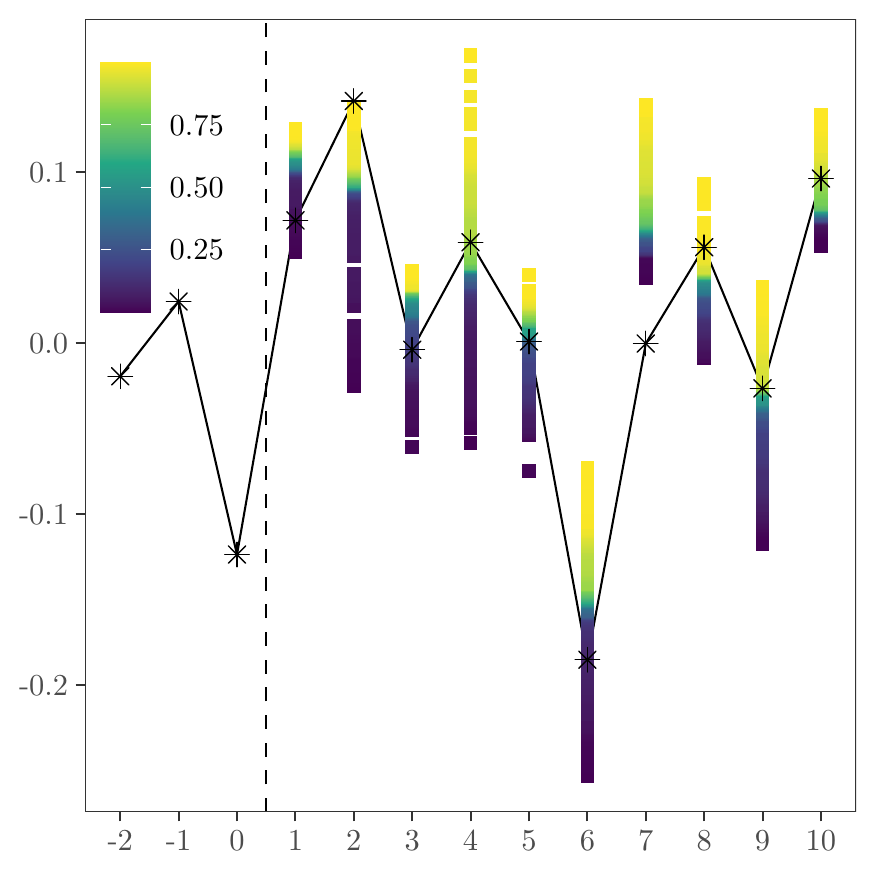}
	  \caption{Schedule Connection}
\end{subfigure}
\begin{subfigure}{0.32\textwidth}
    \centering
     \includegraphics[width=\textwidth]{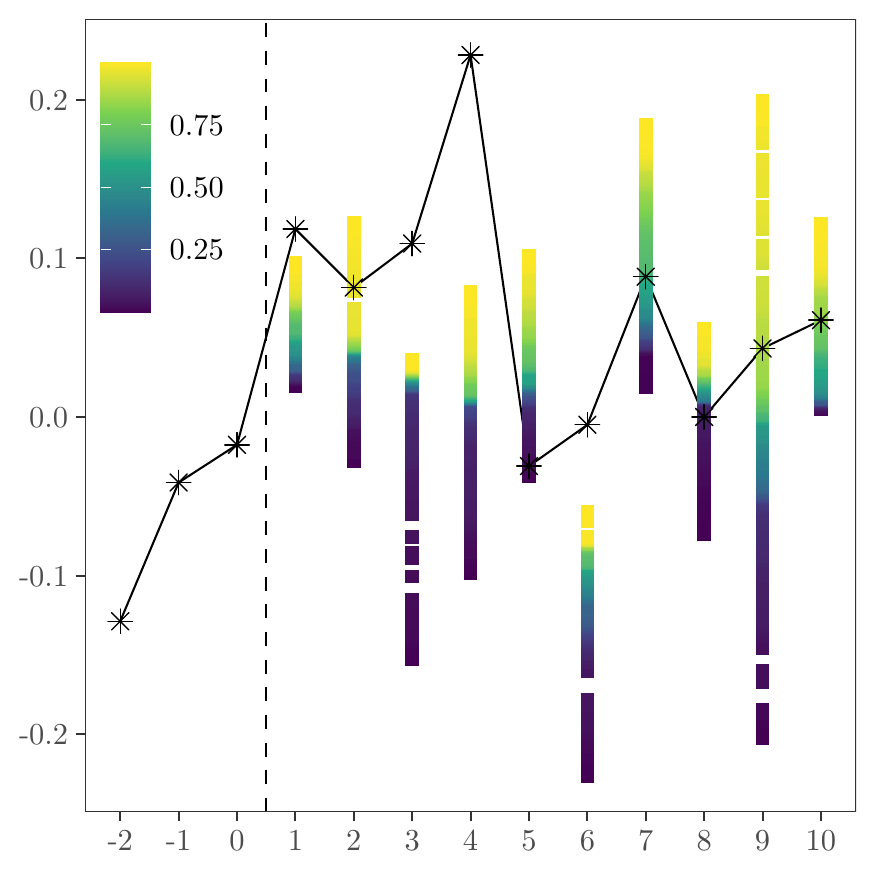}
	  \caption{NY Headquartered}
\end{subfigure}
\caption{Distributional Counterfactual Analysis for each type of connection.  The solid black line is the actual returns.  The color bar represents the conditional quantile given the peer's return. }\label{fig:empirical_quant_appendix}
\end{figure}

\begin{figure}[h!]
\captionsetup[subfigure]{justification=centering, labelformat=empty}
\begin{subfigure}{0.32\textwidth}
    \centering
     \includegraphics[width=\textwidth]{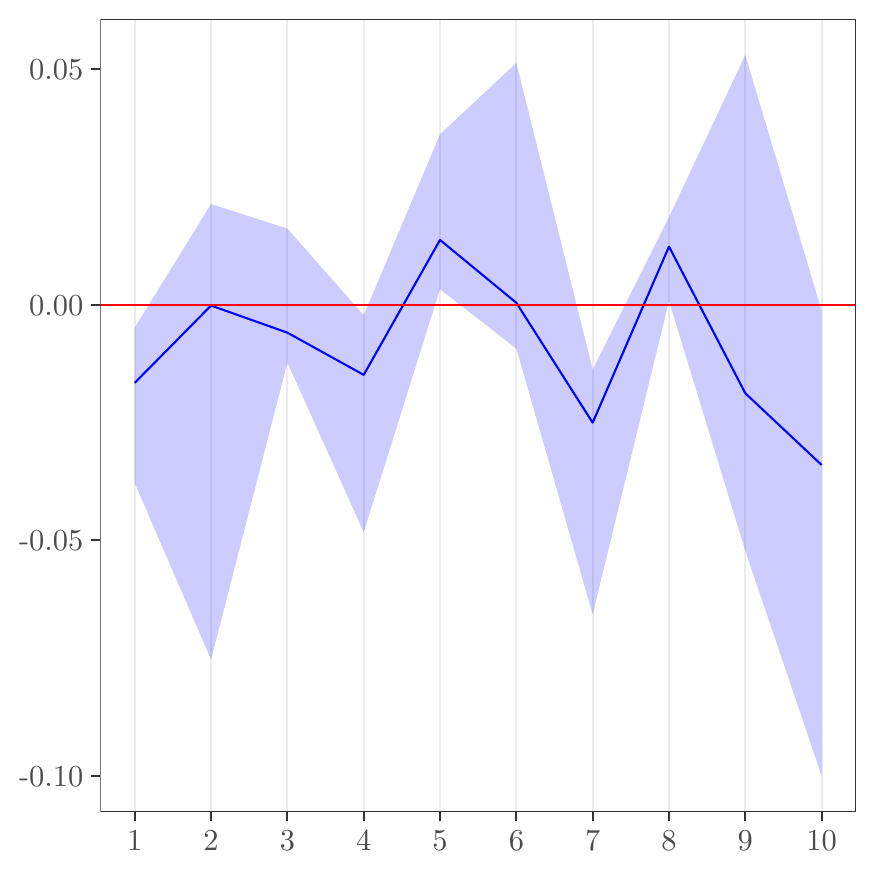}
    \caption{Personal Connection}
\end{subfigure}
\begin{subfigure}{0.32\textwidth}
    \centering
    \includegraphics[width=\textwidth]{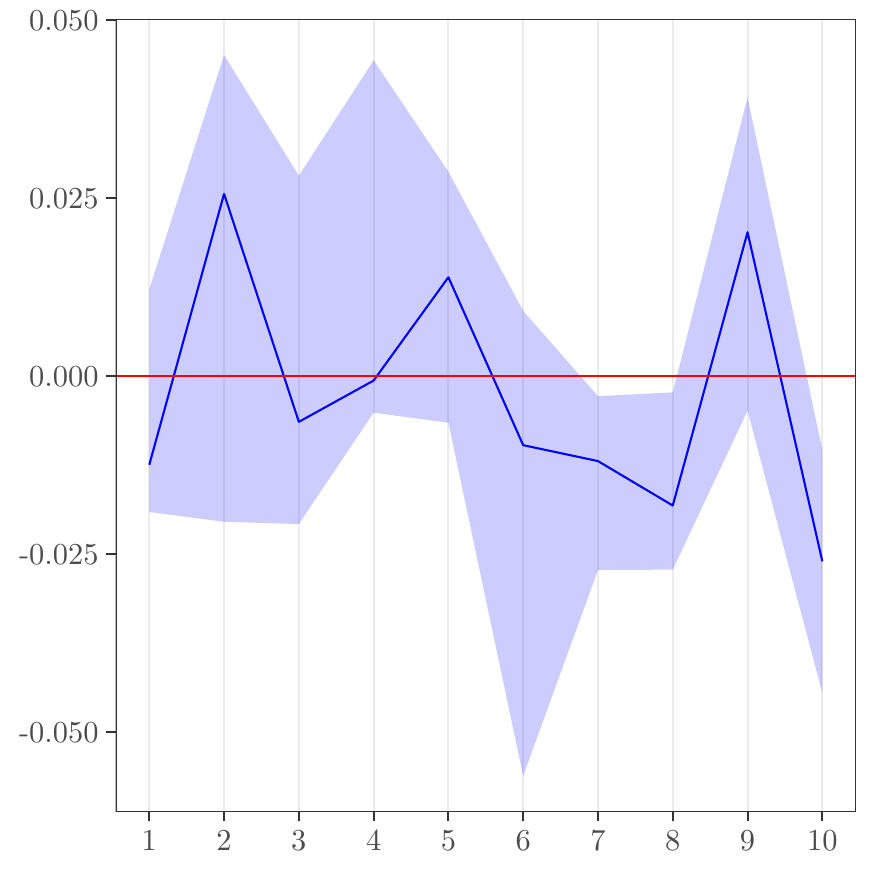}
	\caption{Schedule Connection}
\end{subfigure}
\begin{subfigure}{0.32\textwidth}
    \centering
	 \includegraphics[width=\textwidth]{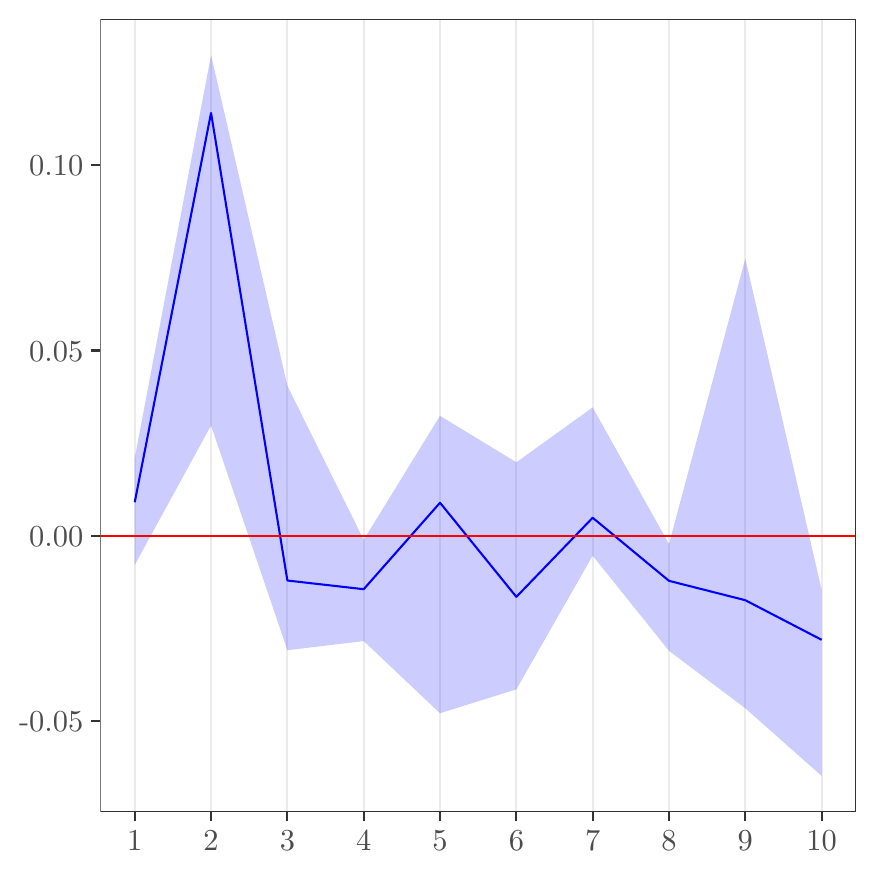}
	  \caption{NY Headquartered}
\end{subfigure}
\begin{subfigure}{0.32\textwidth}
    \centering
     \includegraphics[width=\textwidth]{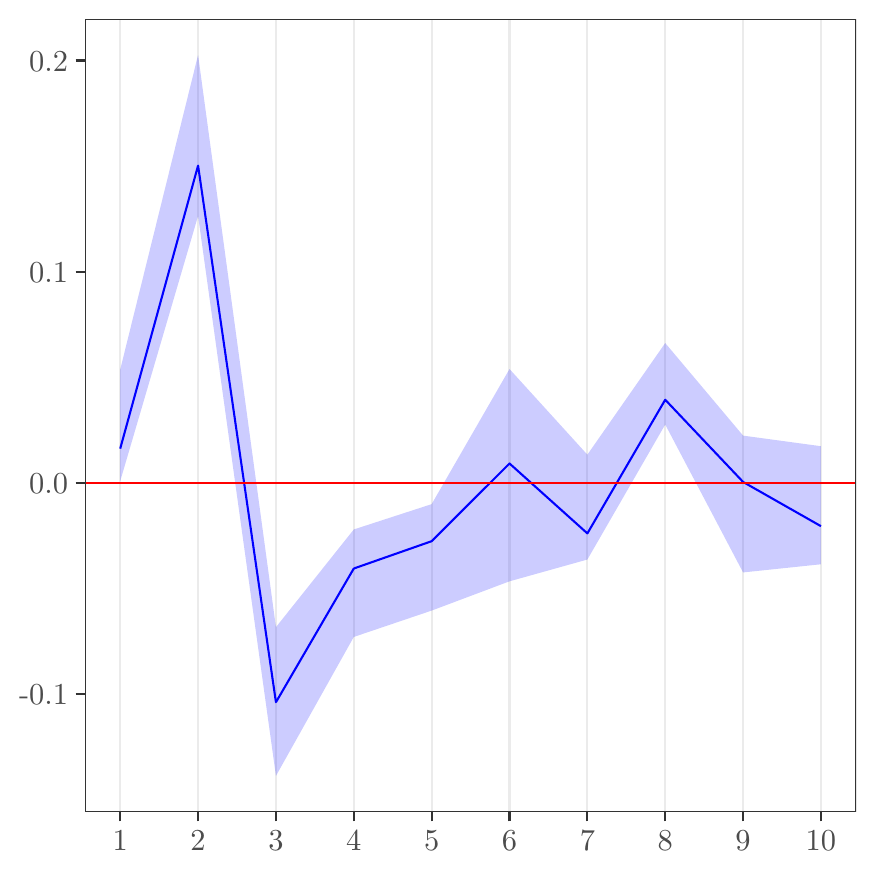}
    \caption{Personal Connection}
\end{subfigure}
\begin{subfigure}{0.32\textwidth}
    \centering
    \includegraphics[width=\textwidth]{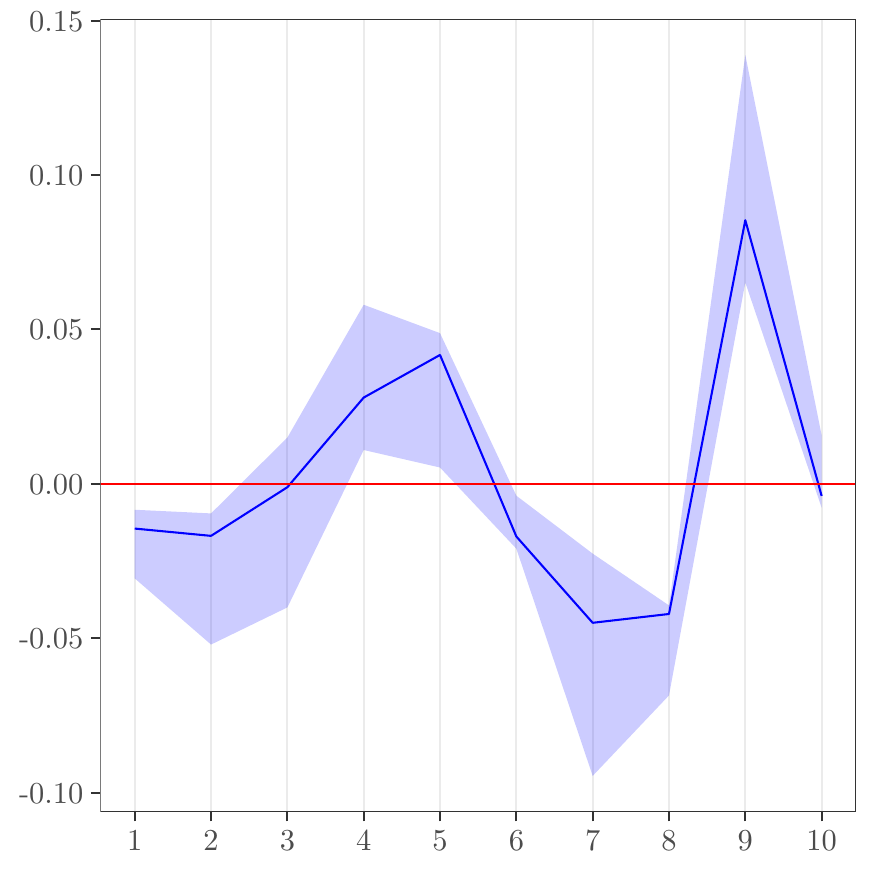}
	\caption{Schedule Connection}
\end{subfigure}
\begin{subfigure}{0.32\textwidth}
    \centering
	 \includegraphics[width=\textwidth]{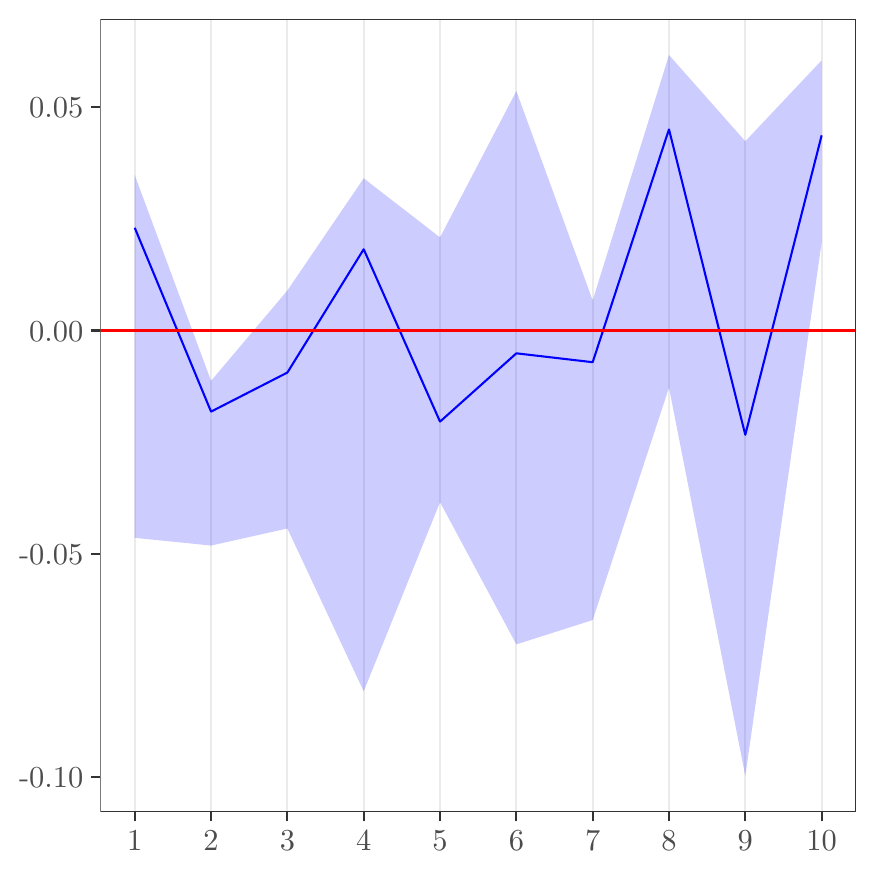}
	  \caption{NY Headquartered}
\end{subfigure}
\begin{subfigure}{0.32\textwidth}
    \centering
     \includegraphics[width=\textwidth]{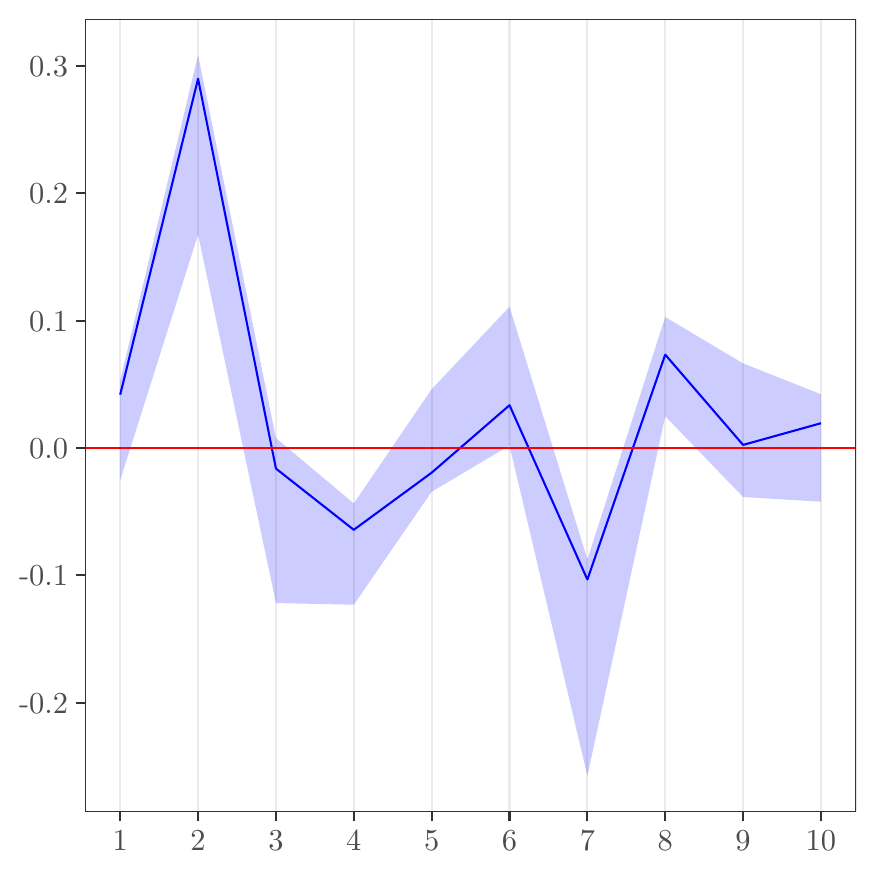}
    \caption{Personal Connection}
\end{subfigure}
\begin{subfigure}{0.32\textwidth}
    \centering
    \includegraphics[width=\textwidth]{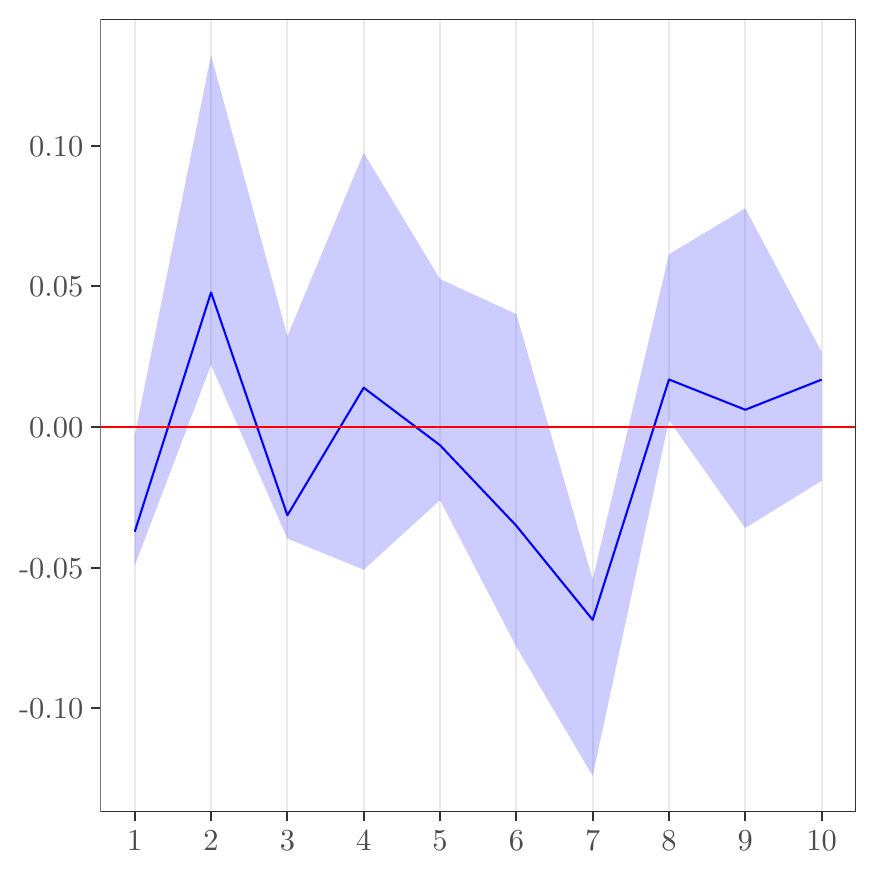}
	\caption{Schedule Connection}
\end{subfigure}
\begin{subfigure}{0.32\textwidth}
    \centering
	 \includegraphics[width=\textwidth]{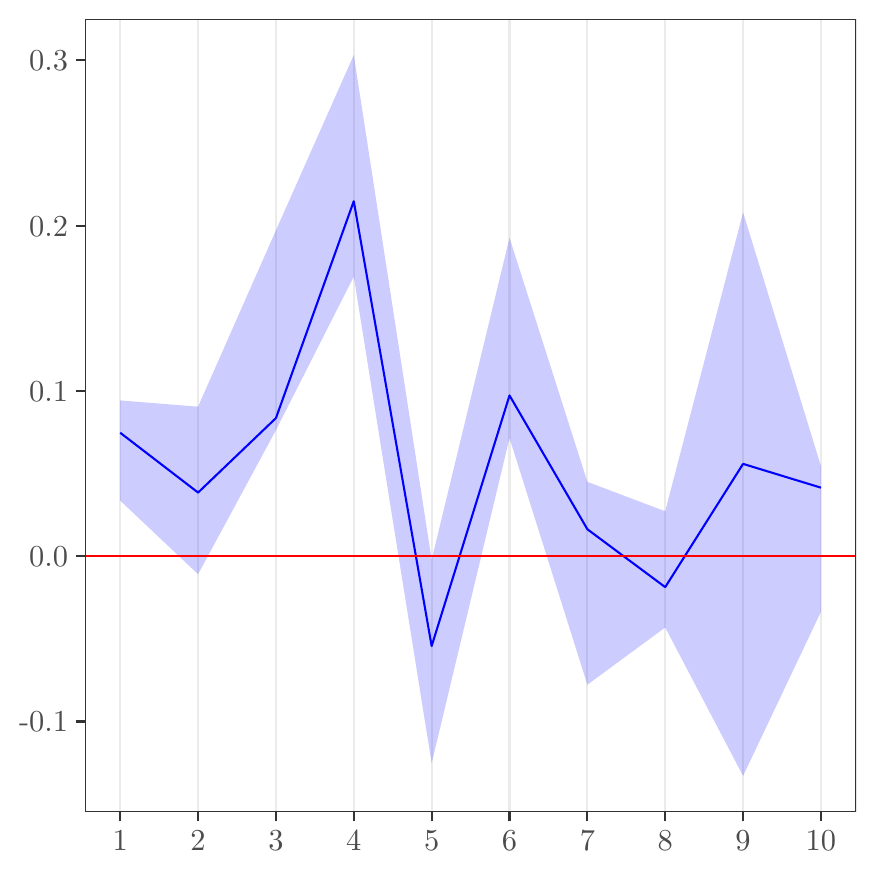}
	  \caption{NY Headquartered}
\end{subfigure}
\caption{Median and 95\% Confidence Interval for announcement effect on returns for each of the 10 post-announcement periods for the type of connection.}\label{fig:empirical_ci_appendix}
\end{figure}

\end{document}